\documentclass{article}
\usepackage{authblk}
\usepackage{amsmath,amssymb,amsfonts}
\usepackage{graphicx, hyperref} 
\usepackage{amsthm}
\usepackage{mathrsfs}
\graphicspath{{./Figs/}}
\usepackage{xcolor}
\usepackage{textcomp}				
\usepackage{comment}
\usepackage[round]{natbib}
\usepackage{multirow}
\usepackage{booktabs}				 

\usepackage[normalem]{ulem}		

\usepackage[title]{appendix}

\newcommand{\mB}{\mathcal{B}}

\newcommand{\mD}{\mathcal{D}}
\newcommand{\mS}{\mathcal{S}}

\newtheorem{Proposition}{Proposition}[section]
\newtheorem{Lemma}{Lemma}[section]
\newtheorem{remark}{Remark} [section]
\usepackage[normalem]{ulem}			
\date{}
\title{Innovative Extensions to Option Pricing: Asymmetric Brownian Motion and Random Walk Approaches}
 \author[1 *]{Jagdish Gnawali}
   \author[2]{Abootaleb Shirvani}
   \author[1] {Dilmi C.W. Hettiachchi-Halpe-Kankanamalage}
  \author [1] {W. Brent Lindquist}
  \author[1]{Svetlozar T. Rachev}
  \author[3]{Frank J. Fabozzi}
\affil[1]{\small Department of Mathematics \& Statistics, Texas Tech University, Lubbock, TX 79409-1042, USA}
\affil[2]{\small Department of Mathematical Sciences, Kean University, Union, NJ 07083, USA}
\affil[3]{\small Carey Business School, Johns Hopkins University, Baltimore, MD 21202, USA}

\affil[*]{Corresponding author:  jgnawali@ttu.edu}

\begin{document}

\maketitle

{\bf Abstract}

Classical option pricing models, such as Bachelier and Black--Scholes--Merton, postulate symmetric Brownian diffusion, which limits their capacity to reflect empirical phenomena including return skewness, heavy tails, and volatility asymmetry. This paper develops an innovative extension: the Geometric Asymmetric Brownian Motion (GABM), unifying asymmetric Brownian motion and random walk methodologies within the Bachelier--Black--Scholes--Merton framework. The approach harnesses the Cherny--Shiryaev--Yor invariance principle (CSYIP) to define asymmetric random walk integrals, where local time at the origin generates skewness and state-dependent risk. Closed-form option pricing formulas are derived, and a discrete-time binomial tree algorithm is constructed and shown to converge rigorously to the GABM limit. By incorporating a smoothed functional form based on the normal inverse Gaussian distribution, the model allows for flexible, state-dependent volatility calibration. Numerical experiments demonstrate the resulting option price and implied volatility surfaces, highlighting the framework's enhanced ability to capture persistent market asymmetry and complex risk behaviors observed in empirical data.\\

{\bf Keywords:} Option Pricing, Asymmetric Brownian Motion, Cherny--Shiryaev--Yor Invariance Principle, Binomial Tree Model, State-Dependent Volatility, Random Walk Extensions, Skewed Return Dynamics

\section{Introduction} \label{sec:intro}

Classical option-pricing rests on elegant yet symmetric diffusions--Bachelier’s arithmetic Brownian motion and Black--Scholes--Merton’s geometric Brownian motion \citep{Bachelier_1900, Black_1973, Merton_1973}. Their tractability and no-arbitrage discipline helped found modern derivatives theory, but symmetry in increments and Gaussian tails leaves little room for the robust empirical regularities of financial markets: skewed returns, heavy tails, leverage-type asymmetry, and pronounced volatility smiles and smirks \citep{Cont_2001, Jackwerth_2000}. Behavioral frictions, feedback trading, and information-diffusion delays further generate directional asymmetries that standard Brownian diffusions cannot express parsimoniously \citep{Black_1986, Barberis_1998, Shiller_1981}.

We propose a principled extension of the Bachelier--Black--Scholes--Merton (BBSM) paradigm that keeps the arbitrage-free core while embedding asymmetry directly into the driving process. Building on Donsker-type invariance and the Cherny--Shiryaev--Yor (CSY) invariance principle \cite{Cherny_2003}, we construct asymmetric random-walk integrals whose continuum limit yields a Geometric Asymmetric Brownian Motion (GABM). Local time at the origin induces state-dependent drift and variance components that tilt the distribution endogenously, unifying asymmetric random-walk and Brownian methodologies inside the BBSM framework. We derive closed-form prices for vanilla options under GABM and prove convergence of a binomial scheme to the continuous-time limit. To match option data flexibly, we smooth state-varying risk with a normal-inverse-Gaussian (NIG) kernel, obtaining tractable implied-volatility surfaces with calibrated skew and term-structure dynamics \citep{Eberlein_1995, Prause_1999}.

Our approach complements and differs from well-known alternatives. Stochastic-volatility and Lévy jump-diffusion models generate smiles via variance or jump risk \citep{Heston_1993, Carr_1999}, while temperate/hyperbolic Lévy families fit tails through non-Gaussian innovations \citep{Eberlein_1995, Rosinski_2007}. By contrast, we preserve a Brownian backbone and introduce asymmetry through pathwise features--random-walk bias and local time-yielding clear economic interpretation, diffusion-limit convergence, and simple hedging under risk-neutral valuation. Empirically, the resulting option prices and implied surfaces match persistent left-skew and state-dependent steepness observed in equity options without sacrificing analytical transparency \citep{Bakshi_2003, Christoffersen_2013}.

The remainder proceeds as follows. Section~\ref{sec:2} formalizes the CSY invariance and defines asymmetric random-walk integrals. Section~\ref{sec:3} introduces continuous-time GABM and closed-form prices. Sections~\ref{sec:4} \& \ref{sec:5} extend to correlated asymmetric motions and to settings without a money-market account. Section~\ref{sec:6} develops a provably convergent binomial algorithm and documents accuracy. Section~\ref{subsec:empirical} calibrates equity options and extracts implied-volatility surfaces and risk premia. The paper also implements the theoretical framework through numerical computation. In particular, a recombining binomial lattice consistent with the invariance limit is constructed, enabling stable pricing across large option panels. This algorithm provides fast and precise valuations while preserving convergence to the continuous-time GABM dynamics. Empirical calibration demonstrates that the approach not only reproduces implied-volatility skews and smiles but also achieves surface stability superior to symmetric benchmarks.
Section~\ref{sec:conclusion} concludes with implications for dynamic hedging and avenues for subordinated Lévy enrichments consistent with our invariance foundation \citep{Hu_2022,Hu_2024}.

\section{Asymmetric Random Walk and Asymmetric Brownian Motion} \label{sec:2}

In this section, we introduce (i) an \emph{asymmetric random walk (ARW)} driven by a piecewise-continuous transformation of the state and (ii) its continuous-time limit, an \emph{asymmetric Brownian motion (ABM)}. Our construction relies on the Cherny--Shiryaev--Yor invariance principle (CSYIP), which extends Donsker's theorem to stochastic integrals of random walks \footnote{See: \cite{Cherny_2003}.}.

\subsection{Discrete-time construction}
 Let $\{\xi_k\}_{k\geq 1}$ be independent and identically distributed (i.i.d.) "random signs" with $p(\xi_k = 1)= p(\xi_k=-1)= \frac{1}{2}$. 
For $n \in \mathbb N,$ define scaled increments $\xi_k^{(n)}:= \frac{1}{\sqrt{n}}\xi_k, \quad  k \in \mathbb N_0 = \mathbb N \cup \{0\}$ and fix a piecewise-continuous function (PCF)  $h: \mathbb R\rightarrow \mathbb R$. Set  
\begin{subequations} \label{eq:1ab}    
          \begin{align}
              X_{\frac{k}{n}}^{(n)}:& = \sum_{j=1}^k \xi_j^{(n)},\\
              Y_{\frac{k}{n}}^{(n)}:& = \sum_{j=1}^k h\left ( X_{\frac{j-1}{n}}^{(n)}\right) \xi_j^{(n)}, \quad X_0^{(n)}= Y_0^{(n)}=0
         \end{align}
          \end{subequations} 
 and extend both processes to continuous time by linear interpolation on each interval \( t \in \left[ \frac{k}{n}, \frac{k+1}{n} \right] \). Specifically, for \( t \in \left[ \frac{k}{n}, \frac{k+1}{n} \right] \) let
$\lambda: = nt -k \in [0, 1], \quad  X_t^{(n)}: = (1 - \lambda) X_{\frac{K}{n}}^{(n)} + \lambda X_{\frac{(k+1)}{n}}^{(n)},$ and analogously for \( Y_t^{(n)} \).

\begin{Proposition} \label{prop:2.1} (CSYIP for state-dependent integrands)

 If $h:\mathbb R \rightarrow \mathbb R$ is a PCF, then as $n \uparrow \infty$, the bi-variate process $\left ( X_t^{(n)}, Y_t^{(n)}\right), t \geq 0$ converges in law to $\left ( B_t, \int_0^t h(B_s) dB_s \right),$ where $B_t$ is a standard Brownian motion.
\end{Proposition}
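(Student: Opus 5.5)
Our plan is to prove the proposition via the approximation route of \cite{Cherny_2003}: first handle continuous $h$, then pass to a piecewise-continuous $h$ by approximation. The tools are Donsker's theorem, the Skorokhod representation, and continuity properties of stochastic integrals with respect to the occupation measure of Brownian motion.

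\emph{Step 1 (tightness and the continuous case).} By Donsker's theorem, $X^{(n)}\Rightarrow B$ in $C[0,\infty)$. Assume first that $h$ is continuous and bounded on compacts (localize by stopping at $\tau_R=\inf\{t:|X_t^{(n)}|\ge R\}$). Then $Y^{(n)}$ is a martingale in the discrete filtration with $\langle Y^{(n)}\rangle_{k/n}=\frac1n\sum_{j\le k}h^2(X^{(n)}_{(j-1)/n})$. By Aldous' criterion, or by Kolmogorov-type moment bounds via Burkholder's inequality, the pair $(X^{(n)},Y^{(n)})$ is tight in $C([0,\infty),\mathbb R^2)$. Take a subsequential limit $(B,Y)$. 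We then identify $Y$ through the martingale problem: $B$ and $Y$ are continuous martingales in the joint filtration, with
\[
\langle Y\rangle_t=\int_0^t h^2(B_s)\,ds,\qquad \langle B,Y\rangle_t=\int_0^t h(B_s)\,ds.
\]
These formulas hold because the discrete brackets are Riemann sums that converge by continuity of $h$ and uniform convergence of paths. Hence $\langle Y-\int h(B)\,dB\rangle=0$, so $Y=\int_0^\cdot h(B_s)\,dB_s$. Uniqueness of the limit then gives convergence of the full sequence. An alternative for Step 1 is the Kurtz--Protter / Jakubowski--Mémin--Pagès theorem on weak convergence of stochastic integrals, which applies since $X^{(n)}$ has uniformly controlled variations (UCV).

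\emph{Step 2 (piecewise-continuous $h$).} Write $h$ locally as a uniform-on-compacts limit off the discontinuity points. Concretely, choose continuous $h_\varepsilon$ with $h_\varepsilon=h$ outside $\varepsilon$-neighbourhoods of the (locally finitely many) discontinuities $a_1,\dots,a_m$ in $[-R,R]$, and $|h_\varepsilon|\le \sup_{[-R,R]}|h|$. For the limit process,
\[
E\Big|\int_0^{t\wedge\tau_R}(h-h_\varepsilon)(B_s)\,dB_s\Big|^2\le C\,E\int_0^t \mathbf 1_{\{\min_i|B_s-a_i|<\varepsilon\}}\,ds\to0 ,
\]
because Brownian motion spends zero Lebesgue time at any point. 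The same bound is needed uniformly in $n$ for the walk:
\[
\sup_n E\,\frac1n\#\{j\le nt:\ |X^{(n)}_{(j-1)/n}-a_i|<\varepsilon\}\le C(t)\,\varepsilon .
\]
This follows from the local CLT / Green-function estimate $P(X^{(n)}_{j/n}\in I)\le C|I|/\sqrt{j/n}+C/\sqrt j$, summed over $j$. Combining this with Step 1 for $h_\varepsilon$, a standard $3\varepsilon$ argument (e.g. Billingsley's Theorem 3.2 on approximating sequences) gives $(X^{(n)},Y^{(n)})\Rightarrow(B,\int h(B)\,dB)$. Finally, remove the localization by letting $R\to\infty$, using $P(\tau_R\le T)\to0$ uniformly in $n$.

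\emph{Main obstacle.} The delicate point is the uniform-in-$n$ occupation-time estimate near the discontinuities in Step 2. Its scale $\varepsilon$ must dominate the lattice spacing $n^{-1/2}$; this is fine since $\varepsilon$ is fixed before $n\to\infty$. The local-limit bound handles it, but care is needed so that the error term $C/\sqrt j$, summed over $j\le nt$, gives $O(\sqrt t/\sqrt n)\to0$ after dividing by $n$. I expect this to be the only non-routine calculation. Everything else reduces to Donsker's theorem plus martingale identification.
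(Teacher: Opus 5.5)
Your proof is correct, but the paper has no proof to set it against. The proposition is quoted from \cite{Cherny_2003} (Section~2 there) with only a remark attached, so your argument is a self-contained replacement for a citation.

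What your route buys is clarity about the hypothesis. In your argument, piecewise continuity of $h$ is used only in Step~2, to make the occupation time near the finitely many jump points negligible uniformly in $n$. It plays no role in $\int_0^t h(B_s)\,dB_s$ being well defined, which holds for any locally bounded Borel $h$. So the paper's remark that the PCF requirement ``ensures that the stochastic integral in the limit is well defined'' misplaces its role. For instance, take $h=\mathbf 1_A$ with $A=\bigcup_{m\ge1}m^{-1/2}\mathbb Z$. Then $\int_0^t h(B_s)\,dB_s=0$ a.s., yet $h(X^{(n)}_{(j-1)/n})=1$ for every $j$, so $Y^{(n)}\equiv X^{(n)}\Rightarrow B$.

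Three minor repairs:
\begin{itemize}
\item[(i)] Your displayed occupation-time bound cannot hold with $\sup_n$ over all $n$. A jump at $a_i=0$ is hit at $j=1$, which contributes $1/n$, and $1/n>C(t)\varepsilon$ for fixed $n$ and small $\varepsilon$. Your own computation gives $C(t)\varepsilon+O(\sqrt{t/n})$, i.e.\ a bound on $\limsup_n$. That is all Billingsley's Theorem~3.2 needs.
\item[(ii)] Your $h_\varepsilon$ is continuous only on $[-R,R]$, since any discontinuities outside remain. Before invoking Step~1, replace $h$ by a function equal to $h$ on $[-R,R]$ and constant outside, taking $\pm R$ to be continuity points. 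This changes nothing before $\tau_R$.
\item[(iii)] The local CLT is dispensable. Dominate $\mathbf 1_{\{|x-a_i|<\varepsilon\}}$ by a continuous $\phi_\varepsilon$ supported in $\{|x-a_i|<2\varepsilon\}$. Donsker's theorem and boundedness then give $E\,\frac1n\sum_{j\le nt}\phi_\varepsilon(X^{(n)}_{(j-1)/n})\to E\int_0^t\phi_\varepsilon(B_s)\,ds\le C\varepsilon\sqrt t$. The same sandwich idea extends your proof to locally bounded $h$ that are continuous off a Lebesgue-null set: choose continuous $g\le h\le G$ on $[-R,R]$ with $\int_{-R}^R(G-g)\,dx$ small.
\end{itemize}
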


\begin{remark}
    The PCF requirement ensures that the stochastic integral in the limit is well defined (See: Section 2 in \cite{Cherny_2003}).
\end{remark}

\subsection{Local time at the origin and Tanaka's formula}

Let $B=\{B_t\}_{t\geq 0}$ be a standard Brownian motion and let $L=\{L_t\}_{t\geq 0}$ its local time at zero. By definition, 

\begin{equation} \label{eq:2_localTime}
    L_t = \lim_{\epsilon \downarrow 0} \frac{1}{2\epsilon} \int_0^t \mathbf{1}_{\{|B_s|\leq \epsilon\}}\,ds
 \quad t\geq 0
\end{equation}
and Tanaka's formula yields

\begin{equation}  \label{eq:3}
    |B_t| = \int_0^t \text{sgn}(B_s) \, dB_s + L_t, \quad t \geq 0,
\end{equation}

where the sign function \(\text{sgn}(\cdot)\) is defined as\footnote{See: \cite{Chung_1990}}
\begin{equation}  \label{eq:4}
	\text{sgn}(a) = \left\{
	\begin{aligned}
		-1, & \quad \text{if } a < 0, \\
		 0, & \quad \text{if } a = 0, \\
		 1, & \quad \text{if } a > 0.
	\end{aligned}
	\right.
\end{equation}

\subsection{Discrete approximation of local time}

Define the unscaled simple random walk $X_k: = \sum_{j=1}^k \xi_j$ for $k \in \mathbb{N}_0$, $X_0 = 0$, and the discrete local time at zero $L_k: = \sum_{j=0}^k \mathbf{1}_{\{X_j = 0\}}$.
Consider the scaled pair $\left( X_{\frac{k}{n}}^{(n)}, L_{\frac{k}{n}}^{(n)} \right): 
= \left( \frac{1}{\sqrt{n}} X_k, \frac{1}{\sqrt{n}} L_k \right),$ and extend by linear interpolation to continuous time as above. 

\begin{Proposition} \label{prop:2.2} 
(CSYIP with local time )\footnote{See: Theorem 3.1 of \cite{Cherny_2003}.} 

As $n \to \infty$,  $\left( X_t^{(n)}, L_t^{(n)} \right)_{t\geq 0}$
converges in distribution to $\left( B_t, L_t \right)_{t\geq 0}$ in $D([0,\infty), \mathbb{R}^2)$. 

\end{Proposition}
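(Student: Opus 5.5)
The plan is to derive Proposition~\ref{prop:2.2} from a discrete Tanaka formula together with Proposition~\ref{prop:2.1} applied with $h=\text{sgn}$, which is a PCF. For the unscaled walk one checks directly that
\begin{equation*}
|X_{k}| = \sum_{j=1}^{k} \text{sgn}(X_{j-1})\,\xi_j + \sum_{j=0}^{k-1}\mathbf{1}_{\{X_j=0\}} .
\end{equation*}
The identity holds step by step: if $X_{j-1}\neq 0$ then $|X_j|-|X_{j-1}| = \text{sgn}(X_{j-1})\xi_j$, while if $X_{j-1}=0$ then $|X_j|-|X_{j-1}|=1$ and $\text{sgn}(0)=0$. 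Dividing by $\sqrt{n}$ gives
\begin{equation*}
L^{(n)}_{k/n} = |X^{(n)}_{k/n}| - Y^{(n)}_{k/n} + \tfrac{1}{\sqrt n}\mathbf{1}_{\{X_k=0\}},
\end{equation*}
where $Y^{(n)}$ is the process in (\ref{eq:1ab}) built with $h=\text{sgn}$. The last term is bounded by $n^{-1/2}$ uniformly, so it is negligible. The discrepancy between the linear interpolations of the two sides on each interval $[k/n,(k+1)/n]$ is also $O(n^{-1/2})$ uniformly, since every increment is at most $2/\sqrt n$.

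Next, Proposition~\ref{prop:2.1} gives $(X^{(n)},Y^{(n)}) \Rightarrow (B,\int_0^\cdot \text{sgn}(B_s)\,dB_s)$. The map $(x,y)\mapsto(x,|x|-y)$ is continuous from $C([0,\infty),\mathbb R^2)$ to itself in the topology of locally uniform convergence, so the continuous mapping theorem yields
\begin{equation*}
\bigl(X^{(n)},\,|X^{(n)}|-Y^{(n)}\bigr)\Rightarrow\Bigl(B,\ |B|-\int_0^\cdot\text{sgn}(B_s)\,dB_s\Bigr)=(B,L),
\end{equation*}
where the equality is Tanaka's formula (\ref{eq:3}). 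Since the interpolated processes are continuous and the error is uniformly $O(n^{-1/2})$, Slutsky's lemma in $C$ gives $(X^{(n)},L^{(n)})\Rightarrow(B,L)$ in $C([0,\infty),\mathbb R^2)$. Convergence in $D([0,\infty),\mathbb R^2)$ follows because the Skorokhod topology restricted to $C$ coincides with local uniform convergence and the limit is continuous.

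The main obstacle is the legitimacy of applying Proposition~\ref{prop:2.1} with $h=\text{sgn}$. The integrand is discontinuous exactly at $0$, where the walk spends a nonnegligible amount of scaled time, and this is precisely the effect that generates the local time. The argument rests on Proposition~\ref{prop:2.1} genuinely covering such discontinuities, as the CSY theorem does for PCFs. If one wanted to avoid relying on it, the fallback is to approximate $\text{sgn}$ by continuous $h_\varepsilon$ equal to $\text{sgn}$ off $[-\varepsilon,\varepsilon]$. One then bounds $E\sup_{t\le T}|Y^{(n)}_t-Y^{(n),\varepsilon}_t|^2$ by Doob's inequality, using $n^{-1}\sum_{j\le nT}\mathbf{1}_{\{|X^{(n)}_{(j-1)/n}|\le\varepsilon\}}$, which is $O(\varepsilon)$ uniformly in $n$ by the local CLT. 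Finally, a standard approximation argument (Billingsley, Theorem 3.2) passes to the limit.
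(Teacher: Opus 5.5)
Your argument is correct, but the paper has no proof to compare it with: it imports the statement from Theorem~3.1 of \cite{Cherny_2003} without derivation. You instead obtain Proposition~\ref{prop:2.2} as a corollary of Proposition~\ref{prop:2.1} with $h=\mathrm{sgn}$. The ingredients are the exact discrete Tanaka identity (valid because a $\pm1$ step from $X_{j-1}\neq 0$ can land on $0$ but never cross it), the continuous map $(x,y)\mapsto(x,|x|-y)$ on $C([0,\infty),\mathbb{R}^2)$, Tanaka's formula \eqref{eq:3}, and a deterministic $O(n^{-1/2})$ remainder.

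What this gains over the bare citation is transparency about constants. The identity forces the limit to be exactly the local time normalized as in \eqref{eq:2_localTime}--\eqref{eq:3}, which is not obvious a priori since the walk visits $0$ only at even times. It also shows that the $n^{-1/2}$ scaling of the visit count is inherited from $|X^{(n)}|$ and $Y^{(n)}$, which substantiates the remark following the proposition. The cost is that all the analytic content is pushed into Proposition~\ref{prop:2.1} for a discontinuous integrand. Your fallback handles this correctly, up to two small corrections.

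First, your heuristic is backwards. The scaled time the walk spends at $0$ is $O(n^{-1/2})$, and it is precisely the smallness of the occupation of $[-\varepsilon,\varepsilon]$ that makes the discontinuity of $\mathrm{sgn}$ harmless. The local time appears only because visits are weighted by $n^{-1/2}$ rather than $n^{-1}$.

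Second, the occupation bound is $O(\varepsilon+n^{-1/2})$, not $O(\varepsilon)$ uniformly in $n$. That still suffices for the $\lim_{\varepsilon}\limsup_{n}$ condition in Billingsley's Theorem~3.2. You should also state the limit-side convergence $\int_0^\cdot h_\varepsilon(B_s)\,dB_s\to\int_0^\cdot \mathrm{sgn}(B_s)\,dB_s$. It follows from Doob's inequality and the It\^o isometry, because $E\int_0^T\mathbf{1}_{\{|B_s|\le\varepsilon\}}\,ds=O(\varepsilon)$.
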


\begin{remark}
    The scaling $\frac{1}{\sqrt{n}}$ for $ L_k$ is optimal: discrete occupation counts grow at order $n^{\frac{1}{2}}$ and converge jointly with the walk. 
\end{remark}

 \begin{figure}[htbp]
	\centering
     \includegraphics[width=0.48\linewidth] {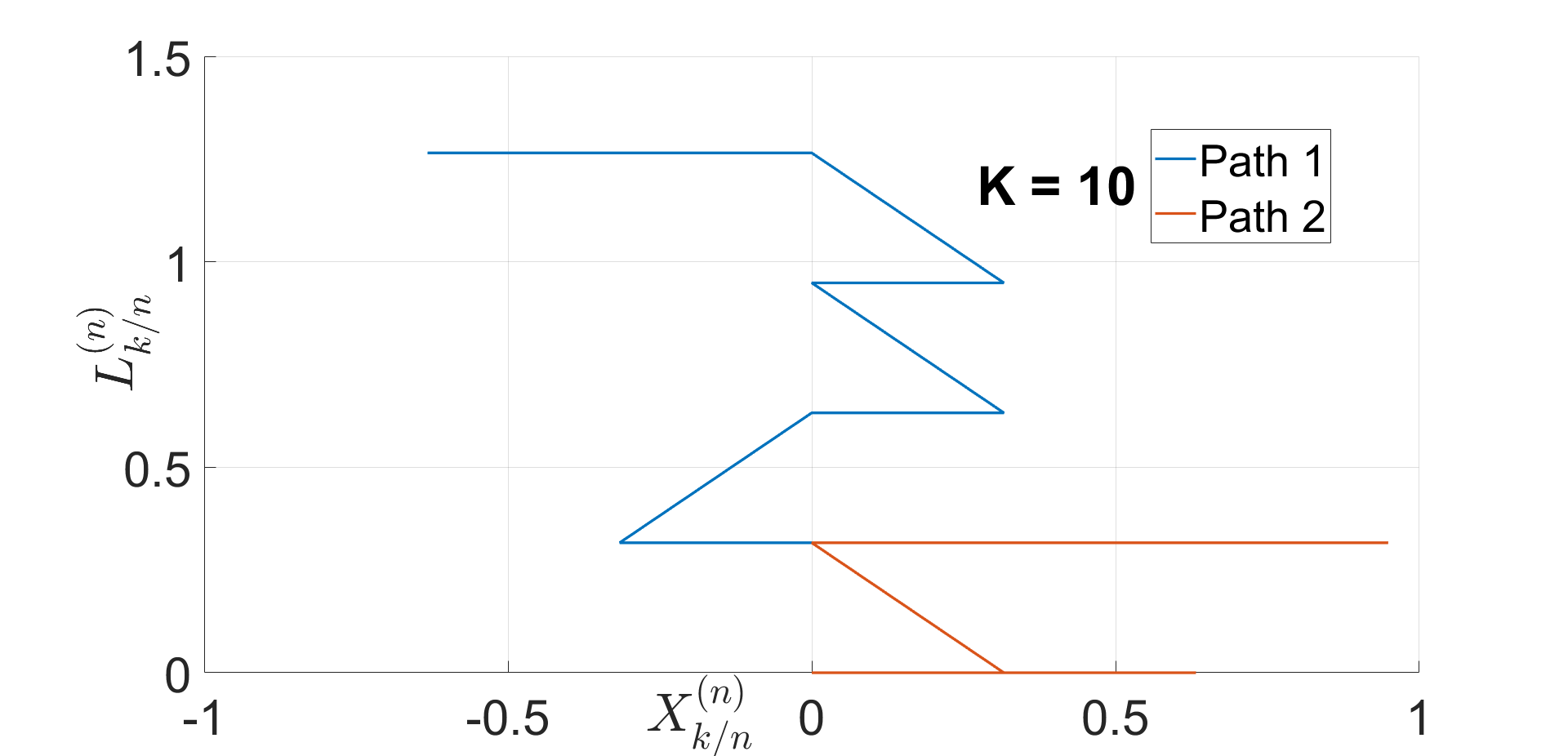}%
      \includegraphics[width=0.48\linewidth] {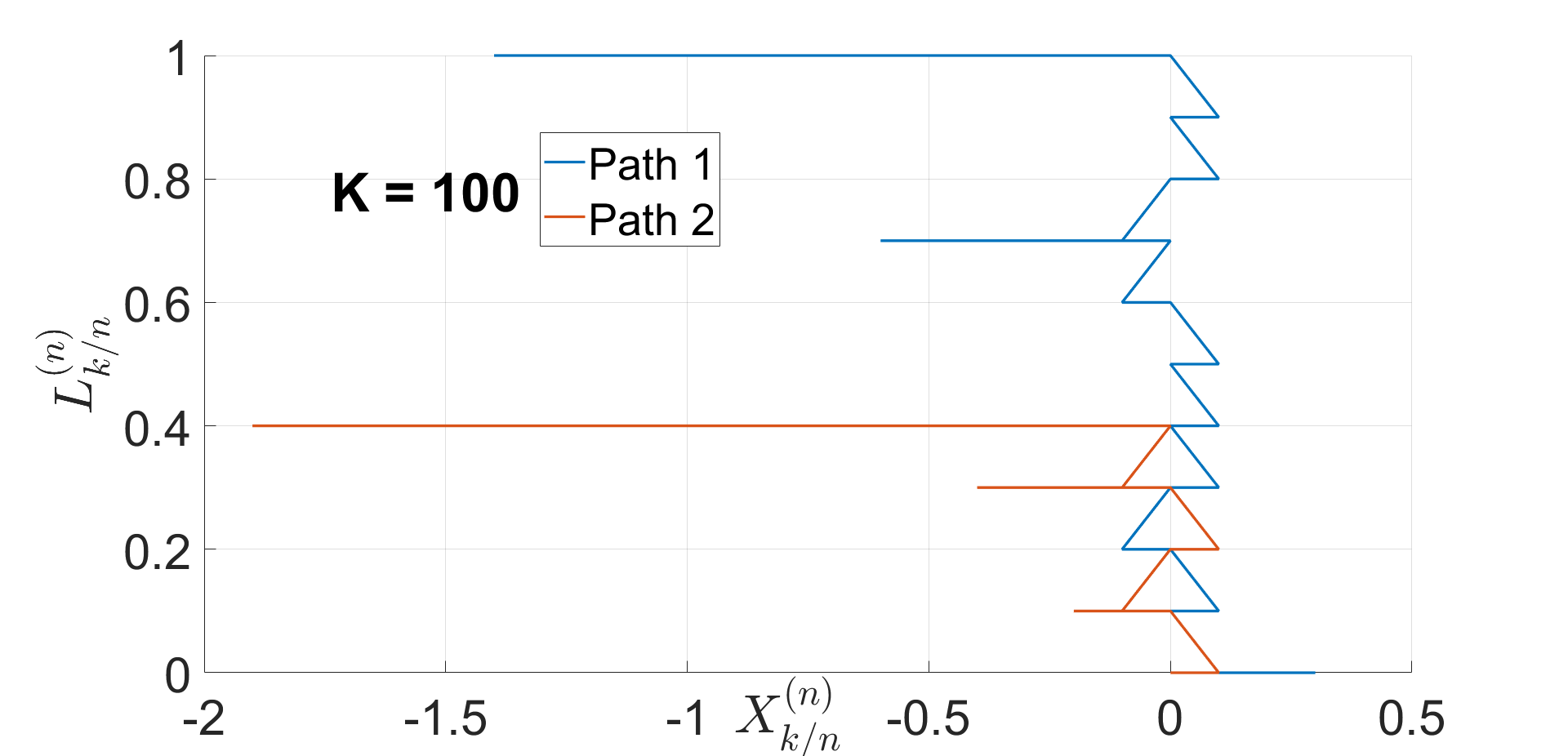}
      
      \includegraphics[width=0.88\linewidth] {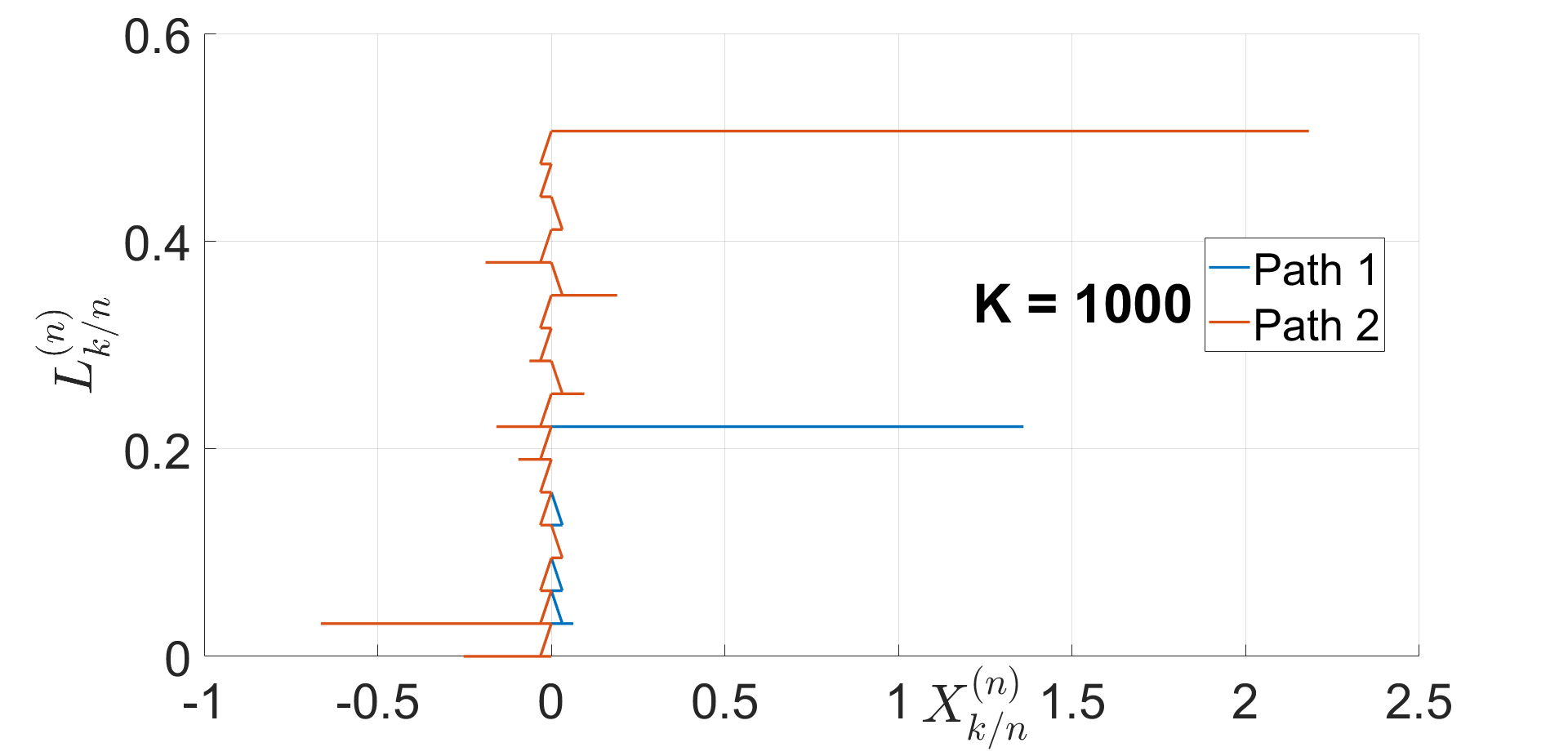}
    \caption{Sample paths of the normalized local time $L^{(n)}_{\frac{k}{n}}$ versus the corresponding positions $X^{(n)}_{\frac{k}{n}}$ for random walks, with varying time steps $K$.}

  	 \label{fig: (X,L)}
\end{figure}

Figure~\ref{fig: (X,L)} provides a detailed comparison of the normalized local time $L^{(n)}_{k/n}$ as a function of the normalized position $X^{(n)}_{k/n}$ for two independent random walk trajectories, with the panels arranged by step counts: $K=10$ (top left), $K=100$ (top right), and $K=1000$ (bottom). For $K=10$, the strongly discretized regime, the local time accumulates in coarse, piecewise-constant increments, and paths quickly plateau, reflecting the minimal recurrence possible at low resolution. At $K=100$, the structure becomes more intricate, with local time building through finer steps and pathwise differences emerging as the walks traverse the origin with varying frequency and dwelling times. In the high-resolution case $K=1000$ (bottom), the trajectories exhibit sharply concentrated local peaks at the origin, corresponding to extended visits and frequent returns; one path demonstrates rapid escalation in local time while the other builds more gradually, highlighting the effect of independent fluctuation sequences.

This progression with increasing $K$ visually demonstrates how the scaled local time converges toward its continuous counterpart: blocky increments gradually resolve into a dense cluster, tightly concentrated around $X=0$. The plots underscore the power of the chosen normalization, making the occupation density and recurrence profile of each random walk directly comparable to the theoretical Brownian local time. These visualizations robustly illustrate both the almost-sure empirical regularity and the random pathwise variability that underlie the joint convergence result of Proposition~\ref{prop:2.2}.

\subsection{The horizontal--vertical CSY random walk and asymmetric Brownian motion} \label{sec:2b}

In this section, we formulate CSYIP for the ``horizontal--vertical'' random walk as detailed in Section 5 of~\cite{Cherny_2003}. This construction generalizes the discrete-time random walk framework into a two-dimensional interacting process with asymmetric properties.

\subsubsection{Discrete-time definition of the horizontal--vertical random walk}
Let $\{\xi_k\}_{k \in \mathbb N_0}$ be i.i.d.\ random signs satisfying $\mathbb{P}(\xi_k = 1) = \mathbb{P}(\xi_k = -1) = \tfrac{1}{2}.$
Define the pair of discrete-time processes $\left( X_k^{(\mathrm{CSY})}, Y_k^{(\mathrm{CSY})} \right)_{k \geq 0}$ starting at zero: $X_0^{(\mathrm{CSY})} = Y_0^{(\mathrm{CSY})} = 0,$
with updates given by
\begin{subequations}
\begin{align}
X_{k+1}^{(\mathrm{CSY})} &= 
\begin{cases}
X_k^{(\mathrm{CSY})} + \xi_{k+1}, & \text{if } Y_k^{(\mathrm{CSY})} > X_k^{(\mathrm{CSY})}, \\
X_k^{(\mathrm{CSY})}, & \text{if } Y_k^{(\mathrm{CSY})} \leq X_k^{(\mathrm{CSY})},
\end{cases} \label{eq:5a} \\
Y_{k+1}^{(\mathrm{CSY})} &= 
\begin{cases}
Y_k^{(\mathrm{CSY})}, & \text{if } Y_k^{(\mathrm{CSY})} > X_k^{(\mathrm{CSY})}, \\
Y_k^{(\mathrm{CSY})} - \xi_{k+1}, & \text{if } Y_k^{(\mathrm{CSY})} \leq X_k^{(\mathrm{CSY})}.
\end{cases} \label{eq:5b}
\end{align}
\end{subequations}

This construction introduces an intricate coupling: the increment in one coordinate depends on the relative position of the two processes, encoding an inherent asymmetry and state-dependent dynamics.

\subsubsection{Continuous-time scaling and interpolation}
For $n \in \mathbb{N}$, define the scaled processes
\begin{equation} \label{eq:6}
X_{\frac{k}{n}}^{(\mathrm{CSY};n)} := \frac{1}{\sqrt{n}} X_k^{(\mathrm{CSY})}, \qquad
Y_{\frac{k}{n}}^{(\mathrm{CSY};n)} := \frac{1}{\sqrt{n}} Y_k^{(\mathrm{CSY})}.
\end{equation}
Extend these by linear interpolation to continuous time $t \geq 0$, producing the piecewise linear trajectories
$\mathbb{X}_t^{(\mathrm{CSY};n)}, \quad \mathbb{Y}_t^{(\mathrm{CSY};n)}.$

\begin{figure}[htbp] 
    \centering
    \includegraphics[width=0.49\textwidth]{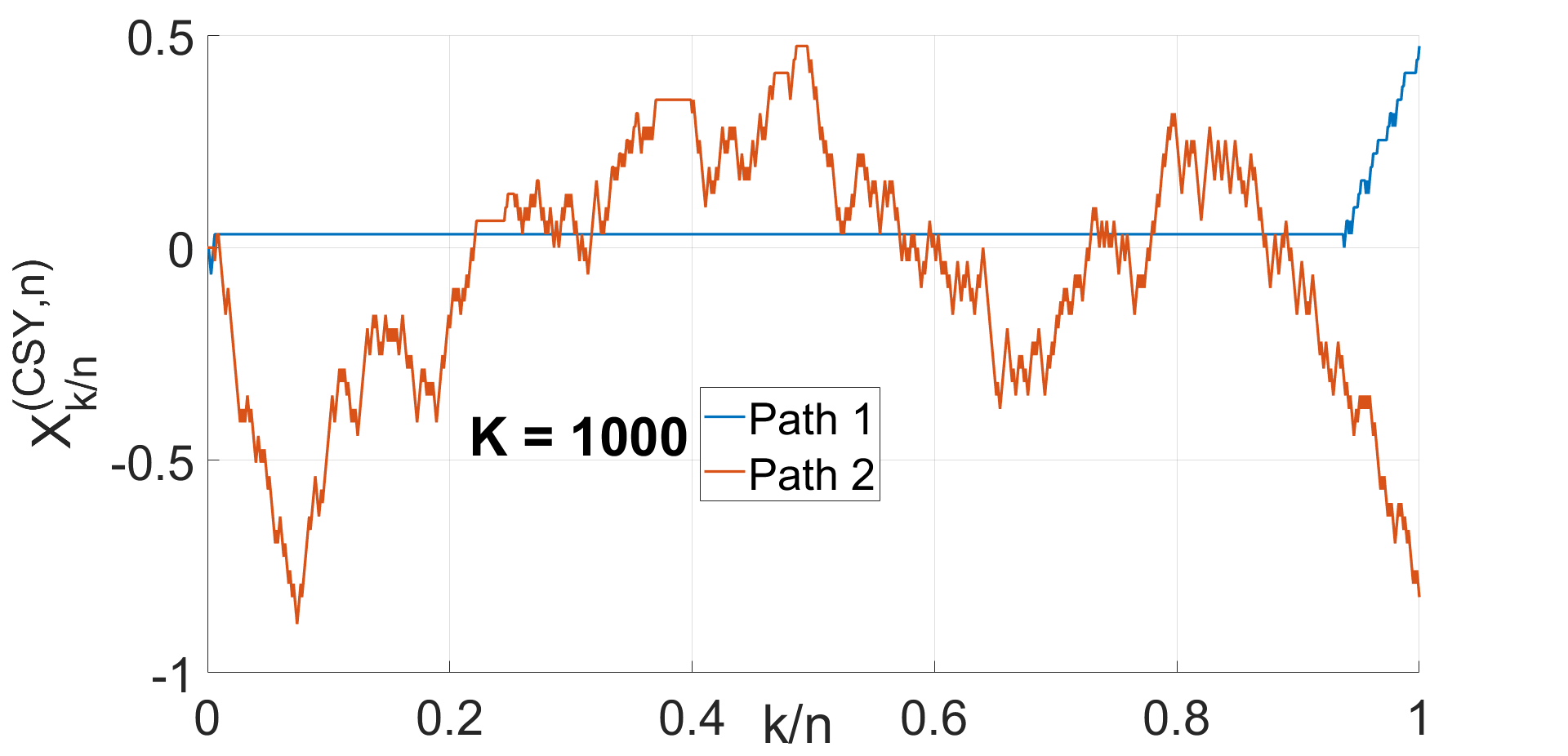}
   \includegraphics[width=0.49\linewidth]{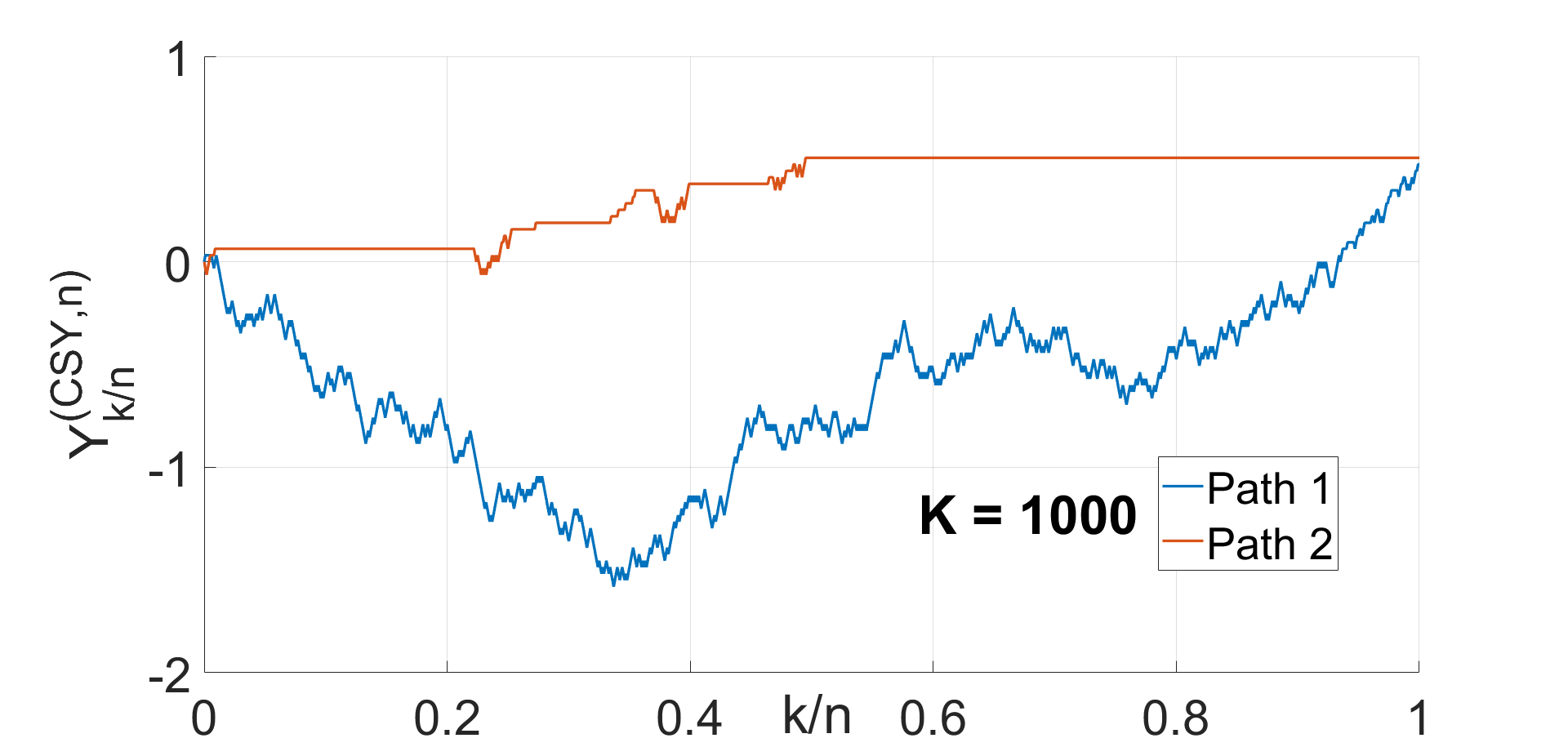}
    \includegraphics[width=0.90\linewidth]{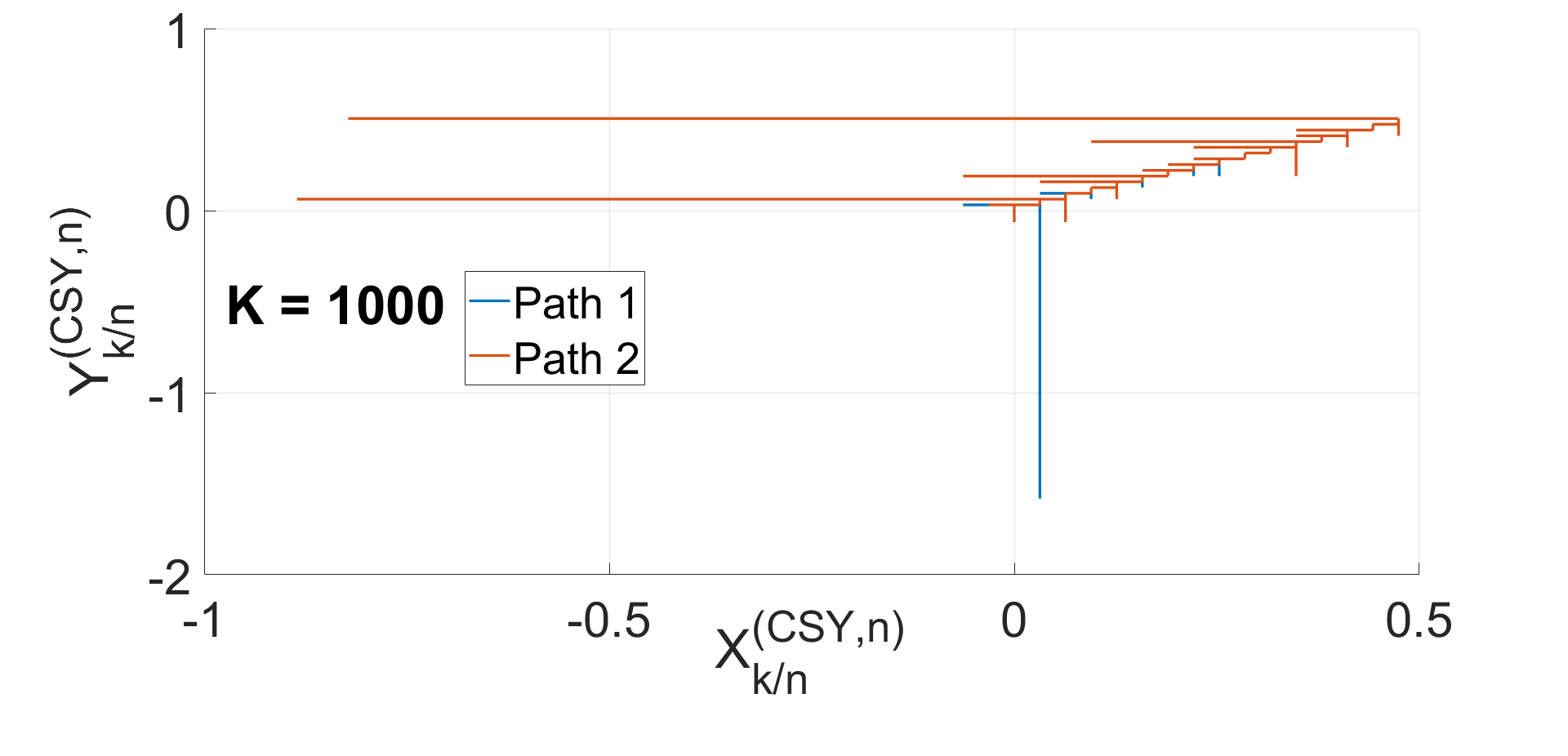}

\caption{Sample paths of the coupled processes $X_t^{(\mathrm{CSY};n)}$ and $Y_t^{(\mathrm{CSY};n)}$ for $K=1000$, showing (left) $X$ versus time, (right) $Y$ versus time, and (bottom) their joint evolution.The state-dependent dynamics yield piecewise constant segments and abrupt transitions characteristic of the horizontal--vertical random walk.}
    \label{fig:(X,Y)}    
\end{figure}
\begin{remark}
The scaling by \( \frac{1}{\sqrt{n}}\) aligns with the classical diffusive scaling in invariance principles, ensuring non-trivial limits and maintaining connection to Brownian motion.
\end{remark}

Figure~\ref{fig:(X,Y)} displays three panels representing two independent sample paths of the coupled, scaled CSY process: the top left panel shows $X_{k/n}^{(\mathrm{CSY};n)}$ versus time, the top right shows $Y_{k/n}^{(\mathrm{CSY};n)}$ versus time, and the bottom panel presents their joint trajectory in the $(X, Y)$ plane. In the top left, Path~1 exhibits a long period of near-constancy followed by a sharp increase, while Path~2 fluctuates more frequently, reflecting differences in when the process becomes dynamically “locked” or released by the coupling mechanism. In the top right, a striking contrast arises: Path~1 for $Y$ meanders with substantial negative excursions and gradual recovery, whereas Path~2 remains almost fixed, punctuated only by rare upward jumps—demonstrating how the state-dependent updates can freeze one coordinate for extensive intervals while the other evolves. The bottom panel synthesizes these behaviors, with each path alternating between horizontal and vertical segments unique to its random history—the sharply defined “staircase” structure particularly clear for Path~2, which spends extended periods on horizontal plateaus before rapid vertical shifts. These findings highlight not only the diversity afforded by the process’s probabilistic construction, but also how state and path dependence combine to yield highly asymmetric, non-generic behavior that is essential to understanding the model’s scaling limits.

\subsubsection{Limit theorem: horizontal--vertical CSY process}
\begin{Proposition}[CSYIP for the horizontal--vertical random walk~\cite{Cherny_2003}] \label{prop:2.3} \ \\
Let \(B = \{B_t\}_{t \ge 0}\) be a standard Brownian motion with local time \(L = \{L_t\}_{t \ge 0}\) at zero. Define the positive and negative parts:
$B_t^{\min} := \min(B_t, 0), \qquad B_t^{\max} := \max(B_t, 0).$
Then the limiting horizontal--vertical processes are
\begin{subequations}
\begin{align}
    \mathbb{X}_t^{(\mathrm{CSY})} &:= \int_0^t \mathbf{1}_{(-\infty,0]}(B_s) dB_s = \tfrac{1}{2}L_t + B_t^{\min}, \label{eq: 7a}     \\           
    \mathbb{Y}_t^{(\mathrm{CSY})} &:= -\int_0^t \mathbf{1}_{(0,\infty)}(B_s) dB_s = \tfrac{1}{2}L_t - B_t^{\max}. \label{eq: 7b}                  
\end{align}
\end{subequations}
The continuous-time scaled processes \(\left(\mathbb{X}_t^{(\mathrm{CSY};n)}, \mathbb{Y}_t^{(\mathrm{CSY};n)}\right)\) converge in law to \(\left(\mathbb{X}_t^{(\mathrm{CSY})}, \mathbb{Y}_t^{(\mathrm{CSY})}\right)\) as \(n \to \infty\).
\end{Proposition}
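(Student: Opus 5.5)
Two things have to be proved: the two closed-form identities in the statement, and the convergence in law of the rescaled horizontal--vertical walk. I would handle them in that order.

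\textbf{Identities.} These follow from Tanaka's formula \eqref{eq:3}. Write $\mathbf{1}_{(-\infty,0]}(x)=\tfrac12\bigl(1-\text{sgn}(x)\bigr)+\tfrac12\mathbf{1}_{\{0\}}(x)$. Brownian motion spends zero Lebesgue time at $0$, so $\int_0^t \mathbf{1}_{\{0\}}(B_s)\,dB_s=0$ by the It\^o isometry. Therefore
\[
\int_0^t \mathbf{1}_{(-\infty,0]}(B_s)\,dB_s=\tfrac12 B_t-\tfrac12\bigl(|B_t|-L_t\bigr)=\tfrac12 L_t+\min(B_t,0).
\]
In the same way, $\mathbf{1}_{(0,\infty)}=\tfrac12(1+\text{sgn})$ up to a null set, which gives
\[
-\int_0^t \mathbf{1}_{(0,\infty)}(B_s)\,dB_s=-\tfrac12 B_t-\tfrac12|B_t|+\tfrac12 L_t=\tfrac12L_t-\max(B_t,0).
\]

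\textbf{Representation of the walk.} The idea is to write the walk as a state-dependent random-walk integral of the type in Proposition~\ref{prop:2.1}. Set $D_k:=Y_k^{(\mathrm{CSY})}-X_k^{(\mathrm{CSY})}$.
\begin{itemize}
\item If $D_k>0$, then $X$ moves by $\xi_{k+1}$ and $D$ moves by $-\xi_{k+1}$.
\item If $D_k\le 0$, then $Y$ moves by $-\xi_{k+1}$ and $D$ moves again by $-\xi_{k+1}$.
\end{itemize}
So $-D_k=\sum_{j\le k}\xi_j=:S_k$ is exactly a simple random walk, and
\[
X_k^{(\mathrm{CSY})}=\sum_{j=1}^k \mathbf{1}_{\{S_{j-1}\le 0\}}\xi_j,\qquad
Y_k^{(\mathrm{CSY})}=-\sum_{j=1}^k \mathbf{1}_{\{S_{j-1}>0\}}\xi_j .
\]
After diffusive scaling, the rescaled walk $X^{(n)}$ in \eqref{eq:1ab} is $S_k/\sqrt n$ interpolated. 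The scaled CSY pair is therefore $Y^{(n)}$ from \eqref{eq:1ab} computed with $h=\mathbf{1}_{(-\infty,0]}$, together with minus the analogous process for $h=\mathbf{1}_{(0,\infty)}$. (The boundary convention at $0$ in \eqref{eq:5a}--\eqref{eq:5b} is precisely what makes these indicators match.) Both indicators are PCFs.

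\textbf{Joint convergence (main obstacle).} Proposition~\ref{prop:2.1} is stated for a single integrand, but here the two coordinates must converge jointly. I would handle this in one of two ways.

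\emph{Option 1.} Apply the proposition componentwise and combine the integrands through linear combinations. For any $a,b$, the process $aX^{\mathrm{CSY}}+bY^{\mathrm{CSY}}$ is again a single random-walk integral with integrand $a\mathbf{1}_{(-\infty,0]}-b\mathbf{1}_{(0,\infty)}$, which is a PCF. This gives convergence of all one-dimensional linear combinations. I would upgrade this to convergence of the pair via Cram\'er--Wold for finite-dimensional distributions, plus tightness of each coordinate.

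\emph{Option 2 (cleaner).} Use the discrete Tanaka identity
\[
\sum_{j\le k}\text{sgn}(S_{j-1})\xi_j=|S_k|-L_k .
\]
This holds because $|S_j|-|S_{j-1}|=\text{sgn}(S_{j-1})\xi_j$ when $S_{j-1}\neq0$, and equals $1$ when $S_{j-1}=0$. With the convention $\mathbf{1}_{\{S\le0\}}=\tfrac12\bigl(1-\text{sgn}(S)\bigr)+\tfrac12\mathbf{1}_{\{S=0\}}$, the identity gives
\[
X_k^{(\mathrm{CSY})}=\tfrac12S_k-\tfrac12|S_k|+\tfrac12 L^{*}_k+\tfrac12\sum_{j\le k}\mathbf{1}_{\{S_{j-1}=0\}}\xi_j .
\]
Here $L^*_k$ is the number of visits to $0$ before time $k$, which differs from $L_k$ by at most one. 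The last sum is a martingale with quadratic variation $L_{k-1}$, so after scaling its supremum over compacts is $O_p(n^{-1/4})$ and vanishes.

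Thus the scaled pair is a continuous function of $(X^{(n)},L^{(n)})$ plus a negligible error, and Proposition~\ref{prop:2.2} together with the continuous mapping theorem yields the limit $\bigl(\tfrac12L+B^{\min},\,\tfrac12L-B^{\max}\bigr)$. The care needed is in controlling the zero-set martingale term and the linear-interpolation discrepancy, which I would bound with Doob's inequality.
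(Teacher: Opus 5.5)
\textbf{Verdict.} Your proposal is correct through Option 2, but it takes a different route from the paper.

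\textbf{The paper's route.} The paper argues in the subsection on the alternate parametrization, around \eqref{eq:12}, and rotates coordinates. There $\mathbb{X}^{(\mathrm{CSY};n)}-\mathbb{Y}^{(\mathrm{CSY};n)}$ is the scaled simple random walk, and $\mathbb{X}^{(\mathrm{CSY};n)}+\mathbb{Y}^{(\mathrm{CSY};n)}$ is minus a single random-walk integral with a sign-type integrand. The \emph{bivariate} conclusion of Proposition~\ref{prop:2.1} then gives joint convergence to $\bigl(B,-\int_0^\cdot \text{sgn}(B_s)\,dB_s\bigr)$. Inverting the linear map gives the pair, and Tanaka's formula is applied only to the limit to obtain the closed forms.

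\textbf{Your route and what each buys.} You instead prove Tanaka at the discrete level. This writes the pair as a continuous functional of (walk, discrete local time) plus a zero-set martingale that is $O_p(n^{-1/4})$ after scaling. You then conclude with Proposition~\ref{prop:2.2} and the continuous mapping theorem. The paper's argument is shorter, but it rests on the harder invariance principle for stochastic integrals with discontinuous integrands. Yours needs only the joint walk/local-time principle and Doob's inequality, so you effectively prove the $h=\text{sgn}$ case of Proposition~\ref{prop:2.1} by hand. It also yields $\bigl(\tfrac12L+B^{\min},\tfrac12L-B^{\max}\bigr)$ directly as the limit, with continuous-time Tanaka needed only to match it with the integral form.

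\textbf{First slip: the boundary assignment is reversed.} Under \eqref{eq:5a}--\eqref{eq:5b}, when $X_k^{(\mathrm{CSY})}=Y_k^{(\mathrm{CSY})}$ it is $Y$ that moves. So the correct representation is $X_k^{(\mathrm{CSY})}=\sum_{j\le k}\mathbf{1}_{\{S_{j-1}<0\}}\xi_j$ and $Y_k^{(\mathrm{CSY})}=-\sum_{j\le k}\mathbf{1}_{\{S_{j-1}\ge 0\}}\xi_j$. Your parenthetical claim that the convention ``precisely'' matches the indicators in the limit statement is therefore wrong. The error is harmless: it only flips the sign of the vanishing zero-set martingale in Option 2.

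\textbf{Second slip: Option 1 does not work as stated.} Convergence of $aX+bY$ with the \emph{same} $(a,b)$ at every time gives the characteristic function of the finite-dimensional laws only at coefficient vectors of the form $(c_i a,c_i b)_i$. For two or more time points this does not determine the joint law. For Gaussian vectors, for example, it fixes only $\mathrm{Cov}(X_s,Y_t)+\mathrm{Cov}(Y_s,X_t)$, not each term. The fix uses the identity you already derived, $S_k=X_k^{(\mathrm{CSY})}-Y_k^{(\mathrm{CSY})}$. One application of the bivariate Proposition~\ref{prop:2.1} with integrand $\mathbf{1}_{(-\infty,0)}$ gives $\bigl(S^{(n)},\mathbb{X}^{(\mathrm{CSY};n)}\bigr)$ jointly. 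Then $\mathbb{Y}^{(\mathrm{CSY};n)}=\mathbb{X}^{(\mathrm{CSY};n)}-S^{(n)}$ follows by continuous mapping, and this is essentially the paper's argument.
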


\begin{remark}
This proposition extends the classical Donsker invariance principle by incorporating a state-dependent decomposition of Brownian motion through its excursions and local time, revealing the asymmetric structure inherent in the CSY horizontal-vertical walk.
\end{remark}

Applying Tanaka's formula, the processes admit the representations
\begin{subequations}
\begin{align}
\mathbb{X}_t^{(\mathrm{CSY})} &= \tfrac{1}{2} \left( B_t - \int_0^t \mathrm{sgn}(B_s) dB_s \right), \\
\mathbb{Y}_t^{(\mathrm{CSY})} &= \tfrac{1}{2} \left( -B_t - \int_0^t \mathrm{sgn}(B_s) dB_s \right),
\end{align}
\end{subequations}
where \(\mathrm{sgn}(x)\) is defined as usual.

\subsubsection{Parametrization of asymmetric perturbations}
For any vector \(\beta := (\beta^{(-)}, \beta^{(L)}, \beta^{(+)}) \in \mathbb{R}^3\), define the asymmetrically perturbed process
\begin{equation} \label{eq:9}       
    \mathbb{Z}_t^{(\beta)} := B_t + \beta^{(-)} \mathbb{X}_t^{(\mathrm{CSY})} + \beta^{(L)} L_t + \beta^{(+)} \mathbb{Y}_t^{(\mathrm{CSY})}, \quad t \ge 0.
\end{equation}

Expressing in terms of extremal functionals and local time:
\[
\mathbb{Z}_t^{(\beta)} = B_t + \beta^{(-)} B_t^{\min} + \left( \tfrac{1}{2} \beta^{(-)} + \beta^{(L)} + \tfrac{1}{2} \beta^{(+)} \right) L_t - \beta^{(+)} B_t^{\max}.
\]

Reparameterizing as
\[
\gamma^{(-)} := \beta^{(-)}, \quad \gamma^{(+)} := -\beta^{(+)}, \quad \gamma^{(L)} := \tfrac{1}{2} \beta^{(-)} + \beta^{(L)} + \tfrac{1}{2} \beta^{(+)}
\]
yields a compact form:
\begin{equation} \label{eq:10}      
    \mathbb{Z}_t^{(\beta, \gamma)} := B_t + \gamma^{(-)} B_t^{\min} + \gamma^{(L)} L_t + \gamma^{(+)} B_t^{\max}, \quad t \geq 0.
\end{equation}

\begin{remark}
The parameters \(\gamma^{(-)}\) and \(\gamma^{(+)}\) regulate skewness on the negative and positive sides of Brownian motion's excursions, respectively, while \(\gamma^{(L)}\) modulates the influence of local time, creating a ``sticky'' behavior at zero--a key feature for asymmetric modeling in finance.
\end{remark}

\subsubsection{Extensions incorporating correlated Brownian motions}
Introduce the more general dynamics
\[
\mathbb{W}_t^{(\beta, \gamma)} = \gamma^{(B)} B_t^{(1)} + \gamma^{(-)} B_t^{(2,\min)} + \gamma^{(0)} L_t^{(4)} + \gamma^{(+)} B_t^{(3,\max)}, \quad t \geq 0,
\]
where \(B_t^{(i)}\), \(i=1,2,3\), are correlated standard Brownian motions with instantaneous covariance matrix \(\rho_{(i,j)} \in [0,1]\).

Note that since
\[
B_t^{(\max)} - B_t^{(\min)} = |B_t|,
\]
this framework includes the skew Brownian motion (~\cite{Corns_2007}):
\[
B_t^{(\mathrm{skew}, \delta)} = \sqrt{1 - \delta^2} B_t + \delta |W_t|, \quad \delta \in (0,1),
\]
with $B_t$ and $W_t$ independent standard Brownian motions.

\subsection{Alternate parametrization and connection to stochastic integral representation}
The parametrization in Eq.~\eqref{eq:10} admits the alternative form
\begin{equation} \label{eq:11}
    \mathbb{Z}_t^{(\beta, \gamma)} = \delta^{(-)} B_t^{(\min)} + \delta^{(0)} \int_0^t h^{(0)}(B_s) dB_s + \delta^{(+)} B_t^{(\max)}, \quad t \geq 0,
\end{equation}
with parameters
\[
\delta^{(-)} = \gamma^{(-)} - \gamma^{(L)}, \quad \delta^{(0)} = -\gamma^{(L)}, \quad \delta^{(+)} = \gamma^{(+)} + \gamma^{(L)},
\]
and $ h^{(0)} (x) = \text{sgn}(x)$.
Patch together the discrete-time counterpart:
\[
Y_{\frac{k}{n}}^{(h^{(0)},n)} = \sum_{j=1}^k h^{(0)}\left( \frac{X_{j-1}^{(n)}}{n} \right) \xi_j^{(n)},
\]
and we obtain, by Proposition~\ref{prop:2.1}, convergence of the piecewise linear interpolation $\left( X_t^{(n)}, Y_t^{(h^{(0)},n)} \right)$ weakly to $\left( B_t, \int_0^t h^{(0)}(B_s) dB_s \right)$.

Define the transformations:
\begin{equation} \label{eq:12}    
\tilde{\mathbb{X}}_t^{(\mathrm{CSY};n)} = \mathbb{X}_t^{(\mathrm{CSY};n)} - \mathbb{Y}_t^{(\mathrm{CSY};n)}, \quad
\tilde{\mathbb{Y}}_t^{(\mathrm{CSY};n)} = \mathbb{X}_t^{(\mathrm{CSY};n)} + \mathbb{Y}_t^{(\mathrm{CSY};n)}.
\end{equation}
By Proposition~\ref{prop:2.1}, the bivariate process $\left( \tilde{\mathbb{X}}_t^{(\mathrm{CSY};n)}, \tilde{\mathbb{Y}}_t^{(\mathrm{CSY};n)} \right)$ converges weakly to
\[
\left( B_t, - \int_0^t h^{(0)}(B_s) dB_s \right).
\]

Inverting these transformations gives the limit:
\[
\left( \mathbb{X}_t^{(\mathrm{CSY};n)}, \mathbb{Y}_t^{(\mathrm{CSY};n)} \right) \xrightarrow{d} \left( \tfrac{1}{2} \left( B_t - \int_0^t h^{(0)}(B_s) dB_s \right), -\tfrac{1}{2} \left( B_t + \int_0^t h^{(0)}(B_s) dB_s \right) \right).
\]

\subsection{Definition of asymmetric Brownian motion for the complete market case}
Fix parameters leading to an asymmetric Brownian motion, defined via
\begin{equation} \label{eq:13}  
\mathbb{A}_t^{(\alpha)} = B_t + \alpha \int_0^t h^{(0)}(B_s) dB_s, \quad t \geq 0,
\end{equation}
where \(\alpha = -\gamma^{(-)} = -\gamma^{(L)} = \gamma^{(+)} \in \mathbb{R}\).

\begin{remark}
The symmetric choice of the parameters yields a complete market dynamics with one risky asset and a riskless bank account driven by \(\mathbb{A}_t^{(\alpha)}\).
\end{remark}

\begin{figure}[htbp]
	\centering
	\includegraphics[width=0.49\linewidth]{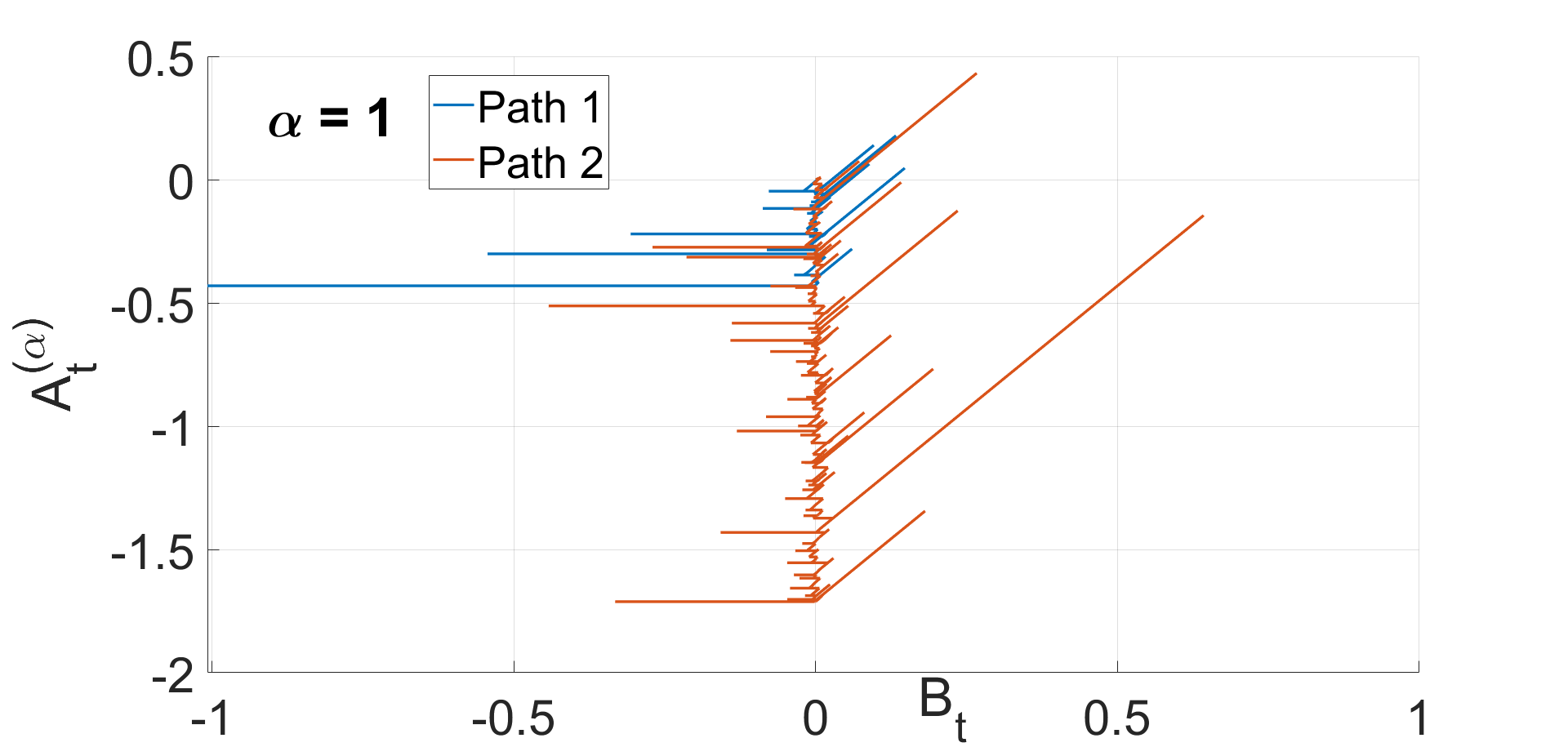}%
     \includegraphics[width=0.49\linewidth]{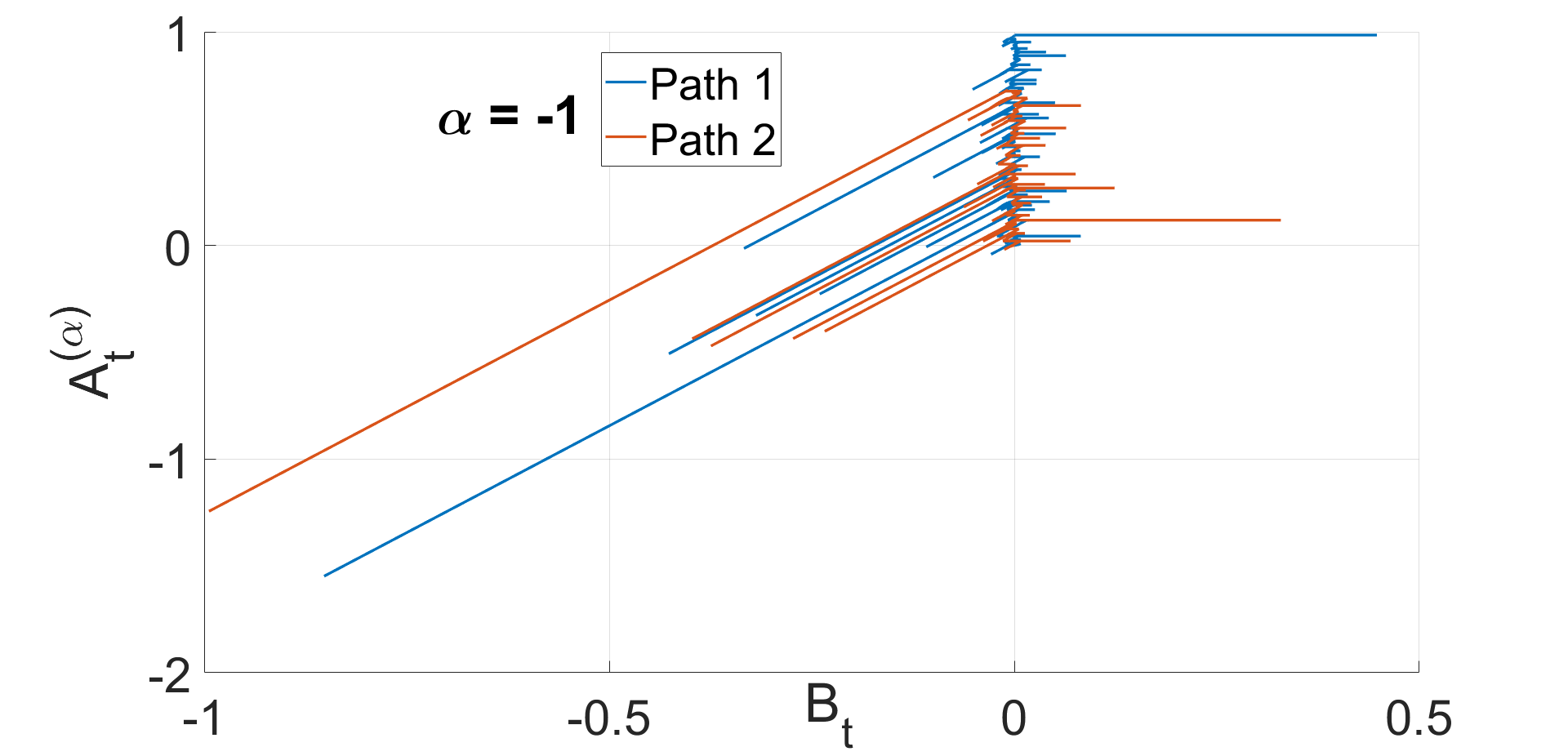}
     
  \caption{Sample trajectories of the bivariate process $\left( B_t, \mathbb{A}_t^{(\alpha)} \right)$ defined in Eq.~\eqref{eq:13}, simulated over the fixed horizon $T=1$,  with $\alpha = 1$ (left) and $\alpha = -1$ (right). Each subplot visualizes two independent realizations, highlighting the effect of the asymmetry parameter $\alpha$ on the joint path structure.}
  	 \label{fig: (t, B_t, A_t^(alpha) )}.
\end{figure}
The plots in Figure~\ref{fig: (t, B_t, A_t^(alpha) )} illustrate sample paths of the bivariate process $(B_t, \mathbb{A}_t^{(\alpha)})$ for $\alpha = 1$ and $\alpha = -1$, simulated over a fixed horizon $T = 1$. This ensures that  
 how the asymmetry parameter $\alpha$ modulates directional skewness through the sign-dependent drift term in Eq.~\eqref{eq:13}, rather than variations in time scaling. For $\alpha = 1$, upward deviations of $B_t$ are amplified, producing trajectories where positive excursions dominate. In contrast, $\alpha = -1$ reverses the bias, magnifying downward movements and generating predominantly negative trends in $\mathbb{A}_t^{(\alpha)}$. The observed range of variation—approximately $B_t \in [-1, 1]$ and $\mathbb{A}_t^{(\alpha)} \in [-2, 0.5]$ for $\alpha=1$, versus $B_t \in [-1, 0.5]$ and $\mathbb{A}_t^{(\alpha)} \in [-2, 1]$ for $\alpha=-1$--quantitatively confirms the asymmetric deformation induced by $\alpha$. The broken-line geometry apparent in both panels results from slope discontinuities occurring at the zero crossings of $B_t$, where $\mathrm{sgn}(B_s)$ flips. These kinks encode the sticky and directionally biased behavior predicted by the extended CSY limit theorem and the Tanaka--type stochastic representation.
 These features provide numerical and visual validation of the state-dependent deformation mechanism governed by $\alpha$, confirming the theoretical consistency of the asymmetric model and highlighting its suitability for capturing skewed or regime-sensitive behaviors in stochastic systems.

\subsection{Extension to incomplete market dynamics}
To model market incompleteness characterized by an additional skew component, define
\[
\mathbb{B}_t^{(\alpha,\delta)} = \sqrt{1 - \delta^2} B_t + \delta |W_t| + \alpha \int_0^t h^{(0)}(B_s) dB_s, \quad \delta \in (0,1),
\]
with $B_t$ and $W_t$ independent Brownian motions.

Hedging under this model typically requires two risky assets plus a riskless account, as discussed in~\cite{Hu_2024}, beyond the scope of the current work focused on complete markets.

\subsection{Flexible piecewise-continuous function modeling}
A more flexible piecewise-continuous function (PCF) for modeling asymmetry is given by
\begin{equation} \label{eq:14}   
    h^{(\Phi)}(x) = \begin{cases}
        \phi^{(-)}, & x < \phi \\
        \phi^{(0)}, & x = \phi \\
        \phi^{(+)}, & x > \phi,
    \end{cases}
\end{equation}
where \(\Phi = (\phi, \phi^{(-)}, \phi^{(0)}, \phi^{(+)}) \in \mathbb{R}^4\).

\begin{remark}
More general PCFs can be constructed as linear combinations of such step-like functions, preserving piecewise continuity; similar limit results for option pricing as developed in Sections~\ref{sec:3} and~\ref{sec:4} hold under these generalizations.
\end{remark}

For corresponding discrete-time increments,
$Y_{\frac{k}{n}}^{(h^{(\Phi)},n)} = \sum_{j=1}^k h^{(\Phi)} \left( X_{\frac{j-1}{n}}^{(n)} \right) \xi_j^{(n)},$
we establish convergence in law:
$\left( \mathbb{X}_t^{(n)}, \mathbb{Y}_t^{(h^{(\Phi)}; n)} \right) \xrightarrow{d} \left( B_t, \int_0^t h^{(\Phi)}(B_s) dB_s \right).$

\subsection{NIG density as a PCF for heavy-tailed asymmetry}
To capture heavy tail and flexible asymmetry, we consider the NIG density as a PCF:

\begin{equation}  \label{eq:15}            
h^{\Phi}(\mu, \alpha, \beta, \delta;x)= \frac{\alpha K_1 \left( \alpha \sqrt{\delta^2 + (x-\mu)^2} \right)}{\pi \sqrt{\delta^2 + (x-\mu)^2}}
\exp\left( \delta \sqrt{\alpha^2 - \beta^2} + \beta (x-\mu) \right),
\end{equation}

where $\mu$ locates the distribution, $\alpha>0$ controls tail heaviness, $\beta \in (-\alpha, \alpha)$ controls skewness, $\delta$ is a scale parameter, and $K_1$ is the modified Bessel function of the second kind.

\begin{remark}
The NIG density PCF provides a principled, flexible tool for calibrating asymmetry and tail behavior in financial return models, capturing effects beyond the simple piecewise step functions.
\end{remark}

\begin{figure}[htbp]
   \centering
   \includegraphics[width=0.90\linewidth]{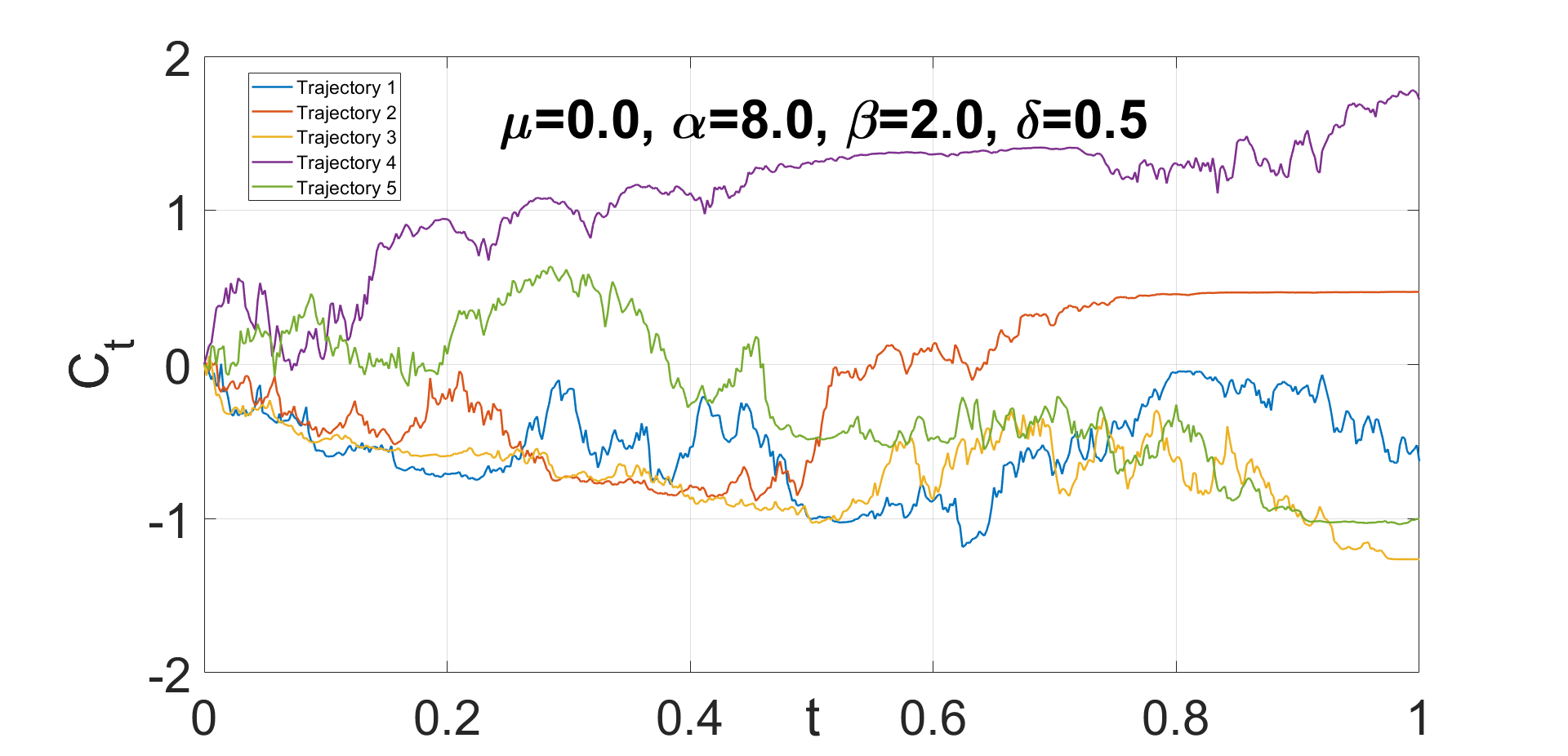}
\caption{Sample trajectories of the stochastic integral process 
$C_t{(\mu, \alpha, \beta, \delta)} = \int_0^t h^{\Phi}(B_s) \, dB_s$ for $t \in [0,1]$, where $h^{(\Phi)}$ is the NIG density with parameters $\mu=0.0$, $\alpha=8.0$, $\beta=2.0$, and $\delta=0.5$, illustrating heavy-tailed, asymmetric path behavior.}
\label{fig: numerical 7_trajectories}
\end{figure}

Figure~\ref{fig: numerical 7_trajectories} displays five sample trajectories of the stochastic integral process $C_t = \int_0^t h^{\Phi}(B_s)\, dB_s$ for $t \in [0,1]$, where $h^{\Phi}$ is the NIG density with parameters $\mu=0.0$, $\alpha=8.0$, $\beta=2.0$, and $\delta=0.5$. The plotted paths reveal pronounced heterogeneity and asymmetric features, stemming from the heavy-tailed and positively skewed nature of the control function $h^{\Phi}$. Specifically, the relatively high value of $\alpha$ amplifies the likelihood of large fluctuations, while the positive $\beta$ imparts a consistent upward skew, resulting in several trajectories that trend sharply positive or reach unusually high values over short intervals. This asymmetric, heavy-tailed modulation leads to trajectories exhibiting abrupt excursions and highly variable magnitudes, as observed in the rapid ascent of trajectory~4 and the more moderate but persistent growth in others.

Distinct pathwise differences are apparent: some trajectories remain centered or oscillatory, while others display sustained upward drifts or abrupt, isolated spikes. The parameter regime chosen effectively magnifies the stochastic variability inherent in Brownian motion, but channels this variability into persistently asymmetric, heavy-tailed outcomes—a property unattainable with Gaussian or symmetric controls. These visual results demonstrate the power of the NIG density as a parametric control, enabling fine tuning of both skewness and kurtosis in the induced stochastic integral, and providing a compelling explanation for its suitability in modeling financial data characterized by sudden shocks and persistent trends.

\section{Option Pricing with Geometric Asymmetric Brownian Motion} \label{sec:3}

In this section, we apply the CSYIP methodology to extend the classical BSM framework. We consider a frictionless market with a risky asset \(\mathcal{S}\) whose price process follows a \textit{Geometric Asymmetric Brownian Motion} (GABM):
\begin{equation} \label{eq:3.1}
    S_t^{(h)} = S_0 \exp\left( \nu t + \sigma B_t + \gamma C_t \right), \quad S_0 > 0, \quad \nu \in \mathbb{R}, \quad \sigma \neq 0, \quad \gamma \in \mathbb{R},
\end{equation}
defined on the filtered probability space \(\left( \Omega, \mathbb{F} = (\mathcal{F}_t = \sigma(B_u; u \le t)), \mathbb{P} \right)\). Here, 
\begin{equation} \label {eq:3.2}
    C_t = \int_0^t h(B_s) \, dB_s,
\end{equation}
where \(h(x) = h^{\Phi}(\mu, \alpha, \beta, \delta; x)\) is a probability density function as introduced in Eq.~\eqref{eq:15}. The riskless asset (bank account) evolves deterministically as
\begin{equation} \label{eq:3.3}
    \beta_t = \beta_0 e^{r t}, \quad \beta_0 > 0,
\end{equation}
with fixed interest rate \(r \geq 0\). Additionally, we consider a European contingent claim (option) \(\mathcal{C}\) with price process 
\begin{equation}\label{eq:3.4}
    f_t = f\left( S_t^{(h)}, t \right), \quad t \in [0,T),
\end{equation}
and terminal payoff \(f_T = g(S_T^{(h)})\), where \(f\) and \(g\) satisfy the regularity conditions summarized in Appendix E of~\cite{duffie_2001}.

\subsection{No-arbitrage and equivalent martingale measure}
The market \((\mathcal{S}, \mathcal{B}, \mathcal{C})\) is arbitrage-free and complete, as there is only a single Brownian driver and the asset price \(S_t^{(h)}\) is constructed as a functional extension of the classical BSM model.

To price derivatives, we seek an equivalent martingale measure \(\mathbb{Q} \sim \mathbb{P}\) such that the discounted price process $Z_t = \frac{S_t^{(h)}}{\beta_t}$
is a \(\mathbb{Q}\)-martingale. This requires identifying the market price of risk \(\theta_t\) such that
\[
B_t^{(\mathbb{Q})} := B_t + \int_0^t \theta_s ds
\]
is a Brownian motion under \(\mathbb{Q}\).

\begin{Lemma} \label{lemma:3.1}
Applying It\^o's lemma to \(S_t^{(h)}\) defined by Eq.~\eqref{eq:3.1} and Eq.~\eqref{eq:3.2} yields:
\begin{equation} \label{eq:3.5}
\begin{aligned}
    dS_t^{(h)} = S_t^{(h)} \Big[
        \nu + \tfrac{1}{2} \sigma^2 + \sigma \gamma h(B_t) + \tfrac{1}{2} \gamma^2 h(B_t)^2
    \Big] dt
    + S_t^{(h)} \Big[
        \sigma + \gamma h(B_t)
    \Big] dB_t.
\end{aligned}
\end{equation}
\end{Lemma}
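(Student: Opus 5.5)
We prove Lemma~\ref{lemma:3.1} by applying It\^o's formula to the exponential of a continuous semimartingale. Set
\[
X_t := \nu t + \sigma B_t + \gamma C_t, \qquad S_t^{(h)} = S_0 e^{X_t}.
\]
By Eq.~\eqref{eq:3.2}, $dC_t = h(B_t)\,dB_t$. Hence $X$ is an It\^o process,
\[
dX_t = \nu\,dt + \bigl(\sigma + \gamma h(B_t)\bigr)\,dB_t .
\]
The first step is to check that this is legitimate. $C_t$ is a well-defined continuous local martingale because $h$ is the NIG density of Eq.~\eqref{eq:15}. That density is continuous and bounded, so $\int_0^t h(B_s)^2\,ds<\infty$ almost surely. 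Boundedness even makes $C$ a true square-integrable martingale. The more general piecewise-continuous case is covered by the remark following Proposition~\ref{prop:2.1}.

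Next we compute the quadratic variation of $X$. The finite-variation part $\nu t$ contributes nothing, so
\[
d\langle X\rangle_t = \bigl(\sigma+\gamma h(B_t)\bigr)^2\,dt = \bigl(\sigma^2 + 2\sigma\gamma h(B_t) + \gamma^2 h(B_t)^2\bigr)\,dt .
\]
We then apply It\^o's formula to $F(x)=S_0e^x$, which is $C^2$. Since $F'=F''=F$, this gives
\[
dS_t^{(h)} = S_t^{(h)}\,dX_t + \tfrac12 S_t^{(h)}\,d\langle X\rangle_t .
\]
Substituting the two expressions above and collecting the $dt$ and $dB_t$ terms yields drift $\nu + \tfrac12\sigma^2 + \sigma\gamma h(B_t) + \tfrac12\gamma^2 h(B_t)^2$. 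The cross term $\tfrac12\cdot 2\sigma\gamma h$ collapses to $\sigma\gamma h$. The diffusion coefficient is $\sigma+\gamma h(B_t)$, which is exactly Eq.~\eqref{eq:3.5}.

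The only point needing care is the first step: confirming that $C$ is a continuous semimartingale with $d\langle C\rangle_t = h(B_t)^2\,dt$ and $d\langle B,C\rangle_t = h(B_t)\,dt$. These follow from standard properties of It\^o integrals with bounded progressively measurable integrands.

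A note on terminology: $h$ is applied to $B_t$, not to $S_t$, so no It\^o formula for a non-smooth function of $B$ is needed. Local-time terms would arise only if we rewrote $C$ via Tanaka's formula, Eq.~\eqref{eq:3}, for a discontinuous $h$. That rewrite plays no role in the differential form of Eq.~\eqref{eq:3.5}.
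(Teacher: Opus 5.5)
Your proof is correct and is essentially the paper's argument. The paper applies the multivariate It\^o formula to $f(t,x,y)=S_0\exp(\nu t+\sigma x+\gamma y)$ at $(t,B_t,C_t)$, using $dC_t=h(B_t)\,dB_t$, $dB_t\,dC_t=h(B_t)\,dt$ and $(dC_t)^2=h(B_t)^2\,dt$; you instead combine the exponent into the single It\^o process $X_t$ and apply the one-dimensional formula, where $d\langle X\rangle_t$ contains exactly those same cross-variation terms, and your extra remark that boundedness of the NIG density makes $C$ well defined adds justification the paper leaves implicit.
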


\noindent\textit{Proof: See Appendix~\ref{appendix:Lemma_3.1}.}
\vspace{0.2cm}

\begin{remark}
The drift adjustment \(\sigma \gamma h(B_t) + \frac{1}{2} \gamma^2 h(B_t)^2\) arises due to the stochastic integral term \(C_t\), reflecting the path-dependent asymmetric perturbation.
\end{remark}

\subsection{Market price of risk and dynamics under \(\mathbb{Q}\)}

To ensure that \(Z_t\) is a \(\mathbb{Q}\)-martingale, the market price of risk is formally given by\footnote{See Appendix:~\ref{appendix:EMM}}
\begin{equation} \label{eq:3.6}
    \theta_t = \frac{\nu - r + \frac{1}{2} \sigma^2 + \sigma \gamma h(B_t) + \frac{1}{2} \gamma^2 h(B_t)^2}{\sigma + \gamma h(B_t)}.
\end{equation}

Under \(\mathbb{Q}\), using Girsanov's theorem, \(S_t^{(h)}\) satisfies
\begin{equation} \label{eq:3.7}
    dS_t^{(h)} = r S_t^{(h)} dt + (\sigma + \gamma h(B_t)) S_t^{(h)} dB_t^{(\mathbb{Q})},
\end{equation}
where \(B_t^{(\mathbb{Q})}\) is a \(\mathbb{Q}\)-Brownian motion.

The unique equivalent martingale measure \(\mathbb{Q}\) allows pricing the contingent claim by
\begin{equation} \label{eq:3.8}
    f_0 = e^{-rT} \mathbb{E}^{\mathbb{Q}}[g(S_T^{(h)})].
\end{equation}

\subsection{Replicating portfolio and hedging}

Consider a self-financing portfolio \((a_t, b_t)\) replicating the option \(\mathcal{C}\):
\begin{equation} \label{eq:3.9}
    P_t = a_t S_t^{(h)} + b_t \beta_t = f_t = f(S_t^{(h)}, t), \quad t \in [0,T).
\end{equation}

\begin{Lemma} \label{emma:3.2} \label{lemma:3.2}
By It\^o's formula, the option price satisfies
\begin{equation} \label{eq:3.10}
\begin{aligned}
    df_t = {}& \frac{\partial f}{\partial t} dt + \frac{\partial f}{\partial x} \left[ \left( \nu + \tfrac{1}{2} \sigma^{2} + \sigma \gamma h(B_t) + \tfrac{1}{2} \gamma^{2} h(B_t)^2 \right) S_t^{(h)} dt + (\sigma + \gamma h(B_t)) S_t^{(h)} dB_t \right] \\
    & + \frac{1}{2} \frac{\partial^2 f}{\partial x^{2}} (\sigma + \gamma h(B_t))^2 (S_t^{(h)})^2 dt.
\end{aligned}
\end{equation}
\end{Lemma}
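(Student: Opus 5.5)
The plan is to apply the one-dimensional It\^o formula to $f(x,t)$, with $f\in C^{2,1}$ as guaranteed by the regularity conditions of Appendix E of \cite{duffie_2001}, composed with the semimartingale $S_t^{(h)}$. All the dynamics needed are already in Lemma~\ref{lemma:3.1}. First, I will record from Eq.~\eqref{eq:3.5} that $S^{(h)}$ is an It\^o process
\[
dS_t^{(h)} = \mu_t S_t^{(h)}\,dt + \sigma_t S_t^{(h)}\,dB_t,
\]
with
\[
\mu_t := \nu + \tfrac{1}{2}\sigma^2 + \sigma\gamma h(B_t) + \tfrac{1}{2}\gamma^2 h(B_t)^2, \qquad \sigma_t := \sigma + \gamma h(B_t).
\]
Both coefficients are adapted, since they are functions of $B_t$. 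They are also bounded, because the NIG density $h$ in Eq.~\eqref{eq:15} is bounded and continuous. Hence the $dt$ and $dB_t$ integrals are well defined and $S^{(h)}$ is a continuous semimartingale.

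Second, I will invoke It\^o's formula for $f(S_t^{(h)},t)$:
\[
df_t = \partial_t f\,dt + \partial_x f\,dS_t^{(h)} + \tfrac{1}{2}\partial_{xx} f\,d\langle S^{(h)}\rangle_t .
\]
Substituting the expression for $dS_t^{(h)}$ from the first step gives the bracketed term in Eq.~\eqref{eq:3.10}.

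Third, I will compute the quadratic variation. Only the martingale part contributes, and the $dt$ terms and the cross terms $dt\,dB_t$ vanish. This gives
\[
d\langle S^{(h)}\rangle_t = \sigma_t^2 (S_t^{(h)})^2\,dt = (\sigma+\gamma h(B_t))^2 (S_t^{(h)})^2\,dt,
\]
which yields the last term of Eq.~\eqref{eq:3.10}.

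The only point needing care is justifying It\^o's formula, since $f$ is a function of $S_t^{(h)}$ while the coefficients depend on $B_t$. This is harmless: the formula only requires $S^{(h)}$ to be a continuous semimartingale, not a Markov diffusion in its own state, so the path dependence through $h(B_t)$ plays no role. Boundedness of $h$ ensures the stochastic integrals are genuine, and the $C^{2,1}$ smoothness of $f$ on $t\in[0,T)$ comes from the cited regularity assumptions.
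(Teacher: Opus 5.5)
Your proposal is correct and follows essentially the same route as the paper's proof in Appendix~\ref{appendix:Lemma_3.2}. Like the paper, you read off $dS_t^{(h)}$ from Lemma~\ref{lemma:3.1}, apply the one-dimensional It\^o formula to $f(S_t^{(h)},t)$, and take the second-order term from $d\langle S^{(h)}\rangle_t=(\sigma+\gamma h(B_t))^2(S_t^{(h)})^2\,dt$. Your additional remarks are correct points the paper leaves implicit: boundedness and continuity of the NIG density (for $\delta>0$), and the fact that It\^o's formula needs only that $S^{(h)}$ be a continuous semimartingale, not Markov in its own state.
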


\noindent\textit{Proof: See Appendix~\ref{appendix:Lemma_3.2}.}
\vspace{0.2cm}

The replicating strategy satisfies
\begin{align}
    a_t &= \frac{\partial f}{\partial x}(S_t^{(h)}, t), \label{eq:3.11} \\
    b_t &= \frac{1}{\beta_t} \left[ f(S_t^{(h)}, t) - a_t S_t^{(h)} \right]. \label{eq:3.12}
\end{align}

\subsection{Modified Feynman--Kac representation}

Setting \(x = S_t^{(h)}\) and \(b(t) = B_t\), the option price solves the parabolic PDE:
\begin{equation} \label{eq:3.13}
    \frac{\partial f}{\partial t} + r x \frac{\partial f}{\partial x} - r f + \frac{1}{2} [\sigma + \gamma h(b(t))]^2 x^2 \frac{\partial^2 f}{\partial x^2} = 0, \quad x > 0, \ t \in [0,T),
\end{equation}
with terminal condition 
\[
f(x,T) = g(x).
\]

The observable market price of risk at time \(t\) is
\begin{equation} \label{eq:3.14}
    \theta_t(b(t)) = \frac{\nu - r + \frac{1}{2} \sigma^2 + \sigma \gamma h(b(t)) + \frac{1}{2} \gamma^2 h(b(t))^2}{\sigma + \gamma h(b(t))}.
\end{equation}

Compared to the classical BSM assumption where only the price \(S_t^{(h)}\) is observed, here the PDE explicitly depends on the Brownian motion state \(B_t\), adding path-dependence and breaking Markovianity in price alone.

By the Feynman--Kac theorem, the unique solution admits the conditional expectation representation
\begin{equation} \label{eq:3.15}
    f(x,t) = \mathbb{E}^{X_t = x}\left[ e^{-r(T-t)} g(X_T) \right],
\end{equation}
where \(X_s\) satisfies the stochastic differential equation
\begin{equation} \label{eq:3.16}
    dX_s = r X_s ds + [\sigma + \gamma h(b(s))] X_s dB_s^{(\mathbb{Q})}, \quad s \in [t,T).
\end{equation}

\subsection{Limitations of discounting and multi-asset considerations}

Valuation via the GABM-PDE presents a conceptual challenge because the stochastic discount factor generated by Eq.~\eqref{eq:3.3} is no longer riskless due to its implicit dependence on \(B_t\).

Consider an extended market with two risky assets \(\mathcal{S}_1\) and \(\mathcal{S}_2\), modeled by
\begin{equation} \label{eq:3.17}
    dS_{i,t}^{(h)} = \mu_{i,h}(t) dt + \sigma_{i,h}(t) dB_t, \quad i=1,2,
\end{equation}
where the drift and diffusion coefficients depend on \(B_t\) through:
\begin{align}
    \mu_{i,h}(t) &= \left( \nu_i + \tfrac{1}{2} \sigma_i^2 + \sigma_i \gamma_i h(B_t) + \tfrac{1}{2} \gamma_i^2 h(B_t)^2 \right) S_{i,t}^{(h)}, \label{eq:3.18} \\
    \sigma_{i,h}(t) &= \left( \sigma_i + \gamma_i h(B_t) \right) S_{i,t}^{(h)}. \label{eq:3.19}
\end{align}

If \(\sigma_{1,h}(t) \neq \sigma_{2,h}(t)\), the implied riskless rate \(r_t\) from the unique market price of risk \(\theta_t^{(S_1,S_2)}\):
\begin{equation} \label{eq:3.20}
    \theta_t^{(S_1,S_2)} := \frac{\mu_{1,h}(t) - r_t}{\sigma_{1,h}(t)} = \frac{\mu_{2,h}(t) - r_t}{\sigma_{2,h}(t)}
\end{equation}
solves
\begin{equation} \label{eq:3.21}
    r_t = \frac{\mu_{1,h}(t) \sigma_{2,h}(t) - \mu_{2,h}(t) \sigma_{1,h}(t)}{\sigma_{2,h}(t) - \sigma_{1,h}(t)}.
\end{equation}

\begin{remark}
Due to dependence on \(B_t\), \(r_t\) is stochastic and hence the benchmark rate is not risk-free, invalidating the use of \(\beta_t\) from Eq.~\eqref{eq:3.3} as discount factor in general.
\end{remark}

In the single asset setting \eqref{eq:3.5}--\eqref{eq:3.6}, the diffusion coefficient \(\sigma_h(t)\) vanishing would imply \(\sigma = \gamma = 0\), contradicting the model. This confirms that the classical riskless asset \( \beta \) remains unique with dynamics given by Eq.~\eqref{eq:3.3}.

The no-arbitrage condition on the market price of risk requires
\begin{equation} \label{eq:3.22}
    \theta_t = \frac{\mu_h(t) - r}{\sigma_h(t)} \neq 0, \quad \mathbb{P}\text{-a.s.}.
\end{equation}

This section rigorously extends BSM pricing to incorporate the asymmetric Brownian perturbations via CSYIP, carefully addressing the challenges introduced by path-dependence and stochastic discounting.

\section{Option Pricing in a Market with Two Perfectly Correlated Geometric Asymmetric Brownian Motions and a Risk-less Asset} \label{sec:4}

This section extends the preceding analysis to a market model \((\mathcal{S}_1, \mathcal{S}_2, \mathcal{B}, \mathcal{C})\) comprising two risky assets driven by perfectly correlated GABMs, a risk-less asset, and a European contingent claim (ECC).

\subsection{Market model setup}

The price dynamics for the risky assets \( \mathcal{S}_i, i=1,2 \), are as given in Eqs.~\eqref{eq:3.17}, \eqref{eq:3.18}, and \eqref{eq:3.19}, specifically
\begin{equation} \label{eq:4.1}
    \mathcal{S}_{i,t}^{(h)} = S_0 \exp\left( \nu_i t + \sigma_i B_t + \gamma_i C_t \right), \quad \mathcal{S}_{i,0}^{(h)} = S_0 > 0,
\end{equation}
with \( \nu_i \in \mathbb{R}, \sigma_i \neq 0, \gamma_i \in \mathbb{R} \), and \( C_t = \int_0^t h(B_s) dB_s \) as before. The risk-less asset \(\mathcal{B}\) evolves with dynamics
\begin{equation} \label{eq:4.2}      
    \beta_t = \beta_0 e^{r t}, \quad \beta_0 > 0.
\end{equation}

The price \(f_t\) of the ECC \(\mathcal{C}\) satisfies
\begin{equation} \label{eq:4.3}
    f_t = f(\mathcal{S}_{1,t}^{(h)}, \mathcal{S}_{2,t}^{(h)}, \beta_t, t), \quad t \in [0,T),
\end{equation}
with terminal payoff
$f_T = g(S_{1,T}^{(h)}, S_{2,T}^{(h)}, \beta_T),$
where \(g\) and \(f\) satisfy the regularity conditions in Section 5 of~\cite{duffie_2001}.

This two-asset model aligns with the one-factor market models described in Section 6D of~\cite{duffie_2001}, where a single Brownian driver \(B_t\) represents aggregated market shocks impacting both assets.

\begin{Lemma} (Dynamics of the pricing function) \label{Lemma:4.1}

  Applying Itô's formula to \(f_t\), under assumed differentiability, yields the SDE
\begin{equation} \label{eq:4.4}
    df_t = \mu^{(f)}(t) dt + \sigma^{(f)}(t) dB_t + \gamma^{(f)}(t) dC_t,
\end{equation}
with explicit expressions for \(\mu^{(f)}, \sigma^{(f)}, \gamma^{(f)}\) provided in Appendix~\ref{Appendix:para_4.1}.  
\end{Lemma}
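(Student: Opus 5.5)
We apply the multivariate Itô formula to $f_t = f(S_{1,t}^{(h)}, S_{2,t}^{(h)}, \beta_t, t)$. First we record the semimartingale decompositions of the arguments. By Lemma~\ref{lemma:3.1}, applied to each asset with parameters $(\nu_i,\sigma_i,\gamma_i)$, and writing $dC_t = h(B_t)\,dB_t$ from Eq.~\eqref{eq:3.2},
\begin{equation*}
dS_{i,t}^{(h)} = S_{i,t}^{(h)}\Big[\nu_i+\tfrac12\sigma_i^2+\sigma_i\gamma_i h(B_t)+\tfrac12\gamma_i^2h(B_t)^2\Big]dt + \sigma_i S_{i,t}^{(h)}\,dB_t + \gamma_i S_{i,t}^{(h)}\,dC_t .
\end{equation*}
We keep the $dB_t$ and $dC_t$ contributions separate so that the result has the form \eqref{eq:4.4}. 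The bank account satisfies $d\beta_t = r\beta_t\,dt$, which has finite variation and therefore contributes no quadratic covariation terms.

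Next we compute the brackets. Since $d\langle B\rangle_t = dt$, $d\langle B,C\rangle_t = h(B_t)\,dt$ and $d\langle C\rangle_t = h(B_t)^2\,dt$,
\begin{equation*}
d\langle S_i,S_j\rangle_t = S_{i,t}^{(h)}S_{j,t}^{(h)}\,(\sigma_i+\gamma_i h(B_t))(\sigma_j+\gamma_j h(B_t))\,dt, \qquad i,j\in\{1,2\}.
\end{equation*}
Because both assets are driven by the same $B$, the cross term $i\neq j$ does not vanish. This is the perfect correlation.

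Substituting into Itô's formula, $df = f_t\,dt + \sum_i f_{x_i}\,dS_i + f_{\beta}\,d\beta + \tfrac12\sum_{i,j} f_{x_ix_j}\,d\langle S_i,S_j\rangle$, and collecting terms gives the three coefficients:
\begin{align*}
\mu^{(f)}(t) &= \partial_t f + r\beta_t\,\partial_\beta f + \sum_{i=1}^2 \partial_{x_i}f\,\Big(\nu_i+\tfrac12\sigma_i^2+\sigma_i\gamma_i h(B_t)+\tfrac12\gamma_i^2h(B_t)^2\Big)S_{i,t}^{(h)} \\
&\quad + \tfrac12\sum_{i,j=1}^2 \partial_{x_ix_j}f\,S_{i,t}^{(h)}S_{j,t}^{(h)}(\sigma_i+\gamma_i h(B_t))(\sigma_j+\gamma_j h(B_t)), \\
\sigma^{(f)}(t) &= \sum_{i=1}^2 \sigma_i S_{i,t}^{(h)}\,\partial_{x_i}f, \\
\gamma^{(f)}(t) &= \sum_{i=1}^2 \gamma_i S_{i,t}^{(h)}\,\partial_{x_i}f .
\end{align*}
All derivatives of $f$ are evaluated at $(S_{1,t}^{(h)},S_{2,t}^{(h)},\beta_t,t)$.

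The main subtlety is bookkeeping rather than analysis. The split into $dB_t$ and $dC_t$ parts is not unique, since $dC_t = h(B_t)\,dB_t$. We resolve this by attributing every martingale contribution of $C$ to $\gamma^{(f)}$, as above. Equivalently, one can state the combined diffusion coefficient $\sigma^{(f)}+\gamma^{(f)}h(B_t)$.

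Regularity has to be justified as well. Itô's formula requires $f\in C^{2,2,1,1}$, which is assumed. The integrand $h$ is the bounded NIG density, so $C$ is a genuine continuous martingale and every term is well defined. For a merely piecewise-continuous $h$ the same computation still holds, because Itô's formula is applied to $f$ of semimartingales, and $h$ enters only through the integrands.
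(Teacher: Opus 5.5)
Your proposal is correct and follows essentially the same route as the paper (Appendices~\ref{Appendix:para_4.1} and~\ref{appendix:5}). You apply the multivariate It\^o formula with $dS_{i,t}^{(h)}$ split into its $dB_t$ and $dC_t$ parts, use the brackets $(\sigma_i+\gamma_i h(B_t))(\sigma_j+\gamma_j h(B_t))S_{i,t}^{(h)}S_{j,t}^{(h)}\,dt$, and let the finite-variation bank account contribute only a first-order term, which reproduces exactly the paper's $\mu^{(f)}$, $\sigma^{(f)}$ and $\gamma^{(f)}$; the one cosmetic difference is that you write $r\beta_t\,\partial_\beta f$, consistent with the constant-rate bank account of Section~\ref{sec:4}, where the appendix writes $r_t\beta_t\,\partial f/\partial y$, and your computation goes through unchanged with $r$ replaced by $r_t$.
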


\subsection{Replicating portfolio}

Define a self-financing portfolio \((a_{1,t}, a_{2,t}, b_t)\) replicating \(\mathcal{C}\):
\begin{equation} \label{eq:4.5}
    P_t = a_{1,t} \mathcal{S}_{1,t}^{(h)} + a_{2,t} \mathcal{S}_{2,t}^{(h)} + b_t \beta_t = f_t, \quad t \in [0,T).
\end{equation}

The self-financing condition implies
\begin{equation} \label{eq:4.6}
    dP_t = a_{1,t} d\mathcal{S}_{1,t}^{(h)} + a_{2,t} d\mathcal{S}_{2,t}^{(h)} + b_t d\beta_t = df_t.
\end{equation}

Using the dynamics of \(\mathcal{S}_i^{(h)}\):
\begin{equation} \label{eq:4.7}
\begin{split}
    df_t = {}& a_{1,t} \left( \mu_1(B_t) \mathcal{S}_{1,t}^{(h)} dt + \sigma_1 \mathcal{S}_{1,t}^{(h)} dB_t + \gamma_1 \mathcal{S}_{1,t}^{(h)} dC_t \right) \\
    & + a_{2,t} \left( \mu_2(B_t) \mathcal{S}_{2,t}^{(h)} dt + \sigma_2 \mathcal{S}_{2,t}^{(h)} dB_t + \gamma_2 \mathcal{S}_{2,t}^{(h)} dC_t \right) \\
    & + b_t r_t \beta_t dt,
\end{split}
\end{equation}
where \(\mu_i(B_t), \sigma_i, \gamma_i\) correspond to the drift, volatility, and asymmetry parameters.

Matching coefficients in \eqref{eq:4.4}-\eqref{eq:4.7} leads to
\begin{equation} \label{eq:4.8}
    a_{1,t} = \frac{\partial f}{\partial x_1}, \quad a_{2,t} = \frac{\partial f}{\partial x_2},
\end{equation}
and, from \eqref{eq:4.5},
\begin{equation} \label{eq:4.9}
    b_t = \frac{1}{\beta_t} \left[ f - \frac{\partial f}{\partial x_1} \mathcal{S}_{1,t}^{(h)} - \frac{\partial f}{\partial x_2} \mathcal{S}_{2,t}^{(h)} \right].
\end{equation}

Setting \(\mathcal{S}_{1,t}^{(h)} = x_1\), \(\mathcal{S}_{2,t}^{(h)}=x_2\), \(\beta_t = y\), and \(B_t = b(t)\), the pricing function \(f(x_1, x_2, y, t)\) satisfies the PDE:
\begin{equation} \label{eq:4.10}
\begin{split}
    \frac{\partial f}{\partial t} &- r_t f + r_t x_1 \frac{\partial f}{\partial x_1} + r_t x_2 \frac{\partial f}{\partial x_2} + \frac{\partial f}{\partial y} r_t \beta_0 e^{\int_0^t r_u du} \\
    & + \frac{1}{2} \frac{\partial^2 f}{\partial x_1^2} \left( \sigma_1^2 + \gamma_1^2 h^2 + 2 \sigma_1 \gamma_1 h \right) x_1^2 + \frac{1}{2} \frac{\partial^2 f}{\partial x_2^2} \left( \sigma_2^2 + \gamma_2^2 h^2 + 2 \sigma_2 \gamma_2 h \right) x_2^2 \\
    & + \frac{\partial^2 f}{\partial x_1 \partial x_2} \left( \sigma_1 \sigma_2 + (\sigma_1 \gamma_2 + \gamma_1 \sigma_2) h + \gamma_1 \gamma_2 h^2 \right) x_1 x_2 = 0,
\end{split}
\end{equation}
with \(h = h(b(t))\).

\begin{remark}
When the constant risk-free rate \(r\) is replaced by the stochastic rate \(r_t\) defined in Eq.~\eqref{eq:3.21}, the PDE generalizes to Eq.~\eqref{eq:4.10}. The time dependence of \(r_t\) and its parameter dependencies are detailed in Appendix~\ref{appendix:time_dependent_rt}. If the risk-less asset \(\mathcal{B}\) with these dynamics is unavailable for trade, this PDE governs option pricing~\cite{Rachev_2017}.
\end{remark}

Although the model \((\mathcal{S}_1, \mathcal{S}_2, \mathcal{B}, \mathcal{C})\) is complete and arbitrage-free, the dependence of the option PDE on the path \(B_t\) remains despite the presence of two risky assets driven by the same Brownian motion. The risk-less asset’s dynamics are induced by the dynamics of \(\mathcal{S}_1\) and \(\mathcal{S}_2\), rendering the explicit use of \(\mathcal{B}\) for replication redundant. This demonstrates that, while the two GABM assets and a risk-free security create a formally complete market, the natural filtration includes the Brownian path \(B_t\), which cannot be suffiently summarized by the asset prices alone to eliminate path-dependence from the pricing PDE.

\section{Option Pricing in a Market with Two Perfectly Correlated Geometric Asymmetric Brownian Motions without a Riskless Asset} \label{sec:5}

We consider the market model \((\mathcal{S}_1, \mathcal{S}_2, \mathcal{C})\) where the riskless asset \(\mathcal{B}\) is unavailable for hedging purposes. The dynamics of the tradable assets are specified as in Eqs.~\eqref{eq:3.1}--\eqref{eq:3.2}, explicitly:
\begin{equation} \label{Eq:5.1}
   S_{i,t}^{(h)} = S_0 \exp\left( \nu_i t + \sigma_i B_t + \gamma_i C_t \right), \quad S_{i,0}^{(h)} = S_0 > 0,
\end{equation}
with parameters \(0 \neq \sigma_i \in \mathbb{R}\), and \(\nu_i, \gamma_i \in \mathbb{R}\), where  
\[
C_t = \int_0^t h(B_s) \, dB_s,
\] 
with the function \(h\) as defined previously.

\begin{Lemma} \label{Lemma:5.1}
For a sufficiently smooth contingent claim \(f_t = f(S_{1,t}^{(h)}, S_{2,t}^{(h)}, t)\), It\^o's formula gives the dynamics
\begin{equation} \label{eq:5.2}
    df_t = \mu^{(f)}(t) dt + \sigma^{(f)}(t) dB_t + \gamma^{(f)}(t) dC_t,
\end{equation}
where explicit formulations for \(\mu^{(f)}(t), \sigma^{(f)}(t), \gamma^{(f)}(t)\) are provided in Appendix~\ref{appendix:5}.
\end{Lemma}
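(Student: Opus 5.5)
The plan is to apply the multivariate It\^o formula to $f(x_1,x_2,t)$ evaluated along $(S_{1,t}^{(h)},S_{2,t}^{(h)},t)$. Before doing that, we need the semimartingale decomposition of each asset in terms of $dt$, $dB_t$ and $dC_t$.

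\textbf{Step 1: dynamics of each asset.} Write $S_{i,t}^{(h)}=S_0\exp(U_{i,t})$ with
\[
U_{i,t}=\nu_i t+\sigma_i B_t+\gamma_i C_t .
\]
Since $dC_t=h(B_t)\,dB_t$, we have $d\langle B\rangle_t=dt$, $d\langle B,C\rangle_t=h(B_t)\,dt$ and $d\langle C\rangle_t=h(B_t)^2\,dt$. Hence
\[
d\langle U_i,U_j\rangle_t=(\sigma_i+\gamma_i h)(\sigma_j+\gamma_j h)\,dt, \qquad h=h(B_t).
\]
It\^o's formula for the exponential, exactly as in Lemma~\ref{lemma:3.1}, then gives
\[
dS_{i,t}^{(h)}=S_{i,t}^{(h)}\Big[\big(\nu_i+\tfrac12(\sigma_i+\gamma_i h)^2\big)dt+\sigma_i\,dB_t+\gamma_i\,dC_t\Big].
\]
We deliberately keep $dB_t$ and $dC_t$ separate, since the statement is written in that form.

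\textbf{Step 2: It\^o's formula for $f$.} Under the smoothness assumption ($f\in C^{2,2,1}$), we have
\[
df_t=\partial_t f\,dt+\sum_{i}\partial_{x_i}f\,dS_{i,t}^{(h)}+\tfrac12\sum_{i,j}\partial^2_{x_ix_j}f\,d\langle S_i,S_j\rangle_t ,
\]
with $d\langle S_i,S_j\rangle_t=S_{i,t}^{(h)}S_{j,t}^{(h)}(\sigma_i+\gamma_i h)(\sigma_j+\gamma_j h)\,dt$. Collecting terms yields the three coefficients:
\begin{align*}
\mu^{(f)}(t)&=\partial_t f+\sum_i \partial_{x_i}f\,\big(\nu_i+\tfrac12(\sigma_i+\gamma_i h)^2\big)S_{i,t}^{(h)}+\tfrac12\sum_{i,j}\partial^2_{x_ix_j}f\,(\sigma_i+\gamma_i h)(\sigma_j+\gamma_j h)S_{i,t}^{(h)}S_{j,t}^{(h)},\\
\sigma^{(f)}(t)&=\sum_i\sigma_i S_{i,t}^{(h)}\partial_{x_i}f,\\
\gamma^{(f)}(t)&=\sum_i\gamma_i S_{i,t}^{(h)}\partial_{x_i}f.
\end{align*}
These should be checked against Appendix~\ref{appendix:5}. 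Expanding the quadratic terms reproduces the second-order coefficients already seen in \eqref{eq:4.10}.

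\textbf{Main obstacle.} The main difficulty is justifying that $C_t$ is a legitimate continuous semimartingale to which It\^o's formula applies. Here $h$ is the NIG density \eqref{eq:15}, which is bounded and continuous. Therefore $\int_0^t h(B_s)^2\,ds\le t\|h\|_\infty^2<\infty$, so $C$ is a square-integrable continuous martingale and $U_i$ is a continuous semimartingale.

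For a general PCF this boundedness need not hold. In that case I would instead assume local boundedness of $h$ and use a localization argument: stop at $\tau_N=\inf\{t:|B_t|\ge N\}$, apply It\^o's formula on $[0,\tau_N]$, and let $N\to\infty$.

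A secondary point is that the decomposition into $dB_t$ and $dC_t$ is not unique, because $dC_t=h(B_t)\,dB_t$. The statement should therefore be read as giving one admissible representation, namely the one obtained by keeping $\sigma_i\,dB_t$ and $\gamma_i\,dC_t$ separate in Step 1.
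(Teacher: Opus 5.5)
Your proposal is correct and follows essentially the paper's argument in Appendix~\ref{appendix:5}: you write $dS_{i,t}^{(h)}$ with separate $dt$, $dB_t$, $dC_t$ terms, apply the two-variable It\^o formula with $d[S_i^{(h)},S_j^{(h)}]_t=S_{i,t}^{(h)}S_{j,t}^{(h)}(\sigma_i+\gamma_i h(B_t))(\sigma_j+\gamma_j h(B_t))\,dt$, and collect terms, which reproduces the coefficients of Appendix~\ref{Appendix:para_4.1} without the $\partial f/\partial y$ term. Your additional remarks are sound and go beyond what the paper states: the NIG density is bounded, so $C$ is a genuine continuous martingale; and since $dC_t=h(B_t)\,dB_t$, the $dB_t$/$dC_t$ split is not unique, which is the relevant caveat for the paper's later separate matching of those coefficients.
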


A self-financing portfolio with holdings \(a_{1,t}, a_{2,t}\) in \(S_{1,t}^{(h)}\) and \(S_{2,t}^{(h)}\) respectively, and no position in a riskless asset, satisfies
\begin{equation} \label{eq:5.3}
    P_t = a_{1,t} S_{1,t}^{(h)} + a_{2,t} S_{2,t}^{(h)} = f(S_{1,t}^{(h)}, S_{2,t}^{(h)}, t).
\end{equation}

Matching stochastic terms of \(df_t\) in Eq.~\eqref{eq:5.2} with the portfolio dynamics yields the unique admissible trading strategy:
\begin{equation} \label{eq:5.4}
    a_{1,t} = \frac{\partial f(S_{1,t}^{(h)}, S_{2,t}^{(h)}, t)}{\partial x_1}, \quad
    a_{2,t} = \frac{\partial f(S_{1,t}^{(h)}, S_{2,t}^{(h)}, t)}{\partial x_2}.
\end{equation}
This implies the replicating condition
\[
\frac{\partial f}{\partial x_1} S_{1,t}^{(h)} + \frac{\partial f}{\partial x_2} S_{2,t}^{(h)} = f(S_{1,t}^{(h)}, S_{2,t}^{(h)}, t).
\]

Following the arguments in Section~\ref{sec:4}, the value function \(f(x_1, x_2, t)\) satisfies the nonlinear PDE
\begin{equation} \label{eq:5.5}
\begin{aligned}
    \frac{\partial f}{\partial t} 
    &+ \tfrac{1}{2} \frac{\partial^2 f}{\partial x_1^2}
    \left[ \sigma_1^2 + \gamma_1^2 h(b(t))^2 + 2 \sigma_1 \gamma_1 h(b(t)) \right] x_1^2 \\
    &+ \tfrac{1}{2} \frac{\partial^2 f}{\partial x_2^2}
    \left[ \sigma_2^2 + \gamma_2^2 h(b(t))^2 + 2 \sigma_2 \gamma_2 h(b(t)) \right] x_2^2 \\
    &+ \frac{\partial^2 f}{\partial x_1 \partial x_2}
    \left[ \sigma_1 \sigma_2 + (\sigma_1 \gamma_2 + \sigma_2 \gamma_1) h(b(t)) + \gamma_1 \gamma_2 h(b(t))^2 \right] x_1 x_2 = 0,
\end{aligned}
\end{equation}
with terminal condition $f(x_1, x_2, T) = g(x_1, x_2).$

The Feynman--Kac formula characterizes the unique solution to Eq.~\eqref{eq:5.5}. Notably, this market model lacks an explicit risk-free rate because no riskless asset is available for hedging. Instead, hedging occurs solely through the two perfectly correlated risky assets.

\begin{remark}
Despite absence of an explicit discount rate in the pricing PDE, an endogenous riskless rate can be recovered from the dynamics of the assets and option, as shown in Eq.~\eqref{eq: 4.26_appendix}.
\end{remark}

\section{Binomial Option Pricing under Asymmetric Random Walk} \label{sec:6}

This section develops a binomial option pricing framework under an asymmetric random walk constructed via the CSYIP as established in Proposition~\ref{prop:2.1}, drawing on the approaches in \cite{Hu_2020a} \& \cite{Hu_2020b}. The model encompasses classical Bachelier and Black--Scholes--Merton (BSM) models as special or limiting cases, extended by an asymmetric component that captures skewness beyond standard formulations.

\subsection{Asset price dynamics}

Define the discrete risky asset price on the time grid \([0,T]\) with step size \(\Delta t = \frac{T}{n}\), \(n \in \mathbb{N}\):
\begin{equation} \label{eq:6.1}
    S_t^{(n,h)} = S_0 \exp \left( \nu k \Delta t + \sigma \sqrt{\Delta t} \sum_{j=1}^k \xi_j + \gamma \sqrt{\Delta t} \sum_{j=1}^k h_T\left( \sqrt{\Delta t} \sum_{i=1}^{j-1} \xi_i \right) \xi_j \right), 
\end{equation}
where  \(t \in [k\Delta t, (k+1)\Delta t),\) and \(\{\xi_j\}_{j=1}^n\) are i.i.d.\ symmetric Bernoulli random variables (\(\mathbb{P}(\xi_j = \pm 1) = \frac{1}{2}\)) and \(h_T\) is a scaled piecewise-continuous function encoding asymmetric, state-dependent effects. By the CSYIP, the process \(S_t^{(n,h)}\) converges weakly to the GABM:
\begin{equation} \label{eq:6.2}
    S_t^{(h)} = S_0 \exp \left( \nu t + \sigma B_t + \gamma \int_0^t h_T(B_v) \, dB_v \right),
\end{equation}
where \(B_t\) is standard Brownian motion.

\begin{remark}
Setting \(\gamma=0\) recovers the classical BSM model, while alternative scalings and choices of \(h_T\) correspond to the Bachelier model. Thus, the framework unifies and generalizes prominent pricing models incorporating asymmetric, path-dependent volatility.
\end{remark}

\subsection{Option prices formula}

The model exhibits path-dependent discrete-time volatility at step \(k\):
\begin{equation} \label{eq:path_dep_Vol_MB}
    \eta_{k,\Delta t} = \sigma + \gamma h_T \left( \sqrt{\Delta t} \sum_{i=1}^k \xi_i \right),
\end{equation}
reflecting dependence on the cumulative history of the random signs.

Define a self-financing hedging portfolio \(P_{k\Delta t}^{(n,h)}\) at time \(k\Delta t\) by
\[
P_{k \Delta t}^{(n,h)} = D_{k \Delta t}^{(n,h)} S_{k \Delta t}^{(n,h)} - f_{k \Delta t}^{(n,h)},
\]
where the hedge ratio is given by
\begin{equation} \label{eq:6.6_MB}
    D_{k \Delta t}^{(n,h)} = \frac{f_{(k+1) \Delta t}^{(n,h,u)} - f_{(k+1) \Delta t}^{(n,h,d)}}{S_{k \Delta t}^{(n,h)} \exp(\nu \Delta t) \left( \exp(\eta_{k,\Delta t} \sqrt{\Delta t}) - \exp(-\eta_{k,\Delta t} \sqrt{\Delta t}) \right)},
\end{equation}
with the backward recurrence for option prices:
\begin{equation} \label{eq:6.7_MB}
    f_{k \Delta t}^{(n,h)} = e^{-r \Delta t} \left[ q_{k \Delta t}^{(n,h)} f_{(k+1) \Delta t}^{(n,h,u)} + (1 - q_{k \Delta t}^{(n,h)}) f_{(k+1) \Delta t}^{(n,h,d)} \right].
\end{equation}

The risk-neutral probabilities are state-dependent:
\begin{equation} \label{eq:6.8_MB}
    q_{k \Delta t}^{(n,h)} = \frac{e^{r \Delta t} - e^{-\eta_{k,\Delta t} \sqrt{\Delta t}}}{e^{\eta_{k,\Delta t} \sqrt{\Delta t}} - e^{-\eta_{k,\Delta t} \sqrt{\Delta t}}}.
\end{equation}

As \(n \to \infty\), the discrete-time \(\mathbb{Q}\)-price process \(S_t^{(n,h,\mathbb{Q})}\) converges in the Skorokhod space \(\mathcal{D}[0,T]\) to the continuous-time GABM \(S_t^{(h)}\) of Eq.~\eqref{eq:6.2}, realizing a diffusion with path-dependent volatility.

This approach enables effective numerical pricing of European contingent claims under asymmetric volatility effects driven by path-dependent Brownian functionals.

\begin{remark}
When \(\gamma=0\), the model reduces to the classical Cox--Ross--Rubinstein (CRR) binomial tree. The asymmetric extension captures skewness and volatility state-dependence, facilitating more realistic option valuation.
\end{remark}

\section{Application to Empirical Data} \label{subsec:empirical}

This section implements the empirical component of our study by applying the GABM framework to real-world options data. The objective is to evaluate both the numerical tractability of the model and its market relevance.

\subsection{Construction of price surfaces and calibration pipeline}

We construct theoretical European call and put price surfaces on a binomial sign-path lattice with nodewise, state-dependent volatility as described in Eq.~\eqref{eq:path_dep_Vol_MB}. At each maturity–strike node \((T_i, K_j)\), we recover a state-dependent implied-volatility surface by minimizing relative pricing errors against observed market prices.

The asymmetric diffusion component is encoded by the function \(h(\cdot)\) defined in Eq.~\eqref{eq:15}, with baseline parameters fixed to 
$\mu=0, \quad \alpha=8, \quad \beta=2, \quad \delta=0.5, \quad \gamma=0.35,$
chosen to represent stylized market features such as smile and skew effects.
This calibration scheme tests whether the state-dependent asymmetry captured by \(h(\cdot)\) can replicate important volatility surface phenomena (smiles, skews, and term structure) while preserving internal consistency such as absence of static arbitrage across strikes and maturities.

\subsection{Data sources and preprocessing}

The empirical evaluation is implemented on options data for three liquid U.S. equities: Apple (AAPL), Amazon (AMZN), and Microsoft (MSFT), selected from the “Magnificent Seven” roster to ensure cross-asset robustness within a homogeneous large-cap cohort. Spot prices \(S_0\) correspond to closing prices on 21~February~2025\footnote{The parameters were computed using an eight-year historical window.}. Time to maturities are computed on an ACT/252 basis unless otherwise specified. The risk-free interest rate is proxied by the continuously compounded yield of the 10-year U.S. Treasury on the valuation date, held constant across maturities for simplicity. Extension to a full term structure, \(r(T)\), is straightforward and does not alter the methodology.

Calls are used exclusively due to parity and exercise considerations—European and American call option prices coincide whereas put prices may differ. Supplementary data snapshots on 22 February 2025 support the cross-section construction for 21 February 2025.\footnote{Option and stock price data accessed via Bloomberg Professional Services~\cite{Gnawali_2025}.}

The raw option price matrix is refined by:
\begin{itemize}
    \item Removing stale and zero-volume quotes, as well as outliers.
    \item Applying shape-preserving piecewise cubic Hermite interpolating polynomial (PCHIP) interpolation firstly along maturities, then strikes, to overcome sparsity.
    \item Enforcing no-arbitrage constraints across strikes and maturities.
    \item Mapping prices onto a common grid of \((T,K)\) values, equivalently \((T,M)\) with moneyness \(M = \frac{K}{S_0}\).
\end{itemize}

Recall the continuous-time GABM price dynamics from Eq.~\eqref{eq:6.2}:
\[
S_t^{(h)} = S_0 \exp \left( \nu t + \sigma B_t + \gamma C_t \right), \quad C_t = \int_0^t h(B_s) \, dB_s,
\]
with lattice approximation specified by Eq.~\eqref{eq:6.1}:
\begin{equation}
S^{(n,h)}_{(k+1)\Delta t} = S^{(n,h)}_{k\Delta t} \exp\left( \nu \Delta t + \eta_{k,\Delta t} \sqrt{\Delta t} \, \xi_{k+1} \right),
\quad \eta_{k,\Delta t} = \sigma + \gamma h_T\left( \sqrt{\Delta t} \sum_{i=1}^k \xi_i \right),
\label{eq:lattice-dynamics-recall}
\end{equation}
where \(\xi_i \in \{ -1, +1 \}\) are i.i.d. symmetric signs, and up/down factors at node \((k,m)\) derive from $
u_k = \exp(\eta_{k,\Delta t} \sqrt{\Delta t}), \quad d_k = \exp(-\eta_{k,\Delta t} \sqrt{\Delta t}).$

The risk-neutral nodewise probabilities, which preserve the martingale property of discounted asset prices, are given by Eq.~\eqref{eq:6.8_MB}:
\begin{equation}  \label {eq:q-node}
   q_{k \Delta t} = \frac{e^{r \Delta t} - e^{-\eta_{k,\Delta t} \sqrt{\Delta t}}}{e^{\eta_{k,\Delta t} \sqrt{\Delta t}} - e^{-\eta_{k,\Delta t} \sqrt{\Delta t}}} \approx \frac{1}{2} + \frac{r - \eta_{k,\Delta t}^2/2}{2 \eta_{k,\Delta t}} \sqrt{\Delta t}, 
\end{equation}

where the approximation holds for small \(\Delta t\).

\subsection{Recombining lattice and numerical stability}

The sum of the signs \(\sum_{i=1}^k \xi_i = 2m - k\) is uniquely determined by the node index \((k,m)\), ensuring that the lattice is recombining when indexed by \((k,m)\). This structural property enables tractable dynamic programming and reduces computational complexity required for pricing.
The asymmetric random walk-based binomial tree thus efficiently captures path-dependent volatility skew while preserving computational feasibility similar to classical binomial models.

This numerical pipeline allows us to build flexible implied volatility surfaces reflecting skewness and state dependence observed in market data, while maintaining no-arbitrage conditions and stability in calibration. It provides a practically implementable tool to test the empirical relevance of the GABM model in cross-sectional option markets.

\subsection{Option prices}  \label{sec:OpPrices}

In computations on large \((T,K)\) grids, we price European options by Monte Carlo simulation of binomial sign paths on the risk–neutral lattice defined by Eqs.~\eqref{eq:lattice-dynamics-recall}–\eqref{eq:q-node}. For each \((T,K)\) we simulate \(N_{\text{paths}}\) sequences \(\{\xi_i\}_{i=1}^n\), evolve \(S^{(n,h)}\) with the corresponding nodewise \(\eta_{k,\Delta t}\), and estimate discounted payoffs,

\begin{align*}
C(S_0,K,T)
  &\approx e^{-rT}\,\frac{1}{N_{\text{paths}}}
     \sum_{\ell=1}^{N_{\text{paths}}}\!\big(S_T^{(\ell)}-K\big)^+,\\[1ex]
P(S_0,K,T)
  &\approx e^{-rT}\,\frac{1}{N_{\text{paths}}}
     \sum_{\ell=1}^{N_{\text{paths}}}\!\big(K-S_T^{(\ell)}\big)^+ .
\end{align*}

Indexing nodes by \((k,m)\) (time level \(k\) and number of up moves \(m\)), the local volatility becomes \(\eta_{k,m}\), with \(u_{k,m}=\exp(\eta_{k,m}\sqrt{\Delta t})\), \(d_{k,m}=\exp(-\eta_{k,m}\sqrt{\Delta t})\), and \(q_{k,m}\) as in Eq.~\eqref{eq:q-node}. European options can then be priced by backward induction on the \((n{+}1)\)-level \emph{recombining} tree. For a call with strike \(K\) and \(T=n\Delta t\), terminal payoffs are \(C_{n,m}=\max\{S_{n,m}-K,0\}\), and interior values satisfy
\[
C_{k,m}=e^{-r\Delta t}\Big(q_{k,m}\,C_{k+1,m+1}+(1-q_{k,m})\,C_{k+1,m}\Big), \qquad k=n-1,\ldots,0,
\]
with the analogous recursion for puts,
\[
P_{n,m}=\max\{K-S_{n,m},0\}, \quad  
P_{k,m}=e^{-r\Delta t}\Big(q_{k,m}\,P_{k+1,m+1}+(1-q_{k,m})\,P_{k+1,m}\Big).
\]
European put prices may equivalently be recovered from put-call parity
\[C^{E}(S_0,K,T)-P^{E}(S_0,K,T)=S_0-K e^{-rT},\] 
and the standard max operator applies nodewise when early exercise is relevant. In this paper we produce figures via Monte Carlo over sign paths for scalability on large \((T,K)\) grids; the backward-induction formulation provides an equivalent dynamic-programming view of the same lattice.

To stabilize estimates we cap per-step log-moves \(|a_k|:=|\eta_{k,\Delta t}\sqrt{\Delta t}|\le A_{\max}\), anchor the \(T{=}0\) boundary with exact payoffs, and apply a product Gaussian kernel over \((T,M)\) to regularize simulation/quote noise. No-arbitrage clamps (e.g., \(0\le C\le S_0\), \(0\le P\le K e^{-rT}\), convexity in \(K\)) are enforced post-smoothing to preserve economic shape. An equivalent backward–induction formulation can also be written by indexing nodes as \((k,m)\), defining \(\eta_{k,m}\), \(u_{k,m}\), \(d_{k,m}\), and \(q_{k,m}\), and applying the usual dynamic-programming recursions for calls and puts; in this paper, the figures are produced by the Monte Carlo method, while the dynamic–programming scheme provides an alternative view of the same lattice.

Let $C^{(\text{emp})}(S_0,T_i,K_j)$, $i = 1, ..., I$, $j = 1, ..., J$,
denote published prices for a call option having $\mathcal{S}$ as the underlying.
Let
$C^{(\text{th})}\!\big(S_0,T_i,K_j;\,\eta_{k,\Delta t},\, r\big)$
denote the respective theoretical option prices computed from the model. 
We computed theoretical call option prices on $t = $ 02/21/2025
for maturity times corresponding to trading dates $t+T$, $T = 1, ..., T_I$ 
and strike prices $K \in \{K_1, ..., K_J\}$.

\begin{table}
\caption{Parameter values computed from the historical data}
	\label{tab:params}
\centering
\begin{tabular}{l c c c}
\toprule
Stock   & $S_0$ & $\sigma$ & $r$ \\
\midrule
AAPL   &  245.55 & 0.0180 & 0.0432 \\
AMZN   & 216.58 &  0.0202 & \omit \\
MSFT   & 408.21 &  0.0170 & \omit \\
\bottomrule
\end{tabular}
\end{table}

\begin{figure}[htbp]
  \centering
  \includegraphics[width=0.32\linewidth]{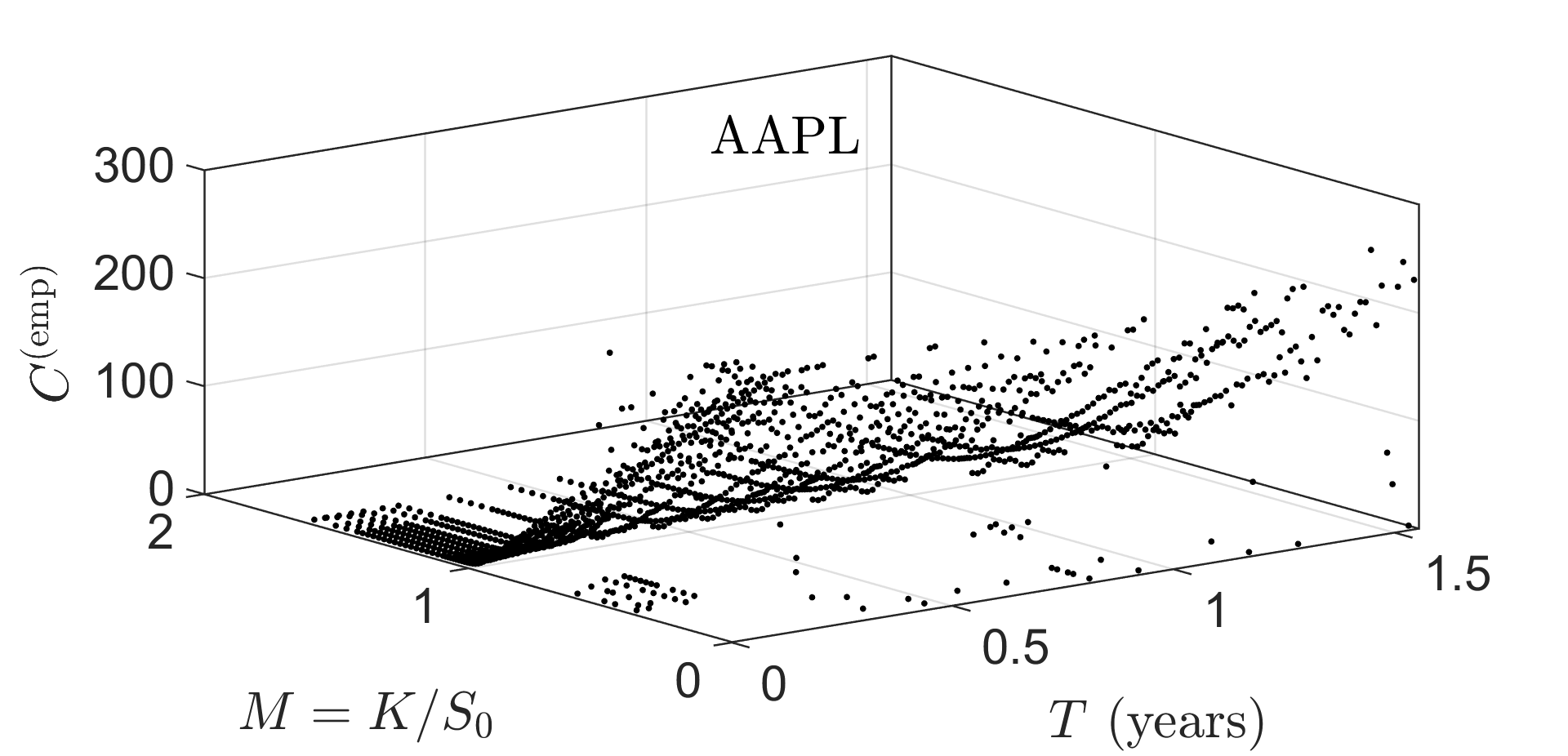}
  \includegraphics[width=0.32\linewidth]{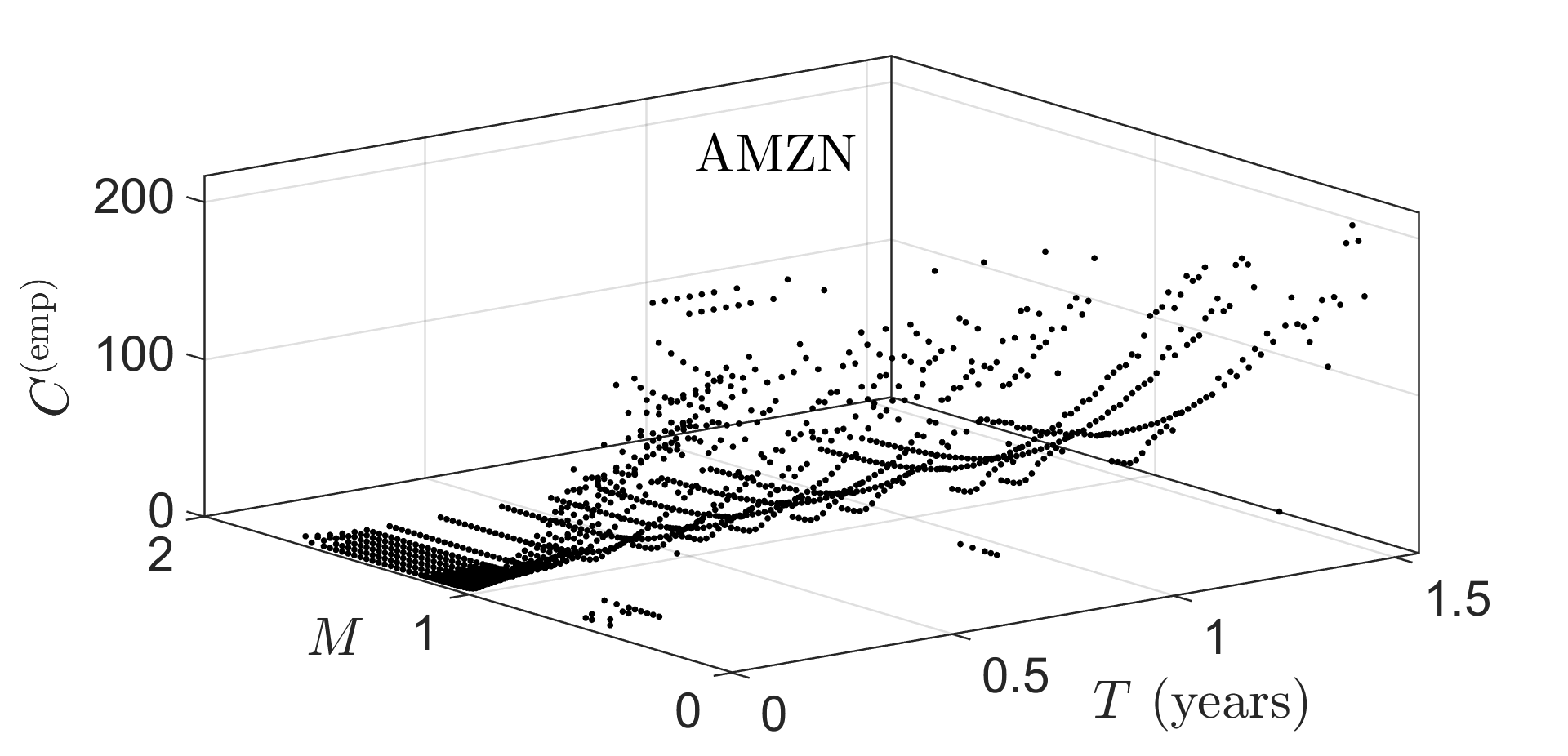}
  \includegraphics[width=0.32\linewidth]{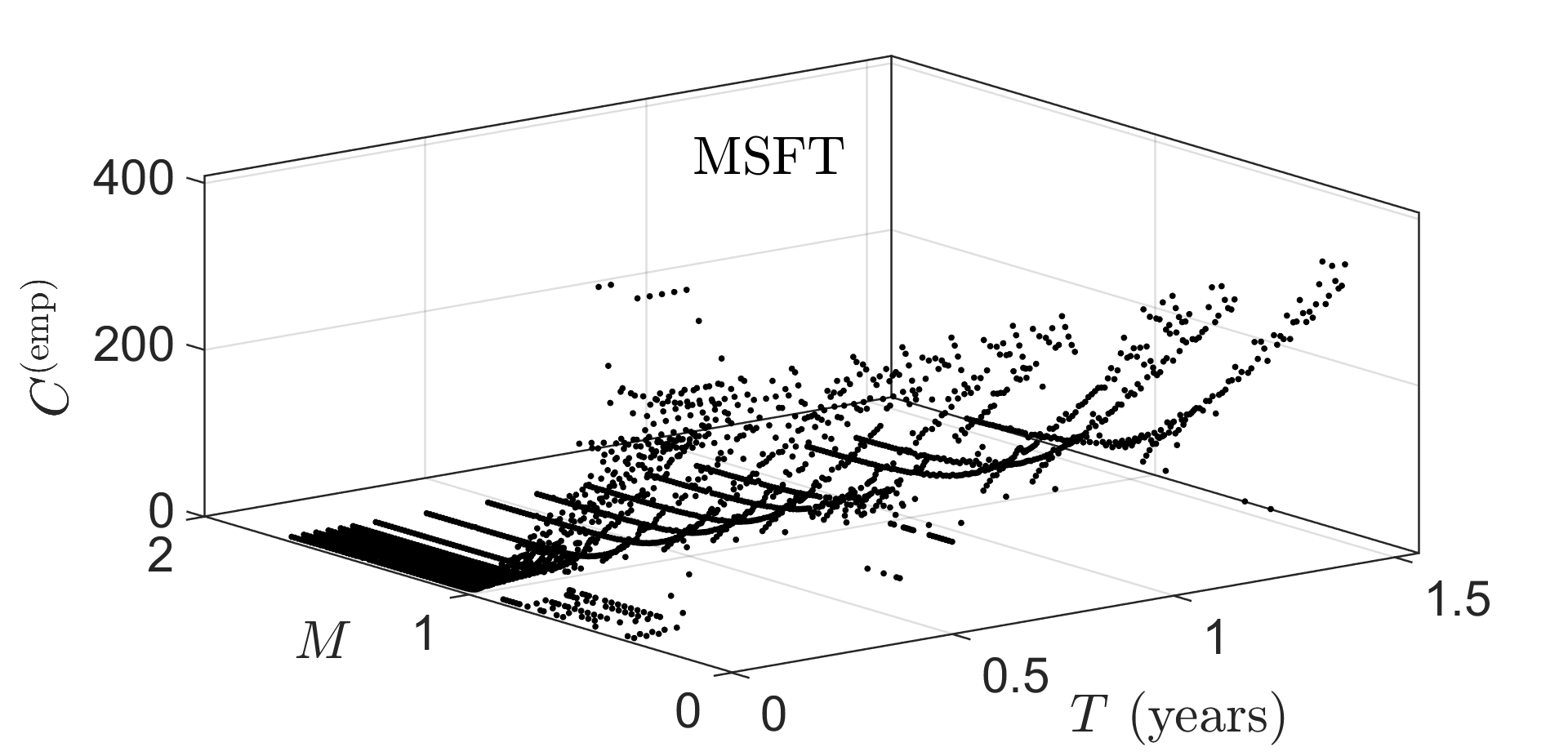}
  \includegraphics[width=0.32\linewidth]{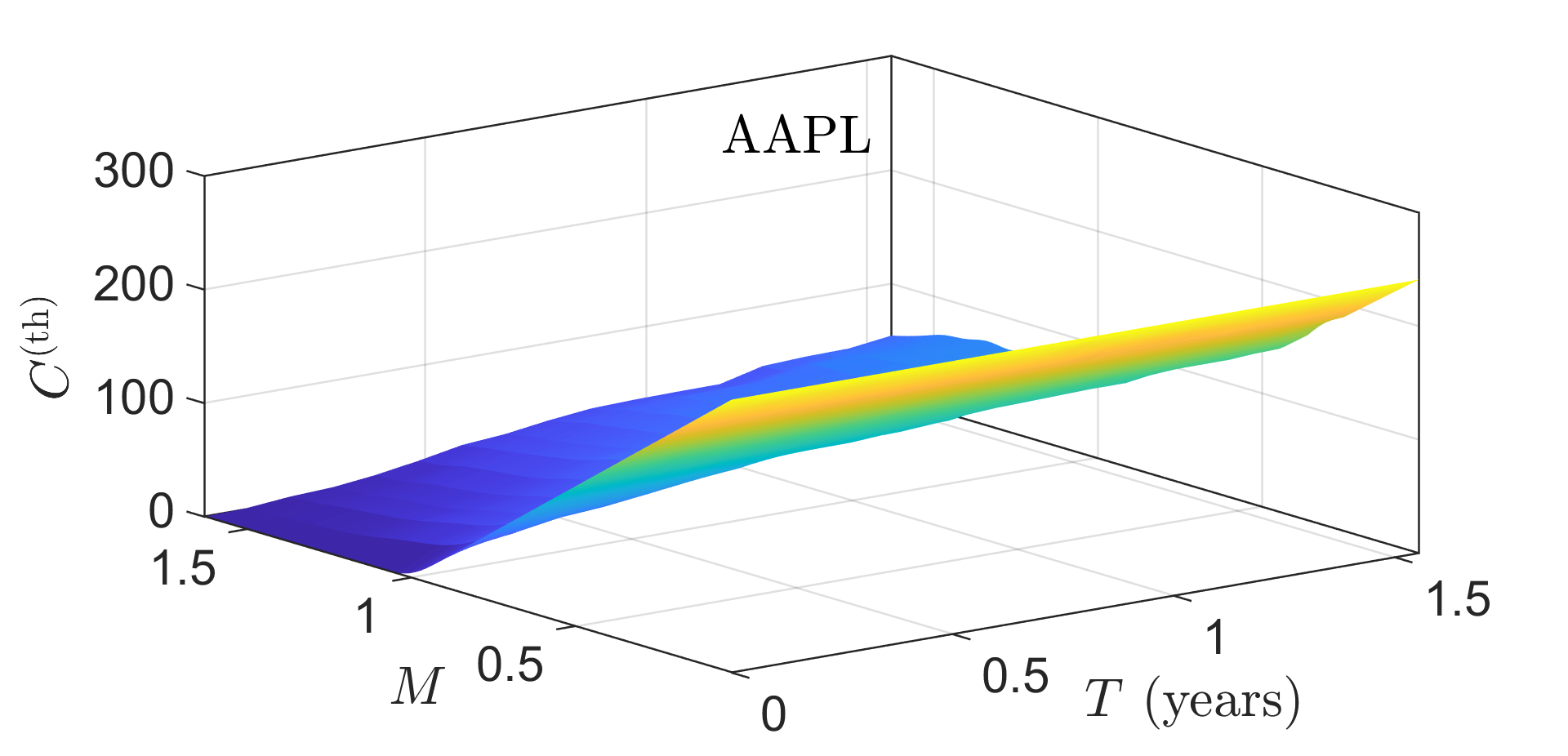}
  \includegraphics[width=0.32\linewidth]{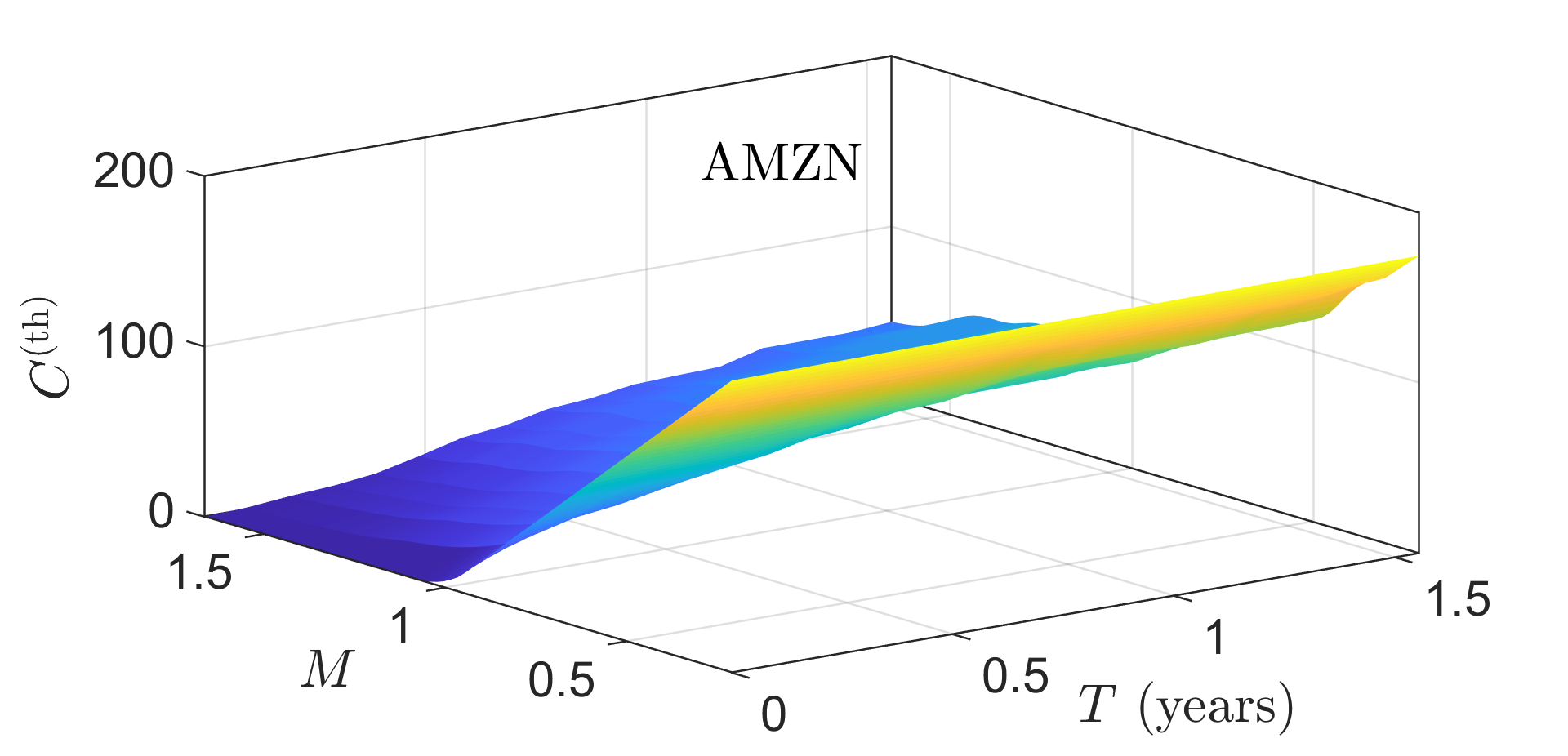}
   \includegraphics[width=0.32\linewidth]{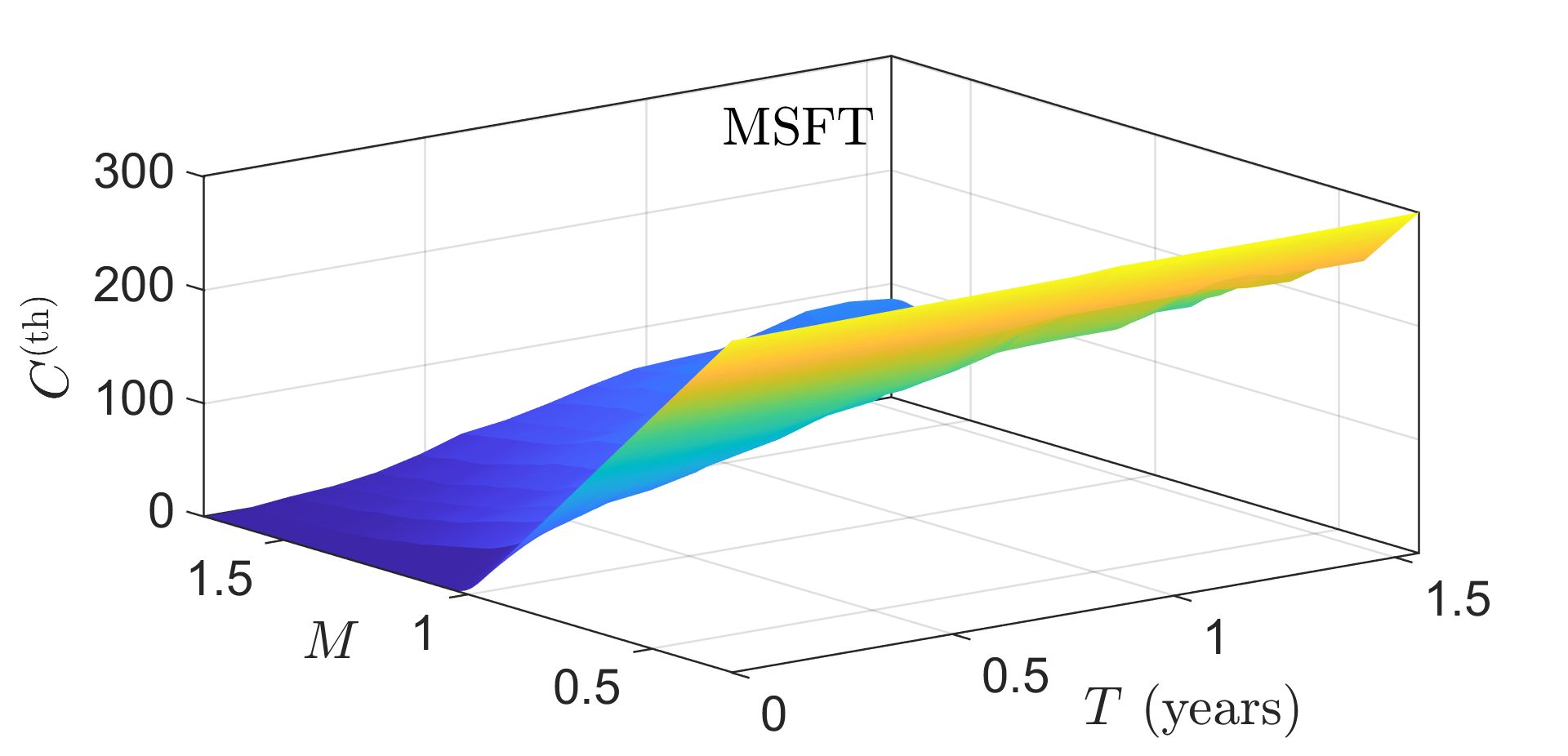}
 \includegraphics[width=0.32\linewidth, height=0.15\textheight]{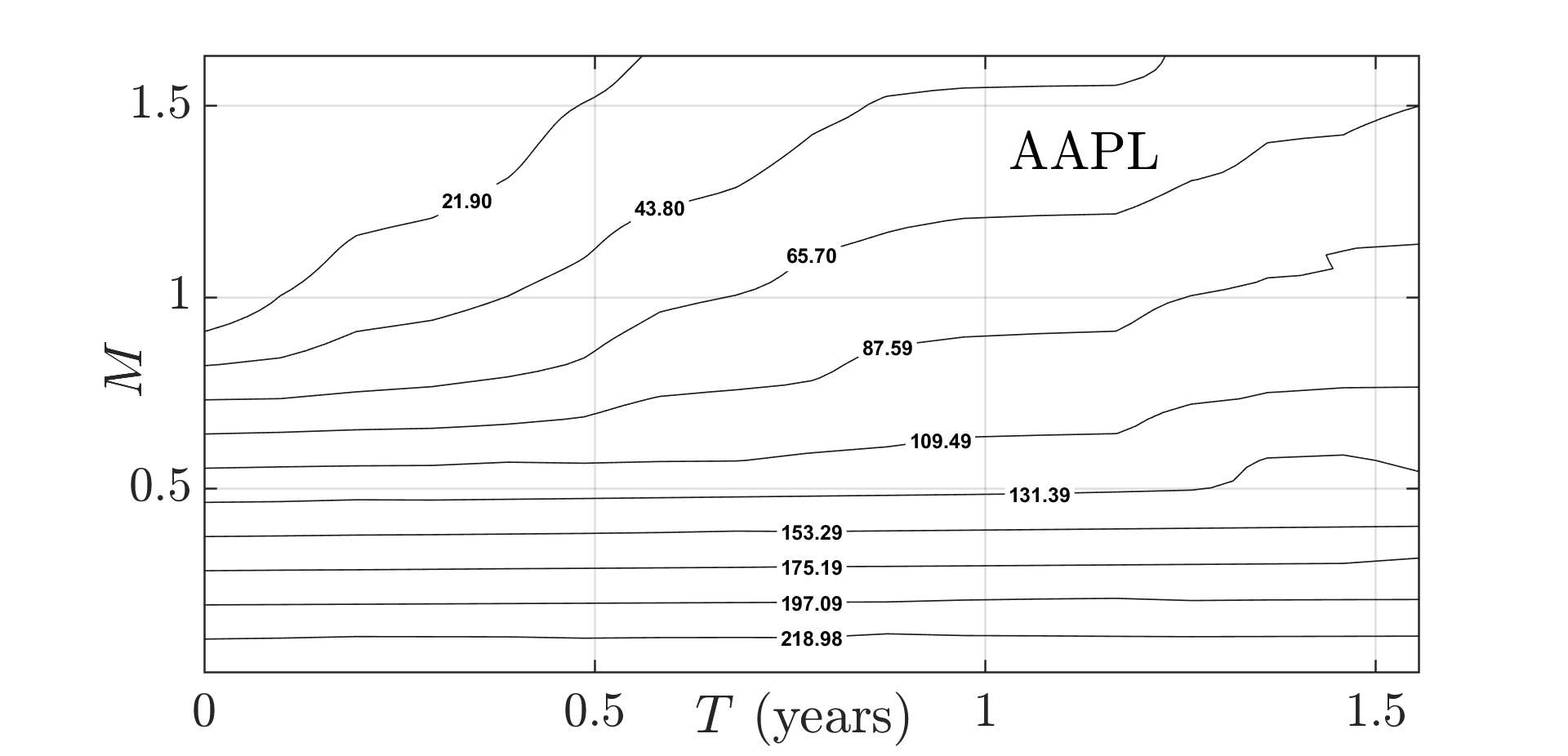}
  \includegraphics[width=0.32\linewidth, height=0.15\textheight]{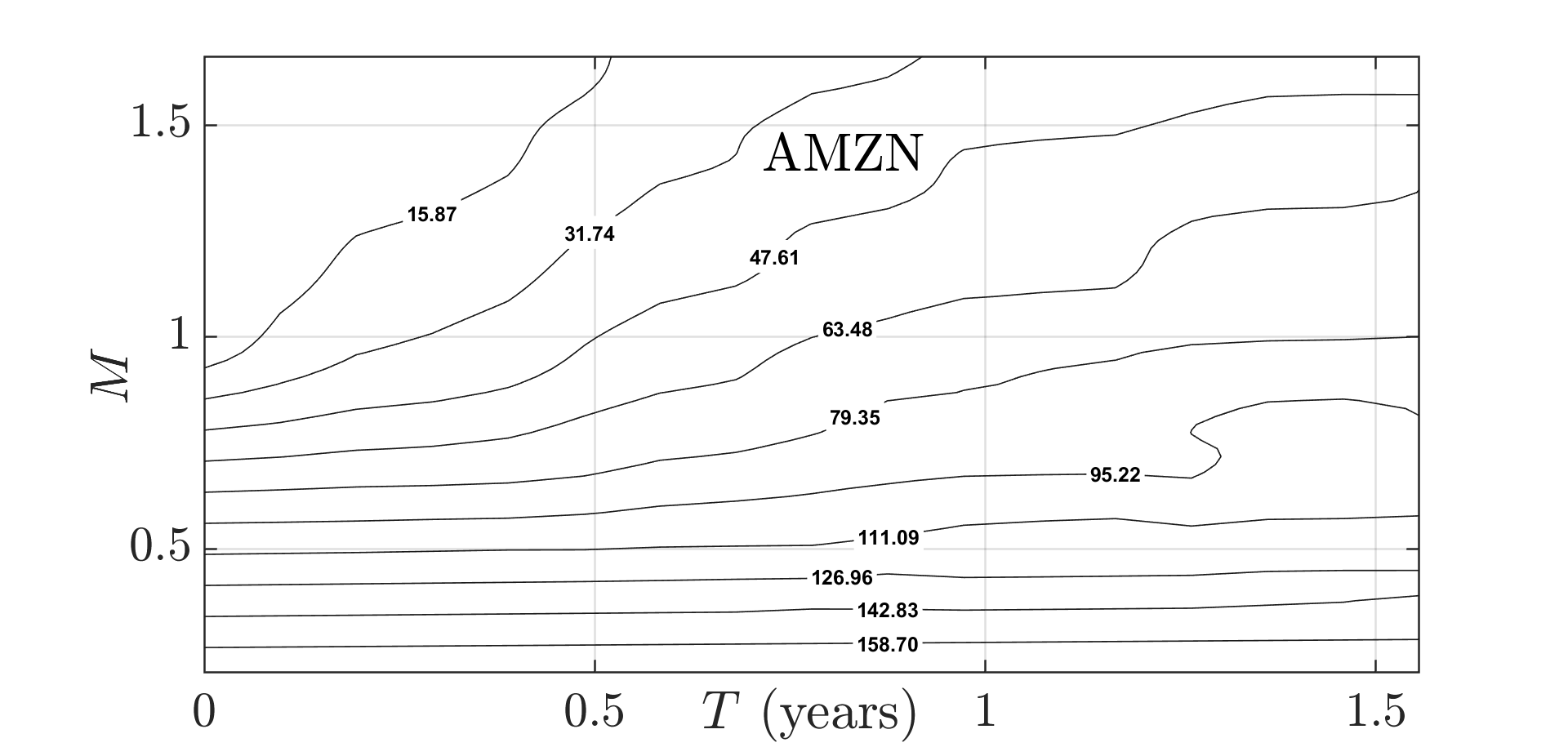}
   \includegraphics[width=0.32\linewidth, height=0.15\textheight]{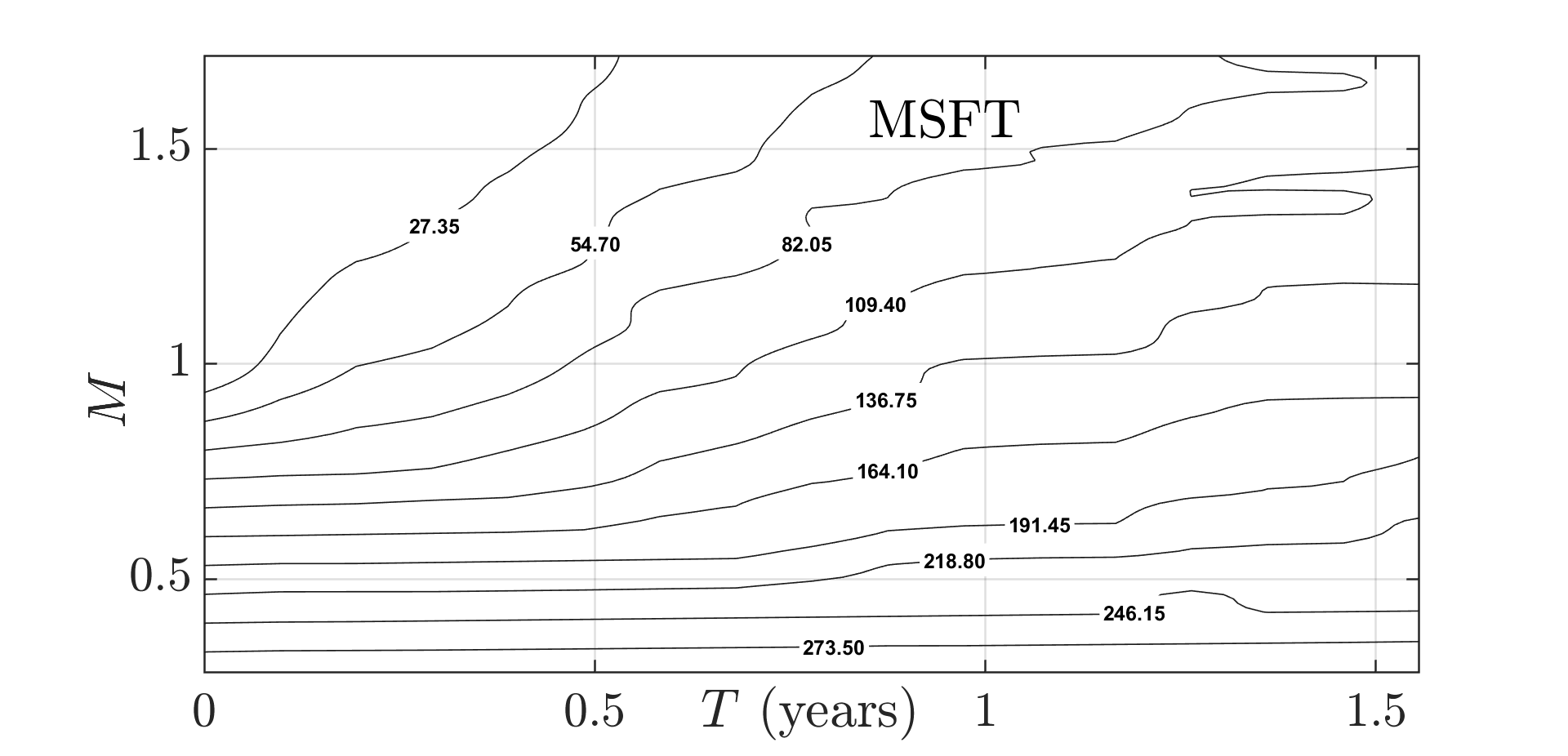}

   \caption{(top row) Empirical call option prices for the three assets.
		(middle row) Computed (theoretical) call option prices.
		(bottom row) Surface contours of the theoretical price surface projected on the ($T,M$) plane.
	}
  \label{fig:Call_surface}
\end{figure} 

Figure~\ref{fig:Call_surface} reports European call prices for AAPL, AMZN, and MSFT on 21~Feb~2025 over a grid in moneyness \(M = K/S_0\) and maturity \(T\) (years). 
The empirical point clouds in the top row show the usual no-arbitrage geometry: prices are decreasing in \(M\), reflecting declining intrinsic value as strikes rise, and increasing in \(T\), reflecting the build-up of time value and risk-neutral variance. 
These raw surfaces are granular and uneven, particularly for short maturities and far-from-the-money options, due to discrete strikes, quotation noise, and sparse observations.

The middle row displays the corresponding model-implied prices obtained from Monte Carlo simulation of the stochastic lattice in Eqs.~\eqref{eq:lattice-dynamics-recall}--\eqref{eq:q-node}, followed by Gaussian kernel smoothing under monotonicity, convexity, and bound constraints. 
The resulting surfaces are visually smooth in both arguments, yet preserve the economically required shape in \(K\) and \(M\): they remain decreasing and convex in \(K\), increasing in \(T\), and bounded between zero and \(S_0\). 
In levels and slopes, AAPL and MSFT exhibit relatively high near-the-money prices and a steep negative gradient in \(M\) around \(M \approx 1\), whereas AMZN shows a flatter profile and milder dependence on moneyness, consistent with a lower effective volatility in this sample.

The bottom row plots contours of the theoretical surfaces in the \((T,M)\) plane over a wide range of moneyness, including deep in-the-money \((M < 1)\) and out-of-the-money \((M > 1)\) strikes. 
For AAPL and MSFT, contour lines are relatively flat and widely spaced at low \(M\), reflecting slow variation in price when the option is deep in-the-money and dominated by intrinsic value, but they become much more tightly packed as \(M\) approaches and exceeds \(1\), indicating a steep drop in call value and elevated gamma in the transition from in-the-money to out-of-the-money. 
For AMZN, contour spacing is more uniform across \(M\), consistent with a smoother cross-sectional response and a less pronounced skew. 
The asymmetric, non-parallel bending of contours as \(T\) increases is characteristic of a state-dependent local volatility surface, parametrized by \(\eta_{k,\Delta t}\), rather than the flat term structure implied by constant-volatility models. 
Taken together, these contour plots confirm that the calibrated lattice with nodewise local volatility and regularization reproduces both the strong moneyness sensitivity near \(M \approx 1\) and the slower variation in deep in-the-money regions, while respecting the global no-arbitrage shape of the surface.

Figure~\ref{fig:Put-surface} reports model-implied European put prices for AAPL, AMZN, and MSFT over the same \((T,M)\) grid as in the call analysis, with \(M = K/S_0\) and \(T\) measured in years. The three upper panels show smoothed put price surfaces obtained via put--call parity applied to the calibrated call surfaces, ensuring internal consistency with the underlying stochastic lattice and respecting static no-arbitrage across strikes and maturities.
Across all assets, prices are essentially zero for short-dated, out-of-the-money contracts with \(M<1\), then rise steadily as either maturity or moneyness increases, reflecting the growing value of downside protection as strikes move further above spot and as the horizon over which adverse moves can realize becomes longer.

The lower panels display contour plots of the same surfaces in the \((T,M)\) plane and make the cross-sectional geometry more transparent.
For each fixed maturity, contour levels are widely spaced at low \(M\), indicating slow variation in the deep out-of-the-money region, and become progressively more compressed as \(M\) moves above one, signalling a sharp increase in put value as the option transitions into and further in-the-money.

MSFT exhibits the highest overall put price levels and the tightest contour spacing for \(M>1\), consistent with a stronger effective volatility and a steeper downside skew, whereas AAPL and AMZN display lower peaks and slightly more uniform contour spacing, corresponding to a more moderate premium for crash protection. Along the maturity dimension, contours bend upward but flatten somewhat at longer tenors, indicating that the marginal increase in put value with additional time decays as \(T\) grows, in line with a non-flat term structure of variance.

Overall, the combined evidence from the empirical and model-implied call and put surfaces shows that the calibrated GABM lattice, together with Gaussian smoothing and no-arbitrage constraints, produces economically coherent option price geometries across moneyness and maturity for all three underlyings.

\subsection{Calibration: state-dependent implied volatility}\label{subsubsec:implied}

The aim of this subsection is to \emph{invert} observed market quotes into a state-dependent implied-volatility field that is internally consistent with the GABM lattice. 
Holding fixed the state dependent component \(h(\cdot)\) (estimated in the previous subsection) and the risk-neutral node mechanics in Eqs.~\eqref{eq:lattice-dynamics-recall}-\eqref{eq:q-node}, we recover at each observed maturity-strike node \((T_i,K_j)\) a scalar volatility \(\sigma^{(\mathrm{imp})}(T_i,K_j)\) such that the model price matches the market quote as closely as possible. We adopt a \emph{relative} error criterion to balance liquid/cheap options and ensure numerical stability across the grid.

For each observed node \((T_i,K_j)\) we therefore solve
\begin{equation}
\sigma^{(\text{imp})}(T_i,K_j)
=\arg\min_{\sigma}\,
\left(
\frac{
C^{(\text{th})}\!\big(S_0,T_i,K_j;\,\eta_{k,\Delta t},\, r\big)
-
C^{(\text{emp})}\!\big(S_0,T_i,K_j\big)}
{C^{(\text{emp})}\!\big(S_0,T_i,K_j\big)}
\right)^{\!2},
\label{eq:opt-impvol}
\end{equation}
where \(G^{(\text{th})}\) denotes the lattice price obtained with nodewise local volatility \(\eta_{k,\Delta t}\) (which depends on the fitted \(h(\cdot)\) and current \(\sigma\)), and \(G^{(\text{emp})}\) is the observed market price. In practice, we compute \(\sigma^{(\mathrm{imp})}\) only at nodes with quotes, using a bounded one-dimensional optimizer with sensible bracketing (e.g., \(\sigma\in[0.01,\,3.00]\) in annualized units), initialized by the at-the-money historical volatility or a BSM ATM proxy. Because call prices are (strictly) increasing in volatility on the lattice, the minimization is well–posed and numerically stable for calls; puts can be handled analogously or checked via put-call parity.

The discrete set \(\{\sigma^{(\text{imp})}(T_i,K_j)\}\) is then smoothed over \((T,K)\) with a Gaussian kernel to obtain a continuous implied-volatility surface. In practice, implied values are first computed only at nodes with available quotes; the kernel smoother then interpolates across the full \((T,K)\)-grid. Bandwidth is selected to balance faithfulness to observed quotes and surface regularity, and we verify that the smoothed surface preserves standard shape properties (e.g., no static-arbitrage violations) before proceeding to diagnostics and cross-asset comparisons.

\begin{figure}[htbp]
  \centering
   \includegraphics[width=0.32\linewidth]{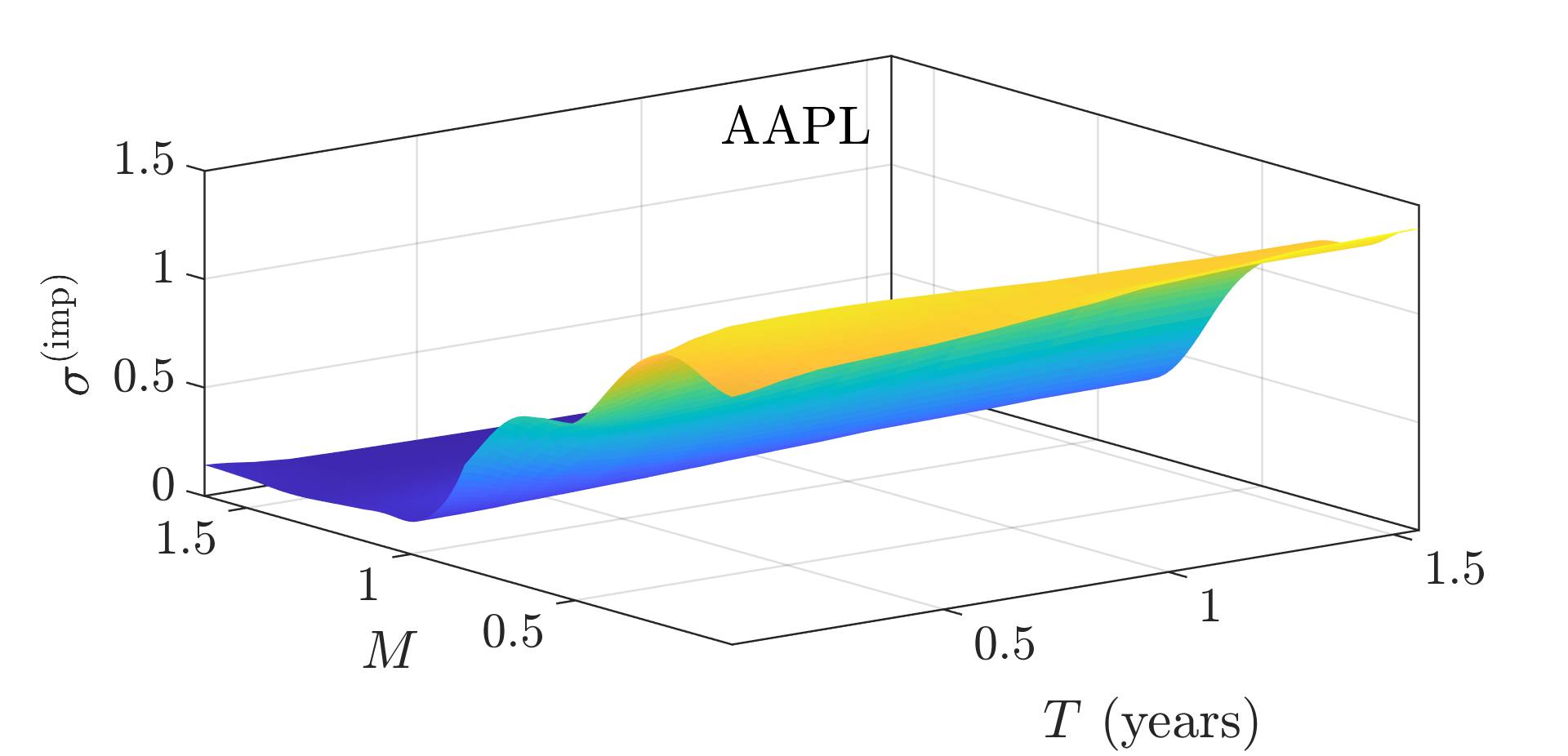}
   \includegraphics[width=0.32\linewidth]{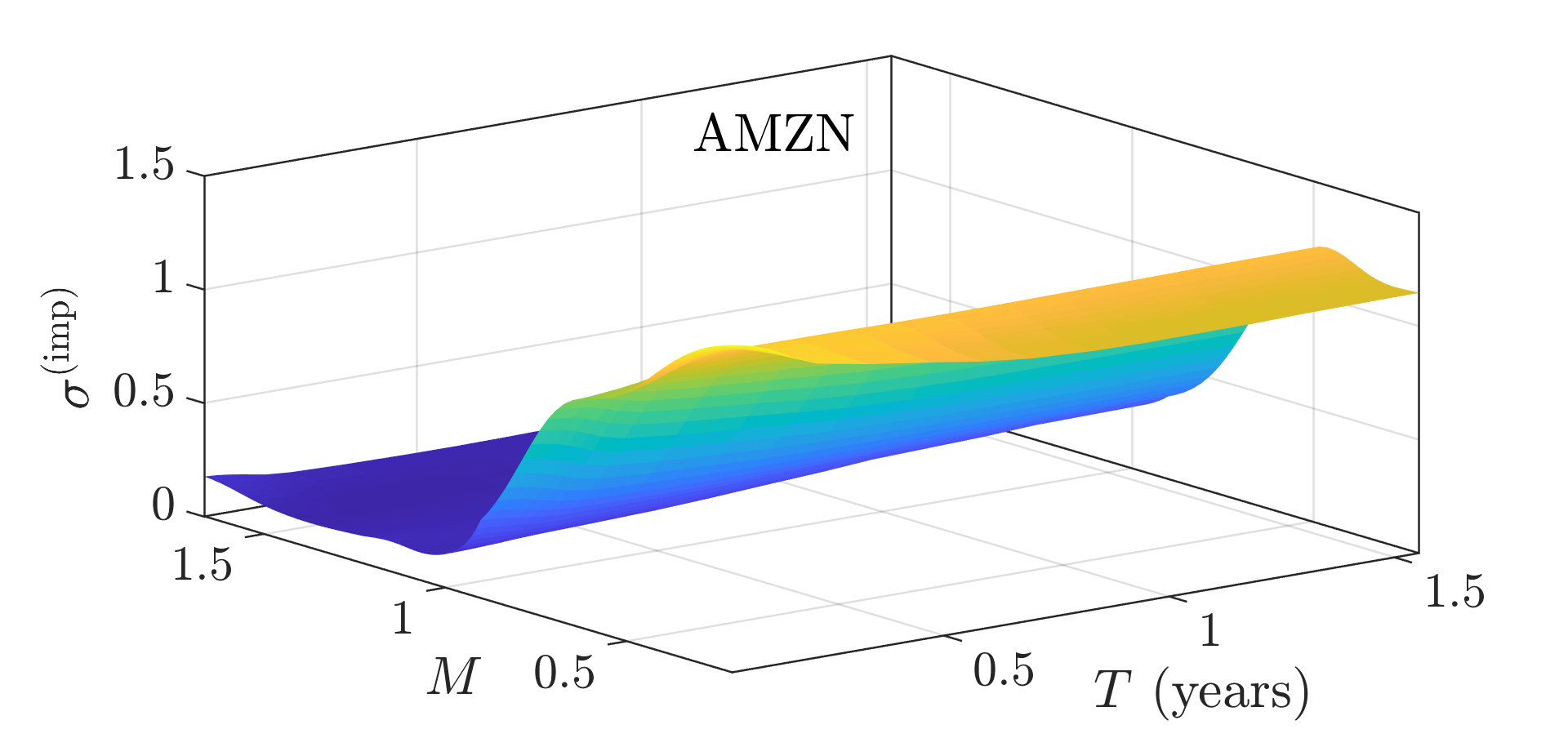}
  \includegraphics[width=0.32\linewidth]{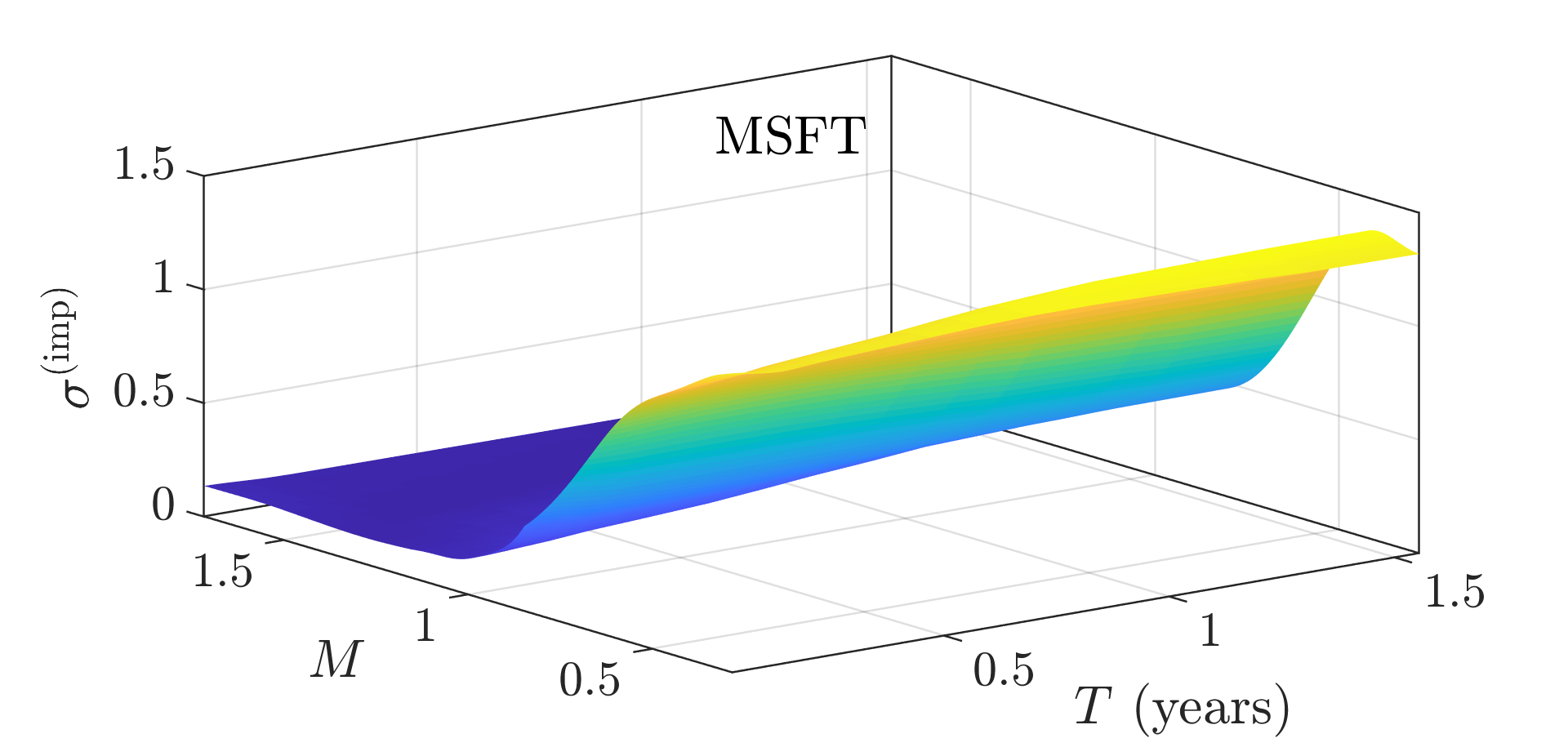}
   \includegraphics[width=0.32\linewidth, height=0.20\textheight]{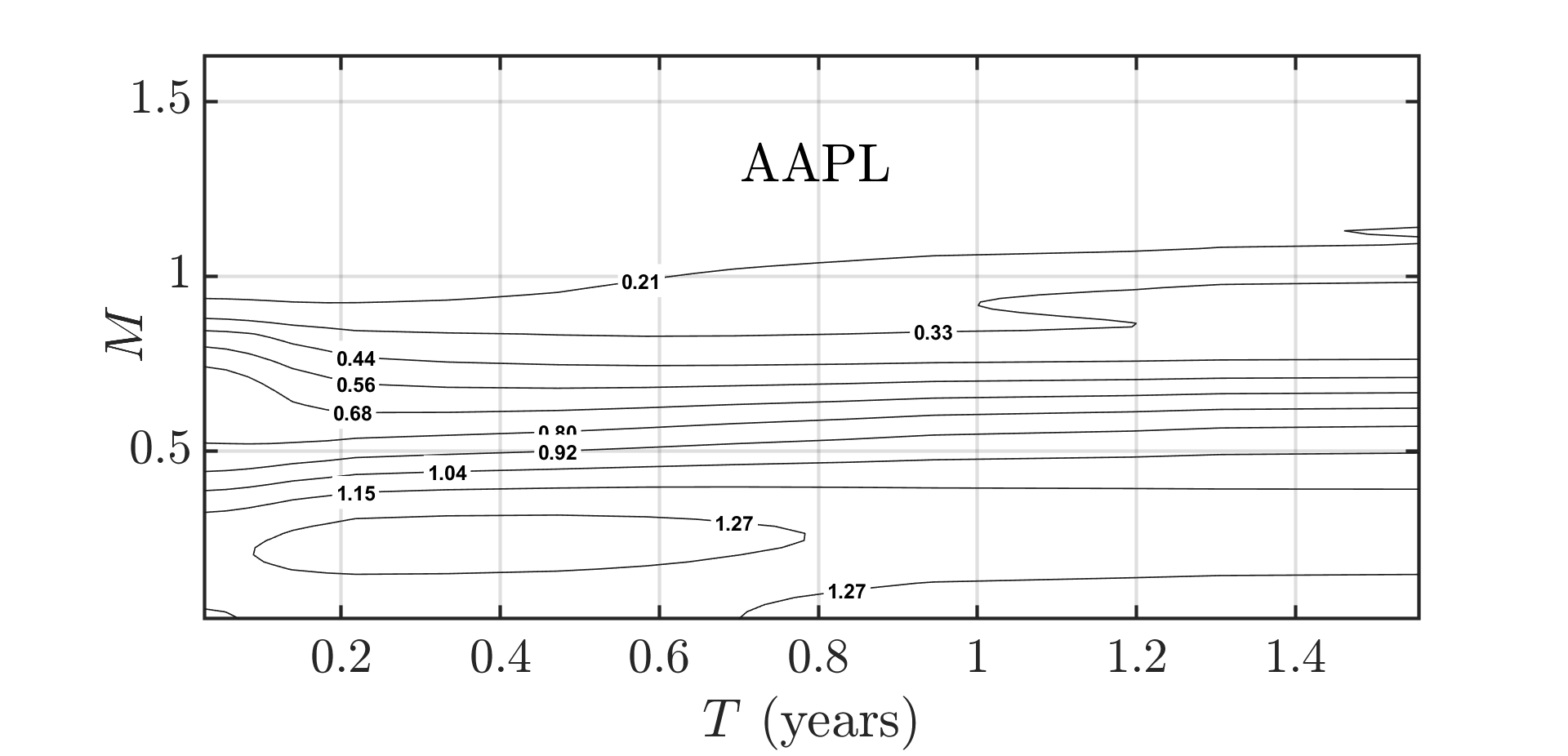}
    \includegraphics[width=0.32\linewidth, height=0.20\textheight]{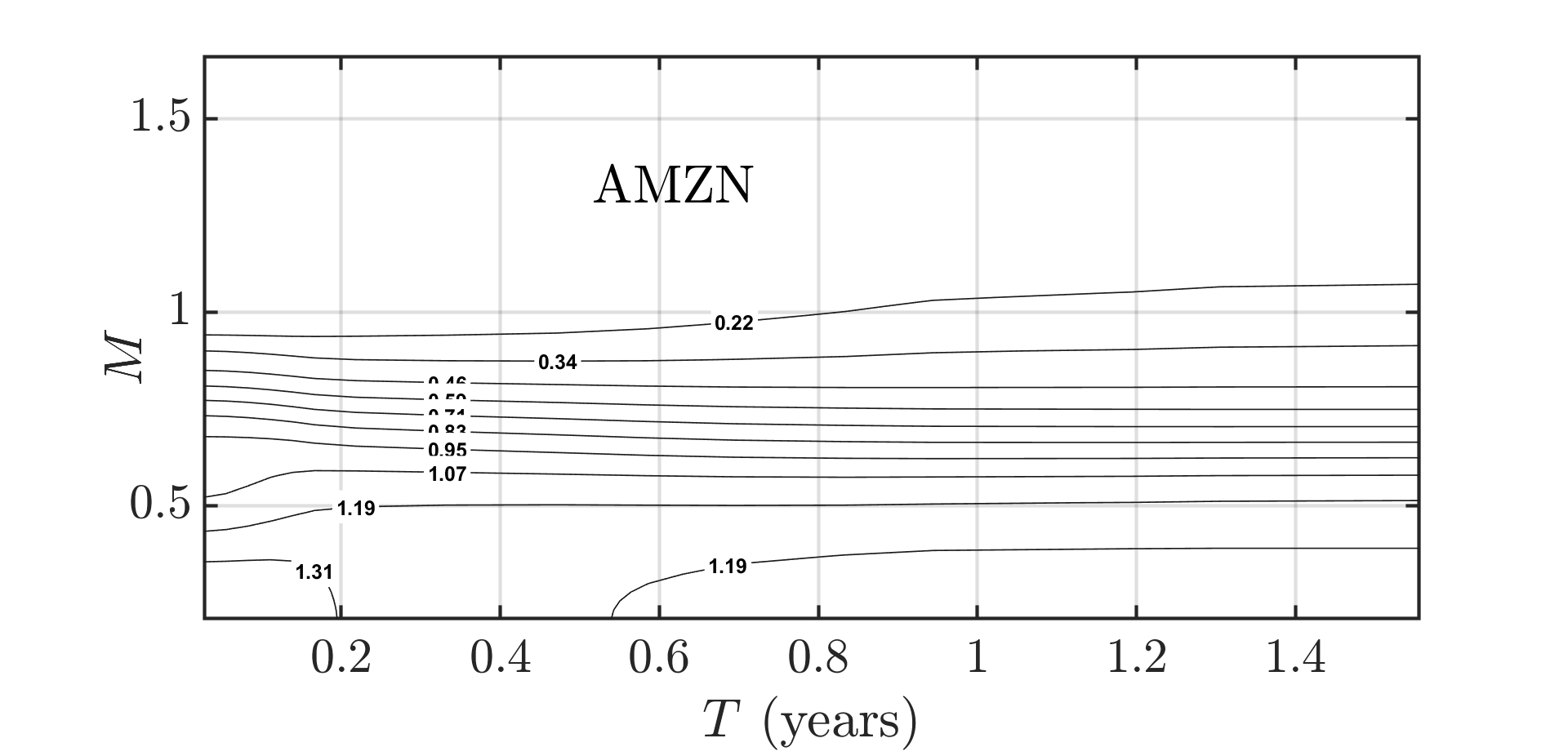}
     \includegraphics[width=0.32\linewidth, height=0.20\textheight]{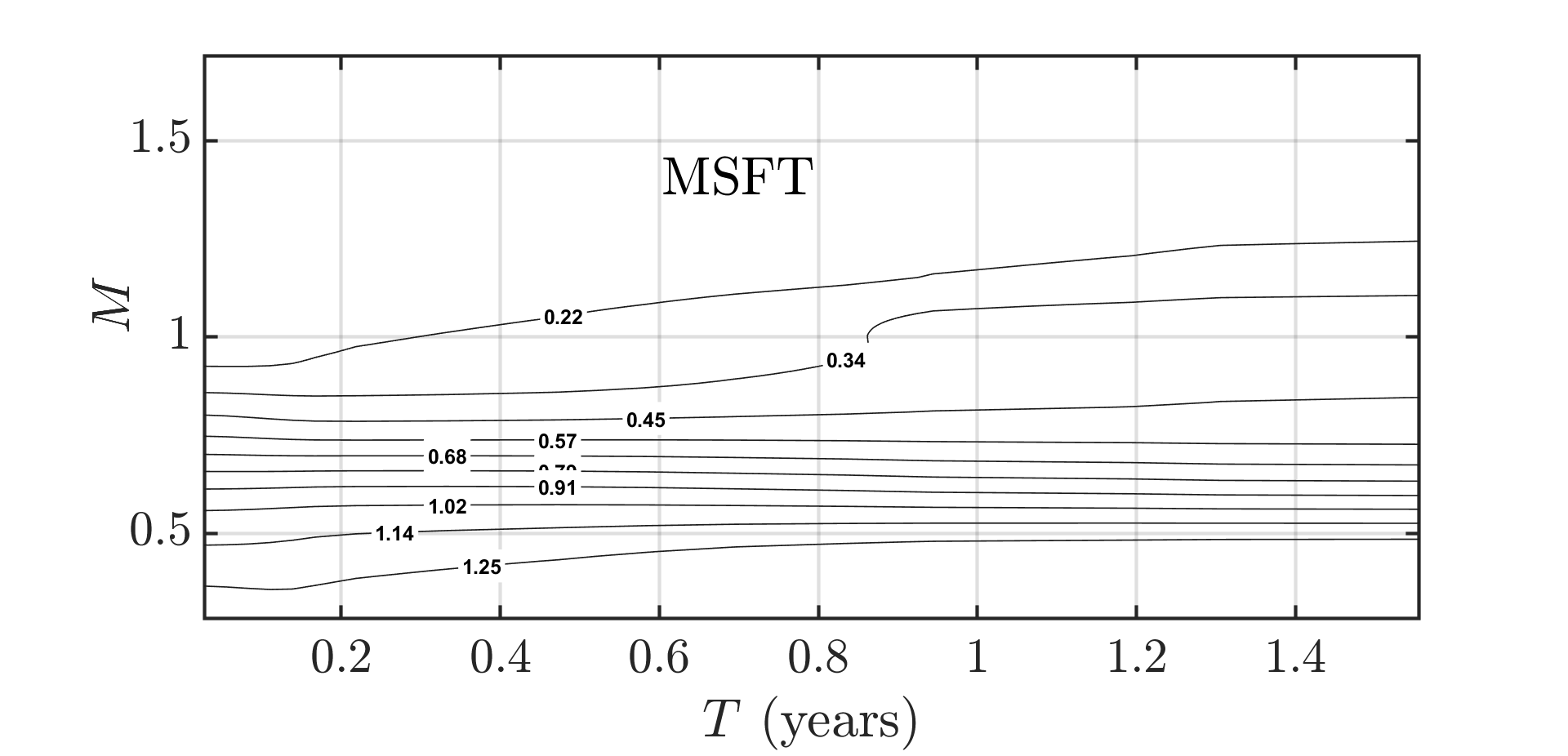}
 \caption{ (top row) State-dependent implied volatility surfaces. (bottom row) Contour plots of the state-dependent implied volatility surface on the \((T,M)\) plane.}
  \label{fig:IV_Surface}
\end{figure}

Figure~\ref{fig:IV_Surface} reports the state-dependent implied volatility surfaces \(\sigma^{\mathrm{(imp)}}(T,M)\) backed out from the calibrated GABM lattice for AAPL, AMZN, and MSFT. Across all three names, implied volatilities are markedly elevated relative to the corresponding historical benchmarks in Table~\ref{tab:params}, especially for low moneyness \(M<1\) where out-of-the-money puts reside, indicating a substantial volatility risk premium embedded in downside protection.

The surface plots reveal the familiar equity smirk: volatility is highest for short-dated, deep out-of-the-money puts and declines as moneyness increases toward and beyond \(M=1\), while exhibiting a mild upward term structure in \(T\) at low \(M\).

The contour panels highlight dense, nearly horizontal level curves in the low-\(M\) region and more widely spaced curves near and above the money, with AAPL and MSFT showing slightly steeper skew than AMZN, thereby confirming that the state-dependent volatility specification captures asset-specific tail risk while remaining consistent with observed market smiles.

\section{Conclusion} \label{sec:conclusion}
This paper develops a rigorous and tractable extension of classical option–pricing by constructing a Geometric Asymmetric Brownian Motion (GABM) and embedding it within the Bachelier--Black--Scholes--Merton paradigm. The key innovation is to introduce asymmetry through the Cherny--Shiryaev--Yor invariance principle (CSYIP), which yields pathwise asymmetric stochastic integrals driven by Brownian motion and their discrete-time analogues. This mechanism generates state-dependent local volatility and skewness while preserving the diffusion backbone and the no-arbitrage discipline of the classical theory.

On the theoretical side, we prove limit theorems that connect asymmetric random walks to their continuous-time counterparts and use these limits to define GABM dynamics. Within this framework we derive closed-form prices for European claims and identify the market price of risk that produces an equivalent martingale measure. We further construct a recombining binomial lattice that converges to the continuous-time limit, providing a numerically stable and easily implementable scheme that extends classical risk-neutral valuation to asymmetric random walks. The analysis is generalized in two directions: first, to a market with two perfectly correlated GABMs and a money-market account, and second, to a setting without a riskless asset, where pricing is obtained from replication in a two-asset market and results in a nonlinear valuation PDE. These extensions clarify how asymmetry propagates through multi-asset hedging, the structure of Greeks, and the conditions for completeness.

On the empirical side, we calibrate the model to equity options on AAPL, AMZN, and MSFT. The inferred parameters reproduce salient features of option data-persistent left skew, heavy tails, and maturity-dependent curvature--within an internally consistent diffusion model. The normal-inverse-Gaussian smoothing of the state-dependent volatility function $h(.)$ materially improves the quality and stability of implied-surface fits over symmetric benchmarks, while the binomial implementation yields fast and accurate valuations suitable for large cross sections.
Together, these results demonstrate that introducing pathwise asymmetry through CSYIP offers a parsimonious, theoretically grounded alternative to jump-diffusion or purely distributional Lévy specifications.

Several avenues for future work follow naturally from our construction. First, the invariance-based methodology can be coupled with time changes and subordinators to capture activity clustering without abandoning the Brownian core; Section~\ref{sec:2} already establishes the needed joint convergence with local time. Second, a multifactor GABM with correlated asymmetric components can address term-structure co-movements across strikes and maturities and provide a platform for stochastic interest rates. Third, path-dependent claims (barriers, Asians, lookbacks) merit a dedicated study, as the local-time channel in $h(.)$ interacts nontrivially with hitting times; our convergent lattice gives a direct numerical route. Fourth, rigorous statistical inference-estimation, identifiability, and finite-sample properties of $h(.)$--can be developed using likelihood expansions and filtering, enabling real-time risk management. Finally, it is important to quantify replication error and discretization bias under state-dependent volatility to guide hedging frequency and model-risk capital.

In summary, the paper shows that asymmetric behavior observed in returns and implied volatilities can be modeled inside a diffusion setting by altering the pathwise structure rather than replacing Brownian motion. The resulting theory preserves transparency and hedging intuition, admits efficient algorithms, and matches the empirical stylized facts. We view GABM as a foundation for a broader rational-finance in which asymmetry, state-dependence, and market microstructure are incorporated via invariance principles while retaining the coherence of continuous-time arbitrage pricing.


\begin{appendices} 
\section [\appendixname~\thesection]{Proof of Lemma~\ref{lemma:3.1}} \label{appendix:Lemma_3.1}

\begin{proof}

 From Eq.\eqref{eq:3.1} $S_t^{(h)} = S_0 \exp\left( \nu t + \sigma B_t + \gamma C_t \right) = f^{(S)}(t, B_t, C_t) $. Then for $t \geq 0, x,y \in \mathbb R, f^{S^{(h)}}(t,x,y) = S_0 \exp\left( \nu t + \sigma x + \gamma y \right)$ yields, 

\begin{equation*}
\begin{array}{rl}
\begin{aligned}
\frac{\partial f^{S^{(h)}}(t,x,y)}{\partial t} &= \nu  f^{S^{(h)}}(t,x,y), \\
\frac{\partial f^{S^{(h)}}(t,x,y)}{\partial y} &= \gamma  f^{S^{(h)}}(t,x,y),
\end{aligned}
& \hspace{1cm}
\begin{aligned}
\frac{\partial f^{S^{(h)}}(t,x,y)}{\partial x} &= \sigma  f^{S^{(h)}}(t,x,y), \\
\frac{\partial^2 f^{S^{(h)}}(t,x,y)}{\partial x^2} &= \sigma^2  f^{S^{(h)}}(t,x,y),
\end{aligned}
\\[1em] 
\begin{aligned}
\frac{\partial^2 f^{S^{(h)}}(t,x,y)}{\partial y^2} &= \gamma^2  f^{S^{(h)}}(t,x,y),
\end{aligned}
& \hspace{1cm}
\begin{aligned}
\frac{\partial^2 f^{S^{(h)}}(t,x,y)}{\partial x \partial y} &= \sigma \gamma  f^{S^{(h)}}(t,x,y).
\end{aligned}
\end{array}
\end{equation*}

Thus,
\begin{equation*}
\begin{split}
    dS_t &= df(t, B_t, C(t)) = \frac{\partial f(t, B_t, C(t))}{\partial t} dt  + \frac{\partial f(t, B_t, C(t))}{\partial x} dB(t) + \frac{\partial f(t, B_t, C(t))}{\partial y} dC(t)\\
   & + \frac{1}{2} \frac{\partial^2 f(t, B_t, C(t))}{\partial x^2} (dB(t))^2 + \frac{\partial^2 f(t, B_t, C(t))}{\partial x \partial y} dB(t) dC(t) + 
    \frac{1}{2} \frac{\partial^2 f(t, B_t, C(t))}{\partial y^2} (dC(t))^2,
    \end{split}
\end{equation*}
\begin{equation*}
\begin{split}
    \text{where} \quad dC(t) = d\left(\int_0^t h(B_s) dB_s \right) = h(B_t) dB(t), \quad (dB(t))^2 = dt, \\
    \quad dB(t) dC(t) = dB(t) h(B_t) dB(t) = h(B_t) dt, \quad (dC(t))^2 = (h(B_t) dB(t))^2 = h(B_t)^2 dt.
  \end{split}
\end{equation*}
\begin{equation*}
\begin{split}
    \text{This yields,} \quad dS_t^{(h)} &= \nu S_t^{(h)} dt + \sigma S_t^{(h)} dB(t) + \gamma S_t^{(h)} dC(t) + \frac{1}{2} \sigma^2 S_t^{(h)} (dB(t))^2 \\
    &+ \sigma \gamma S_t^{(h)} dB(t) dC(t) + \frac{1}{2} \gamma^2 S_t^{(h)} (dC(t))^2\\
    &= \nu S_t^{(h)} dt + \sigma S_t^{(h)} dB(t) + \gamma S_t^{(h)} h(B_t) dB(t) + \frac{1}{2} \sigma^2 S_t^{(h)} dt\\ 
&+\sigma \gamma S_t^{(h)} h(B_t) dt + \frac{1}{2} \gamma^2 S_t^{(h)} h(B_t)^2 dt\\
   & = \left( \nu + \frac{1}{2} \sigma^2 + \sigma \gamma h(B_t) + \frac{1}{2} \gamma^2 h(B_t)^2 \right) S_t^{(h)} dt + \left( \sigma + \gamma h(B_t) \right) S_t^{(h)} dB(t)\\
   & = S_t^{(h)}\left[\left( \nu + \frac{1}{2} \sigma^2 + \sigma \gamma h(B_t) + \frac{1}{2} \gamma^2 h(B_t)^2 \right) dt + \left( \sigma + \gamma h(B_t) \right) dB(t) \right ],\\  & \text{where}  \quad   t \geq 0, S_0 >0.
    \end{split}
\end{equation*}
\end{proof}

\section [\appendixname~\thesection]{Equivalent Martingale measure and dynamics under \texorpdfstring{$\mathbb{Q}$}{Q}} \label{appendix:EMM}

To find the unique equivalent martingale measure $\mathbb{Q}$, define the discounted stock price:
\[
Z_t = \frac{S_t^{(h)}}{\beta_t}, \quad \text{where } \beta_t = \beta_0 e^{rt}.
\]

From Eq.~\eqref{eq:3.3}, and using $d\beta_t = r \beta_t dt$, we get:
\[
dZ_t = Z_t \left[ (\nu - r + \tfrac{1}{2} \sigma^2 + \sigma \gamma h(B_t) + \tfrac{1}{2} \gamma^2 h(B_t)^2) dt + (\sigma + \gamma h(B_t)) dB_t \right].
\]

We now perform a Girsanov change of measure with Radon-Nikodym derivative:
\[
\frac{d\mathbb{Q}}{d\mathbb{P}} \bigg|_{\mathcal{F}_t} = \exp\left( -\int_0^t \theta_s \, dB_s - \frac{1}{2} \int_0^t \theta_s^2 \, ds \right),
\]
where
\[
\theta_t = \frac{ \nu - r + \tfrac{1}{2} \sigma^2 + \sigma \gamma h(B_t) + \tfrac{1}{2} \gamma^2 h(B_t)^2 }{ \sigma + \gamma h(B_t) }.
\]

Then the process
$B_t^{(\mathbb{Q})} := B_t + \int_0^t \theta_s ds$
is a $\mathbb{Q}$-Brownian motion, and under $\mathbb{Q}$,
$dS_t^{(h)} = r S_t^{(h)} dt + (\sigma + \gamma h(B_t)) S_t^{(h)} dB_t^{(\mathbb{Q})}.$

\section [\appendixname~\thesection] {Proof of  Lemma~\ref{lemma:3.2}} \label{appendix:Lemma_3.2}

\begin{proof}

 By Eq.\eqref{eq:3.5}, we have
\begin{equation*}    
      dS_t^{(h)} = \mu_t^{(h)} dt + \sigma_t^{(h)} dB_t,
\end{equation*}

where \( \mu_t^{(h)} = \left( \nu + \frac{1}{2} \sigma^2 + \sigma \gamma h(B_t) + \frac{1}{2} \gamma^2 (h(B_t))^2 \right) S_t^{(h)} \) and \( \sigma_t^{(h)} = \left( \sigma + \gamma h(B_t) \right) S_t^{(h)} \). Then, by the It\^o formula \footnote{See Section 5D of \cite{duffie_2001}.},
\begin{equation*}  \label{eq: 3.11}
    \begin{split}
       df_t & = df(S_t^{(h)}, t)\\
       & = \left[ \frac{\partial f(S_t^{(h)}, t)}{\partial t} + \frac{\partial f(S_t^{(h)}, t)}{\partial x} \mu_t^{(h)} + \frac{1}{2} \frac{\partial^2 f(S_t^{(h)}, t)}{\partial x^2} ({\sigma_t^{(h)}})^2 \right] dt + \frac{\partial f(S_t^{(h)}, t)}{\partial x} \sigma_t^{(h)} dB_t\\
        &= \left[ \frac{\partial f(S_t^{(h)}, t)}{\partial t} + \frac{\partial f(S_t^{(h)}, t)}{\partial x} \left( \nu + \frac{1}{2} \sigma^2 + \sigma \gamma h(B_t) + \frac{1}{2} \gamma^2 (h(B_t))^2 \right) S_t^{(h)} \right. \\
       & \quad  +\left.  \frac{1}{2} \frac{\partial^2 f(S_t^{(h)}, t)}{\partial x^2} \left( \sigma + \gamma h(B_t) \right)^2 ({S_t^{(h)}})^2   \right]dt
        + \frac{\partial f(S_t^{(h)}, t)}{\partial x} \left( \sigma + \gamma h(B_t) \right) S_t^{(h)} dB_t.
    \end{split}
\end{equation*}
\end{proof}

\section [\appendixname~\thesection]{Parameters on Lemma ~\ref{Lemma:4.1}} \label{Appendix:para_4.1}

In Lemma~\ref{Lemma:4.1}, we stated the dynamics of the option price \( f_t = f(\mS_{1,t}^{(h)}, \mS_{2,t}^{(h)}, \beta_t, t) \) as:

\[
df_t = \mu^{(f)}(t) \, dt + \sigma^{(f)}(t) \, dB_t + \gamma^{(f)}(t) \, dC_t.
\]

The expressions for the coefficients are provided below.

\vspace{1em}
\noindent\textbf{Drift term:}
\begin{equation*} \label{eq:appendix_mu}
\begin{split}
\mu^{(f)}(t) &= \frac{\partial f}{\partial t}
+ \frac{\partial f}{\partial x_1} \mu_1(B_t) \mS_{1,t}^{(h)} 
+ \frac{\partial f}{\partial x_2} \mu_2(B_t) \mS_{2,t}^{(h)} 
+ \frac{\partial f}{\partial y} r_t \beta_t \\
&\quad + \frac{1}{2} \frac{\partial^2 f}{\partial x_1^2} \left( \sigma_1^2 + \gamma_1^2 h^2(B_t) + 2\sigma_1 \gamma_1 h(B_t) \right) (\mS_{1,t}^{(h)})^2 \\
&\quad + \frac{1}{2} \frac{\partial^2 f}{\partial x_2^2} \left( \sigma_2^2 + \gamma_2^2 h^2(B_t) + 2\sigma_2 \gamma_2 h(B_t) \right) (\mS_{2,t}^{(h)})^2 \\
&\quad + \frac{\partial^2 f}{\partial x_1 \partial x_2} \left( \sigma_1 \sigma_2 + (\sigma_1 \gamma_2 + \gamma_1 \sigma_2) h(B_t) + \gamma_1 \gamma_2 h^2(B_t) \right) \mS_{1,t}^{(h)} \mS_{2,t}^{(h)}.
\end{split}
\end{equation*}

\vspace{1em}
\noindent\textbf{Brownian diffusion term:}
\begin{equation*} \label{eq:appendix_sigma}
\sigma^{(f)}(t) = \frac{\partial f}{\partial x_1} \sigma_1 \mS_{1,t}^{(h)} + \frac{\partial f}{\partial x_2} \sigma_2 \mS_{2,t}^{(h)}.
\end{equation*}

\vspace{1em}
\noindent\textbf{Functional volatility term:}
\begin{equation*} \label{eq:appendix_gamma}
\gamma^{(f)}(t) = \frac{\partial f}{\partial x_1} \gamma_1 \mS_{1,t}^{(h)} + \frac{\partial f}{\partial x_2} \gamma_2 \mS_{2,t}^{(h)}.
\end{equation*}

\textbf{Time-dependent riskless rate}\label{appendix:time_dependent_rt}

In the derivation of the alternative PDE \eqref{eq:4.10}, the time-dependent interest rate \( r_t \) appears as a result of eliminating the explicitly traded riskless asset \( \mB \) from the market model. The expression for \( r_t \) is given by:

\begin{equation} \label{eq: 4.26_appendix}
    r_t = r_t(x_1, x_2, t) = \frac{\mu_{1,h}(x_i, t) \sigma_{2,h}(x_i, t) - \mu_{2,h}(x_i, t) \sigma_{1,h}(x_i, t)}{\sigma_{2,h}(x_i, t) - \sigma_{1,h}(x_i, t)},
\end{equation}

where the drift and volatility functions \( \mu_{i,h} \) and \( \sigma_{i,h} \) (for \( i = 1,2 \)) are defined as follows:

\begin{equation} \label{eq: 4.27_appendix}
   \mu_{i,h}(x_i, t) = \nu_i + \frac{1}{2} \sigma_i^2 + \sigma_i \gamma_i h(b(t)) + \frac{1}{2} \gamma_i^2 \left(h(b(t))\right)^2,
\end{equation}

\begin{equation} \label{eq: 4.28_appendix}
   \sigma_{i,h}(x_i, t) = \left( \sigma_i + \gamma_i h(b(t)) \right) x_i.
\end{equation}

Here, the function \( h(\cdot) \) denotes the generalized amplitude modulation from the GABM framework, and \( b(t) \) denotes the path of the Brownian motion \( B_t \), i.e., \( B_t = b(t) \) in this deterministic representation for PDE derivation.

Note that \( r_t \) depends on the asset-specific model parameters \( \nu_i, \sigma_i, \gamma_i \), the market path via \( b(t) \), and the current asset prices \( x_1, x_2 \). It captures the effective interest rate implicitly defined by the correlation between the two risky assets in the absence of an independently traded risk-free asset.

\section [\appendixname~\thesection]{Derivation details for section \ref{sec:5}} \label{appendix:5}

We provide the explicit derivation of Lemma~\ref{Lemma:5.1} and the matching of coefficients leading to the PDE~\eqref{eq:5.5}.
\vspace{0.2cm}

\textbf{Application of It\^o's formula}

For $f_t = f(S_{1,t}^{(h)},S_{2,t}^{(h)},t)$, It\^o’s formula yields
\begin{equation} \label{eq:ito_appendix}
\begin{aligned}
df_t &= \frac{\partial f}{\partial t}\, dt 
+ \frac{\partial f}{\partial x_1}\, dS_{1,t}^{(h)} 
+ \frac{\partial f}{\partial x_2}\, dS_{2,t}^{(h)} \\
&\quad + \tfrac{1}{2} \frac{\partial^2 f}{\partial x_1^2}\, d[S_1^{(h)}]_t
+ \tfrac{1}{2} \frac{\partial^2 f}{\partial x_2^2}\, d[S_2^{(h)}]_t
+ \frac{\partial^2 f}{\partial x_1 \partial x_2}\, d[S_1^{(h)}, S_2^{(h)}]_t.
\end{aligned}
\end{equation}

Since
\[
dS_{i,t}^{(h)} = \mu_i(B_t) S_{i,t}^{(h)}\,dt + \sigma_i S_{i,t}^{(h)}\,dB_t + \gamma_i S_{i,t}^{(h)}\,dC_t,
\]
the quadratic variations are
\[
d[S_i^{(h)}]_t = \big(\sigma_i^2 + \gamma_i^2 (h(B_t))^2 + 2\sigma_i\gamma_i h(B_t)\big)(S_{i,t}^{(h)})^2 \, dt,
\]
\[
d[S_1^{(h)},S_2^{(h)}]_t = \big(\sigma_1\sigma_2 + \sigma_1\gamma_2 h(B_t) + \sigma_2\gamma_1 h(B_t) + \gamma_1\gamma_2 (h(B_t))^2\big) S_{1,t}^{(h)} S_{2,t}^{(h)}\, dt.
\]

Substituting these into \eqref{eq:ito_appendix} yields the coefficients described as in Lemma ~\ref{Lemma:4.1}.  

\vspace{0.2cm}
\textbf{Matching of the self-financing portfolio}

The self-financing condition \eqref{eq:5.3} requires

$df_t = a_{1,t}\, dS_{1,t}^{(h)} + a_{2,t}\, dS_{2,t}^{(h)}.$
Comparing coefficients of $dB_t$ and $dC_t$ with \eqref{eq:5.2} gives
$\sigma^{(f)}(t) = a_{1,t}\sigma_1 S_{1,t}^{(h)} + a_{2,t}\sigma_2 S_{2,t}^{(h)},\quad
\gamma^{(f)}(t) = a_{1,t}\gamma_1 S_{1,t}^{(h)} + a_{2,t}\gamma_2 S_{2,t}^{(h)}.$
Therefore,
\[
a_{1,t} = \frac{\partial f}{\partial x_1}, \qquad 
a_{2,t} = \frac{\partial f}{\partial x_2},
\]
which coincides with Eq.~\eqref{eq:5.4}.

Finally, substituting the drift components into \eqref{eq:ito_appendix}, under replication, produces the PDE~\eqref{eq:5.5}.

\section [\appendixname~\thesection]{Technical details on binomial approximation under ARW} \label{appendix:6}

 We provide a detailed formulation and justification of the binomial option pricing model introduced in Section~\ref{sec:6}.

\textbf {Construction of the discrete processes}

For a fixed \( T > 0 \) and \( n \in \mathbb{N} \), define the time-step size \(\Delta t = T / n\). For a piecewise continuous function (PCF) \( h \), define the scaled function
\[
h_T(x) := h\left( \frac{x}{\sqrt{T}} \right).
\]

Set the discrete-time processes on the Skorokhod space \(\mD[0,T]^{\otimes 2}\):
\[
B^{(n)}(t) := \sqrt{\Delta t} \sum_{k=1}^{\lfloor t/\Delta t \rfloor} \xi_k,
\]
\[
C_T^{(n)}(t) := \sqrt{\Delta t} \sum_{k=1}^{\lfloor t/\Delta t \rfloor} h_T\left( \sqrt{\Delta t} \sum_{i=1}^{k-1} \xi_i \right) \xi_k,
\]
with initial conditions \( B^{(n)}(0) = C_T^{(n)}(0) = 0 \), where the \(\xi_i\) are i.i.d. symmetric Bernoulli random variables with
\[
P(\xi_i=1) = P(\xi_i=-1) = \frac{1}{2}.
\]

The bivariate process \((B^{(n)}(t), C_T^{(n)}(t))\) converges weakly to \((B_t, C_{t,T})\) in \(\mD[0,T]^{\otimes 2}\), where \(B_t\) is a standard Brownian motion and
\[
C_{t,T} = \int_0^t h_T(B_v) dB_v,
\]
giving the continuous-time limit of the discrete random walks coupled through \( h_T(\cdot) \).

\vspace{0.2cm}
\textbf {Discrete price process and path-dependent volatility}

Define the discrete price process:
\[
S_t^{(n,h)} = S_0 \exp\left( \nu k \Delta t + \sigma B^{(n)}(t) + \gamma C_T^{(n)}(t) \right),
\quad t \in [k \Delta t, (k+1) \Delta t),
\]
which can be expressed as
\[
S_t^{(n,h)} = S_0 \exp \Bigg( \nu k \Delta t + \sigma \sqrt{\Delta t} \sum_{j=1}^k \xi_j + \gamma \sqrt{\Delta t} \sum_{j=1}^k h_T\left( \sqrt{\Delta t} \sum_{i=1}^{j-1} \xi_i \right) \xi_j \Bigg).
\]

The path-dependent volatility is defined by
\begin{equation} \label{eq:path_vol_appendix}
\eta_{k,\Delta t} := \sigma + \gamma h_T\left( \sqrt{\Delta t} \sum_{i=1}^k \xi_i \right).
\end{equation}

\section [\appendixname~\thesection]{European put price surfaces and contour maps} \label{PutSurfaces:ALL}

\begin{figure}[htbp]
  \centering
  \includegraphics[width=0.32\linewidth]{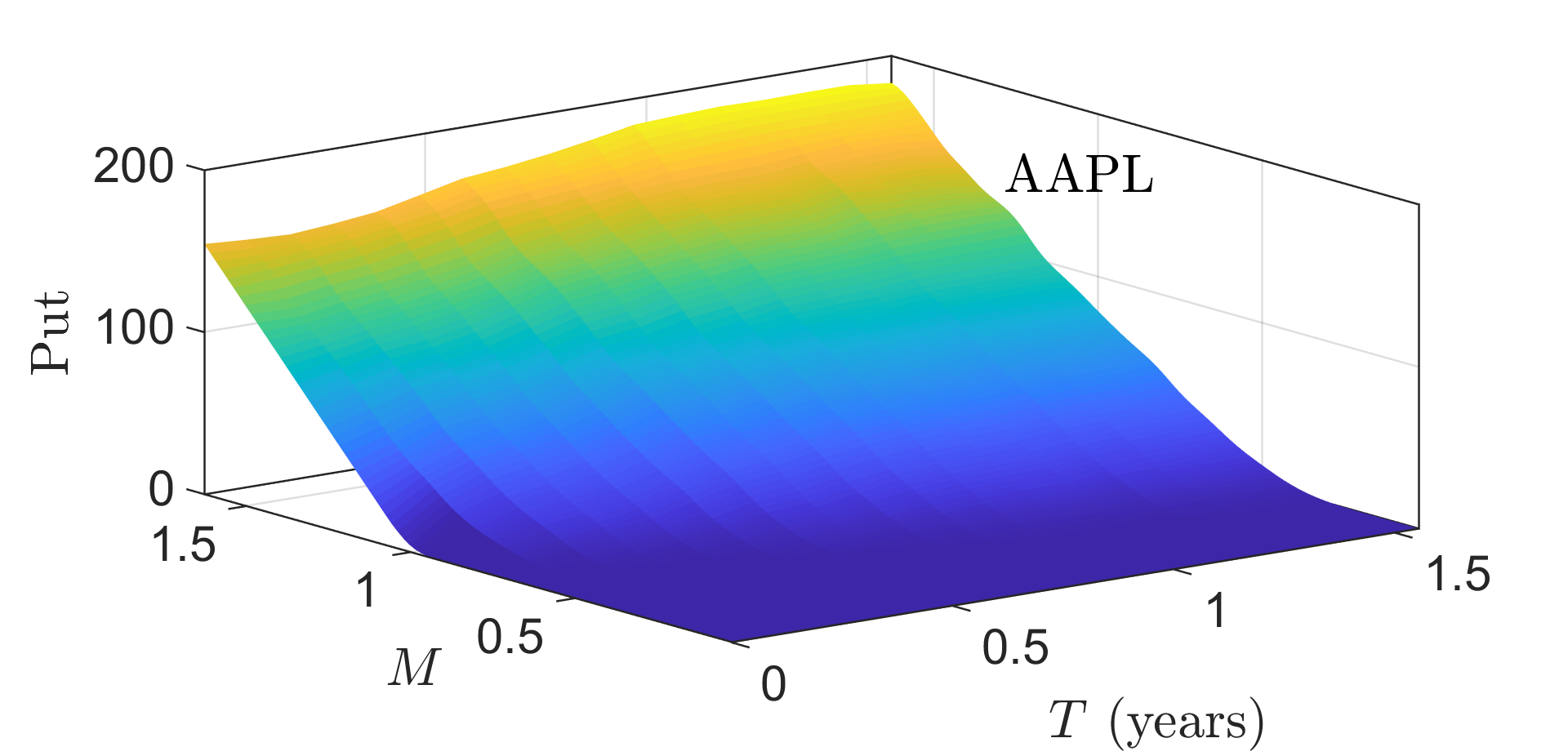}
  \includegraphics[width=0.32\linewidth]{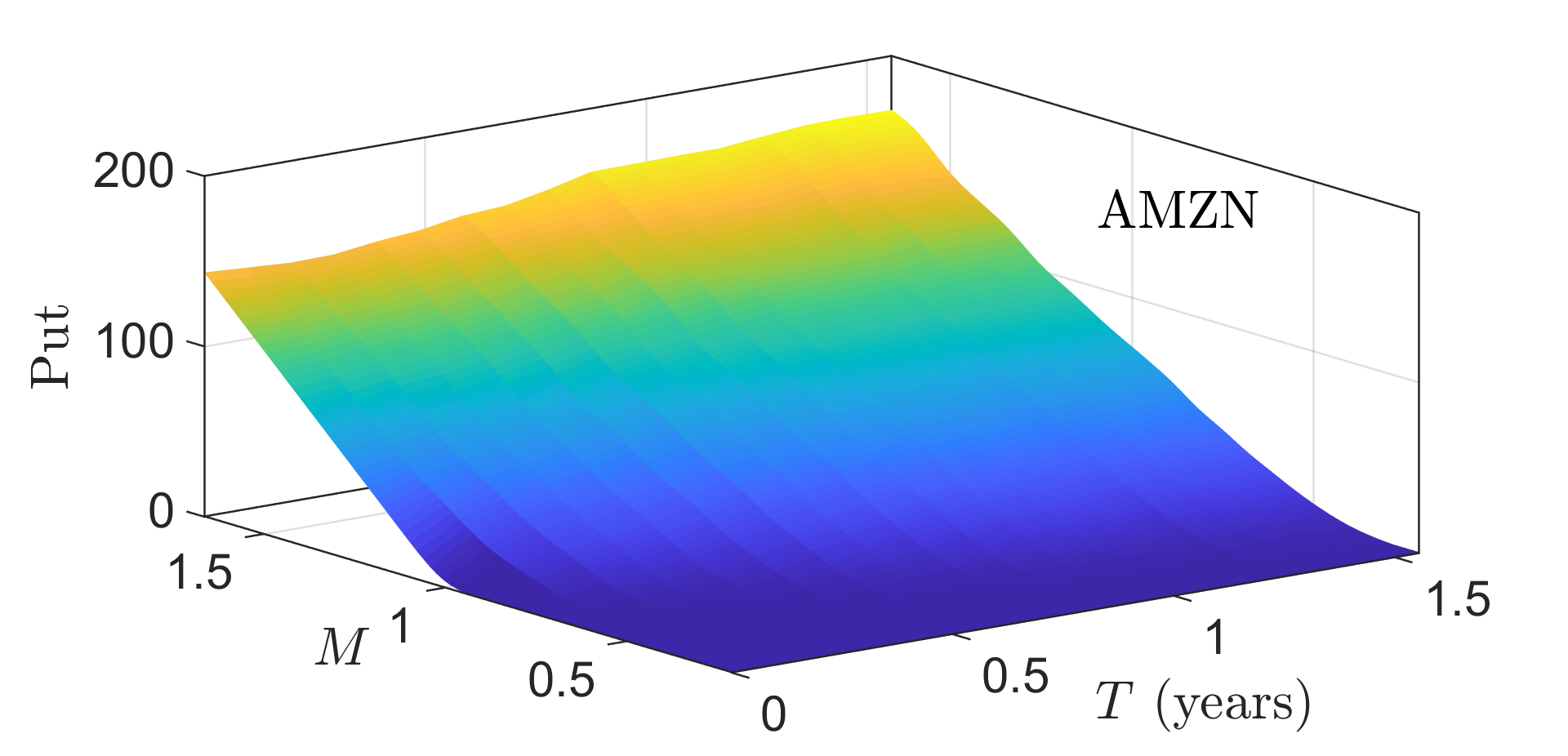}
   \includegraphics[width=0.32\linewidth]
  {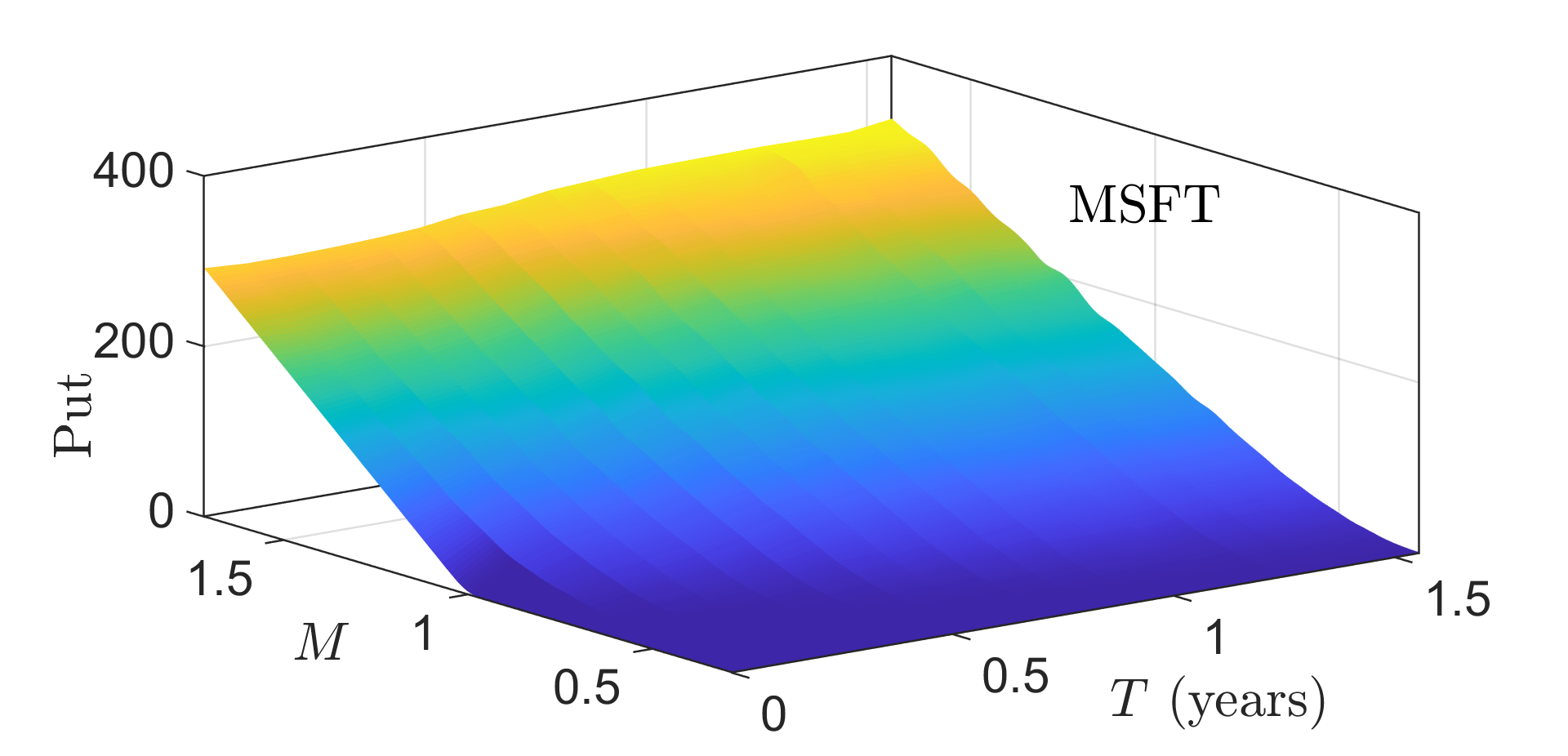}
   \includegraphics[width=0.32\linewidth, height=0.20\textheight]
  {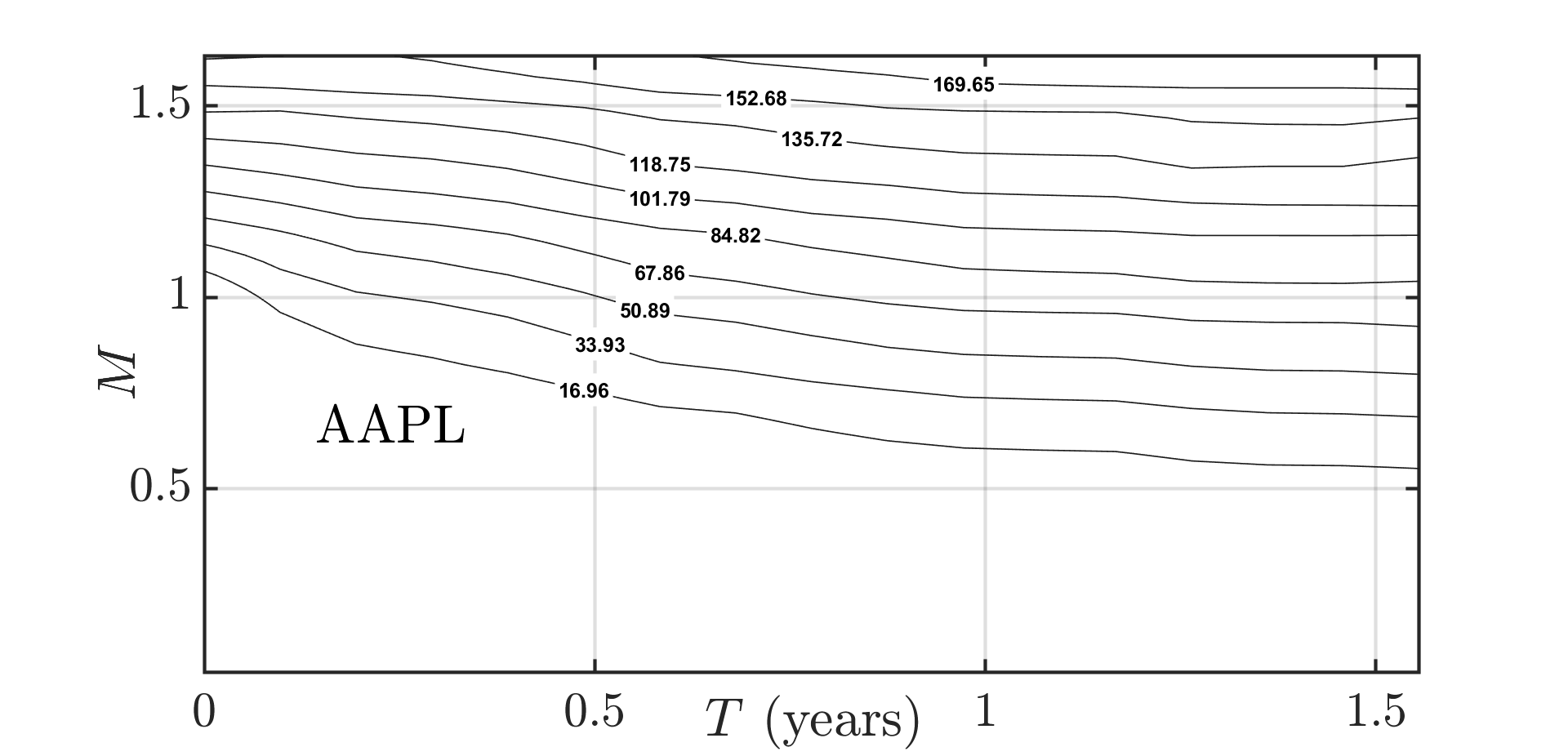}
   \includegraphics[width=0.32\linewidth,height=0.20\textheight]
  {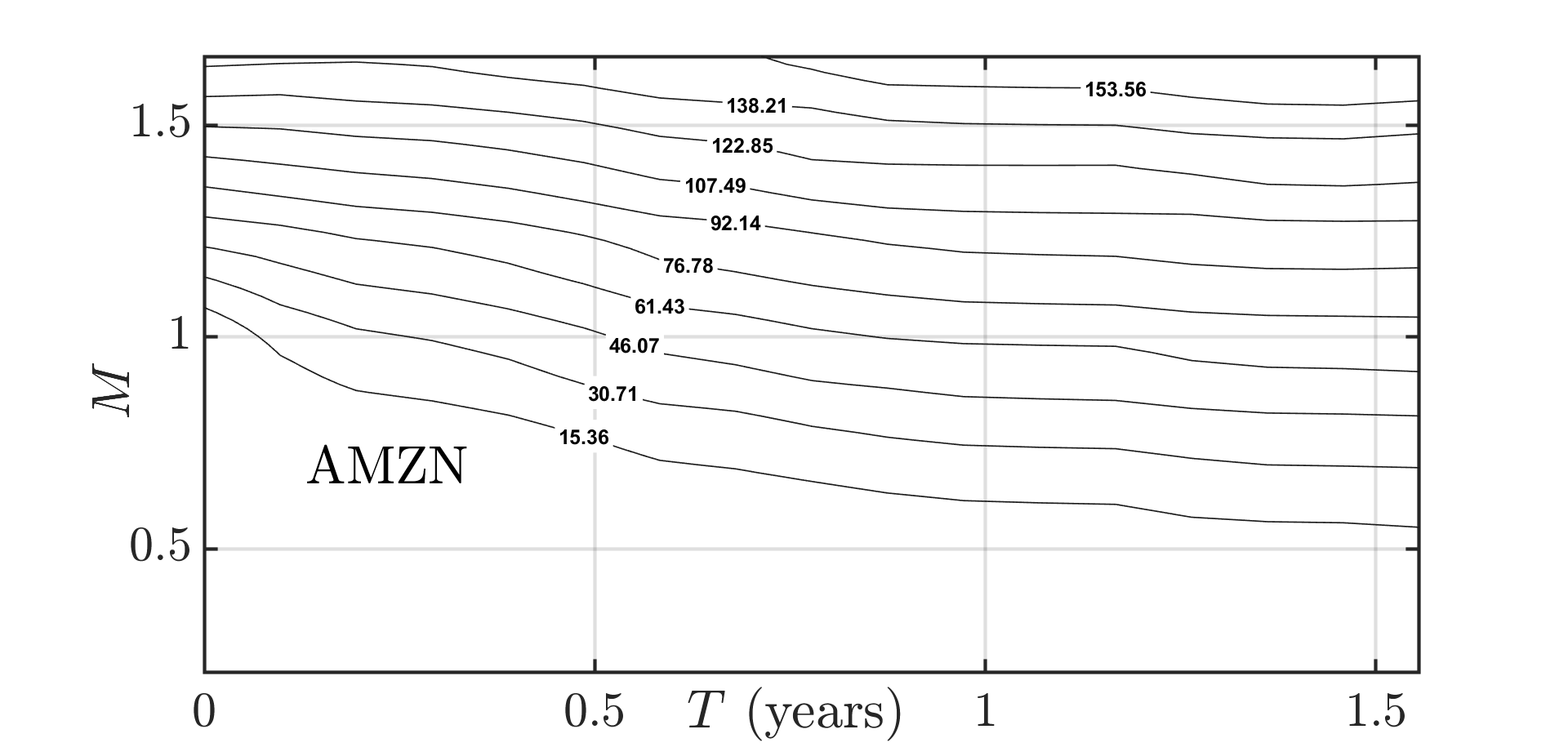}
   \includegraphics[width=0.32\linewidth,height=0.20\textheight]
  {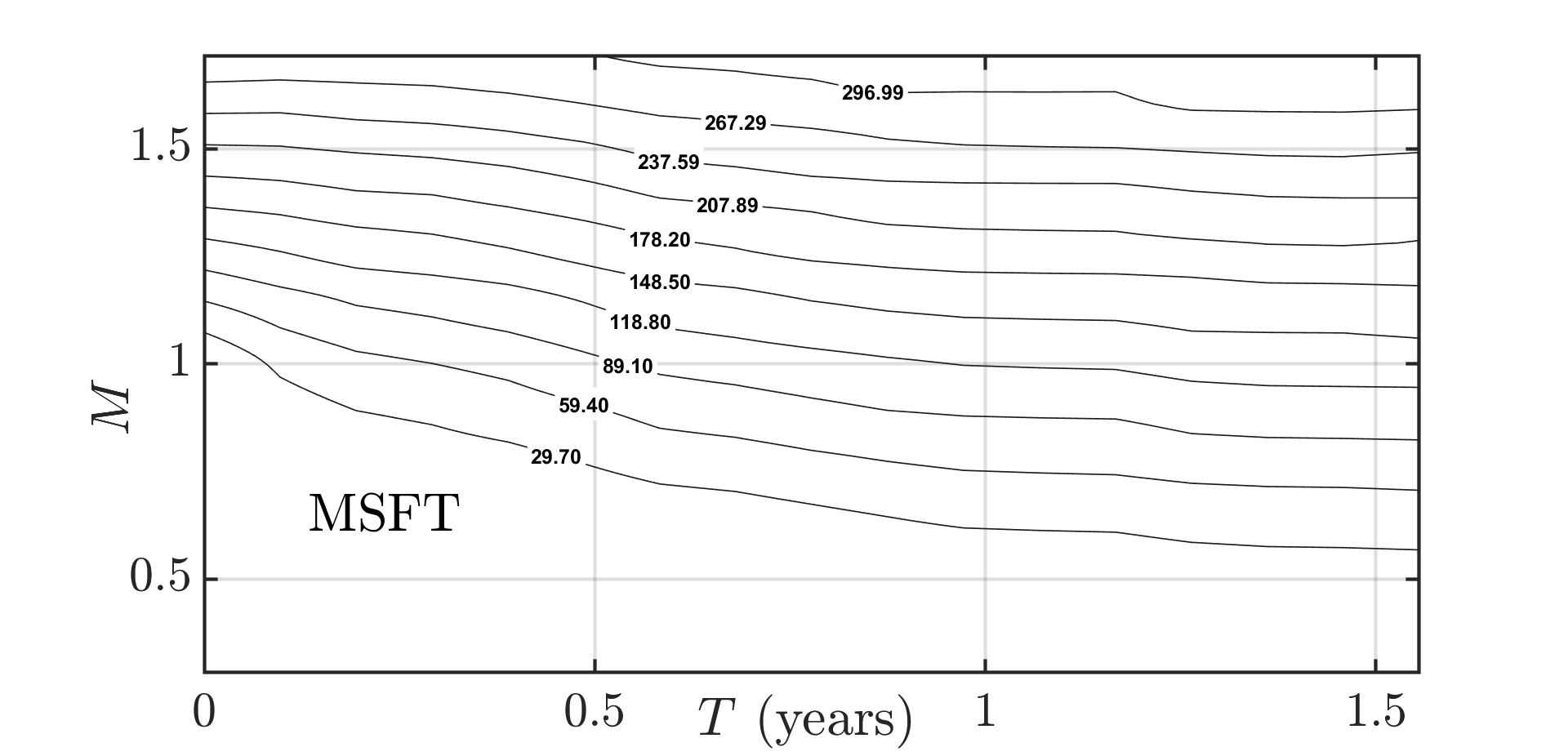}
 \caption{ (top row) Computed put option prices. (bottom row) Surface contours of the computed put price surface projected on the $(T, M)$ plane. }
\label{fig:Put-surface}
\end{figure}

\end{appendices}
\clearpage
\bibliographystyle{agsm}
\bibliography{Asymmetric_R_Walk}

@inproceedings{Bachelier_1900,
  title={Th{\'e}orie de la sp{\'e}culation},
  author={Bachelier, Louis},
  booktitle={Annales scientifiques de l'{\'E}cole normale sup{\'e}rieure},
  volume={17},
  pages={21--86},
  year={1900}
}

@article{Bakshi_2003,
  title={Stock return characteristics, skew laws, and the differential pricing of individual equity options},
  author={Bakshi, Gurdip and Kapadia, Nikunj and Madan, Dilip},
  journal={The Review of Financial Studies},
  volume={16},
  number={1},
  pages={101--143},
  year={2003},
  publisher={Oxford University Press}
}

@article{Barberis_1998,
  title={A model of investor sentiment},
  author={Barberis, Nicholas and Shleifer, Andrei and Vishny, Robert},
  journal={Journal of financial economics},
  volume={49},
  number={3},
  pages={307--343},
  year={1998},
  publisher={Elsevier}
}

@article{Black_1973,
  title={The pricing of options and corporate liabilities},
  author={Black, Fischer and Scholes, Myron},
  journal={Journal of political economy},
  volume={81},
  number={3},
  pages={637--654},
  year={1973},
  publisher={The University of Chicago Press}
}

@article{Black_1986,
  title={Noise},
  author={Black, Fisher},
  journal={The Journal of Finance},
  volume={XLI},
  number={3},
  pages={529--543},
  year={1986},
  publisher={Elsevier}
}

@article{Carr_1999,
  title={Option valuation using the fast Fourier transform},
  author={Carr, Peter and Madan, Dilip},
  journal={Journal of computational finance},
  volume={2},
  number={4},
  pages={61--73},
  year={1999}
}

@article{Cherny_2003,
  title={Limit behavior of the" horizontal-vertical" random walk and some extensions of the Donsker-Prokhorov invariance principle},
  author={Cherny, Aleksander S and Shiryaev, Albert N and Yor, Marc},
  journal={Theory of Probability \& Its Applications},
  volume={47},
  number={3},
  pages={377--394},
  year={2003},
  publisher={SIAM}
}

@article{Christoffersen_2013,
  title={Capturing option anomalies with a variance-dependent pricing kernel},
  author={Christoffersen, Peter and Heston, Steven and Jacobs, Kris},
  journal={The Review of Financial Studies},
  volume={26},
  number={8},
  pages={1963--2006},
  year={2013},
  publisher={Oxford University Press}
}

@book{Chung_1990,
  title={Introduction to stochastic integration},
  author={Chung, Kai Lai and Williams, Ruth J},
  volume={2},
  year={1990},
  publisher={Springer},
  address={New York}
}

@article{Corns_2007,
  title={Skew {B}rownian motion and pricing European options},
  author={Corns, T Richard A and Satchell, Stephen E},
  journal={The European Journal of Finance},
  volume={13},
  number={6},
  pages={523--544},
  year={2007},
  publisher={Taylor \& Francis}
}

@article{Cont_2001,
  title={Empirical properties of asset returns: stylized facts and statistical issues},
  author={Cont, Rama},
  journal={Quantitative finance},
  volume={1},
  number={2},
  pages={223},
  year={2001},
  publisher={IOP Publishing}
}

@book{Duffie_2001,
  title={Dynamic asset pricing theory},
  author={Duffie, Darrell},
  year={2001},
  publisher={Princeton University Press},
  address={Princeton, New Jersey}
}

@article{Eberlein_1995,
  title={Hyperbolic distributions in finance},
  author={Eberlein, Ernst and Keller, Ulrich},
  journal={Bernoulli},
  pages={281--299},
  year={1995},
  publisher={JSTOR}
}

@article{Gnawali_2025,
  title={Hedging via Perpetual Derivatives: Trinomial Option Pricing and Implied Parameter Surface Analysis},
  author={Gnawali, Jagdish and Lindquist, W Brent and Rachev, Svetlozar T},
  journal={Journal of Risk and Financial Management},
  volume={18},
  number={4},
  pages={192},
  year={2025},
  publisher={MDPI}
}

@article{Heston_1993,
  title={A closed-form solution for options with stochastic volatility with applications to bond and currency options},
  author={Heston, Steven L},
  journal={The review of financial studies},
  volume={6},
  number={2},
  pages={327--343},
  year={1993},
  publisher={Oxford University Press}
}

@article{Hu_2020a,
  title={Option pricing in markets with informed traders},
  author={Hu, Yuan and Shirvani, Abootaleb and Stoyanov, Stoyan and Kim, Young Shin and Fabozzi, Frank J and Rachev, Svetlozar T},
  journal={International Journal of Theoretical and Applied Finance},
  volume={23},
  number={06},
  pages={2050037},
  year={2020},
  publisher={World Scientific}
}

@article{Hu_2020b,
  title={Option pricing incorporating factor dynamics in complete markets},
  author={Hu, Yuan and Shirvani, Abootaleb and Lindquist, W Brent and Fabozzi, Frank J and Rachev, Svetlozar T},
  journal={Journal of Risk and Financial Management},
  volume={13},
  number={12},
  pages={321},
  year={2020},
  publisher={MDPI}
}

@article{Hu_2022,
  title={Market complete option valuation using a {J}arrow-{R}udd pricing tree with skewness and kurtosis},
  author={Hu, Yuan and Lindquist, W Brent and Rachev, Svetlozar T and Shirvani, Abootaleb and Fabozzi, Frank J},
  journal={Journal of Economic Dynamics and Control},
  volume={137},
  pages={104345},
  year={2022},
  publisher={Elsevier}
}

@article{Hu_2024,
  title={Option Pricing Using a Skew Random Walk Binary Tree},
  author={Hu, Yuan and Lindquist, W Brent and Rachev, Svetlozar T and Fabozzi, Frank J},
  journal={Journal of Risk and Financial Management},
  volume={17},
  number={4},
  pages={138},
  year={2024},
  publisher={MDPI}
}

@article{Jackwerth_2000,
  title={Recovering risk aversion from option prices and realized returns},
  author={Jackwerth, Jens Carsten},
  journal={The Review of Financial Studies},
  volume={13},
  number={2},
  pages={433--451},
  year={2000},
  publisher={Oxford University Press}
}

@article{Merton_1973,
  title={An intertemporal capital asset pricing model},
  author={Merton, Robert C},
  journal={Econometrica: Journal of the Econometric Society},
  pages={867--887},
  year={1973},
  publisher={JSTOR}
}

@phdthesis{Prause_1999,
  title     = {The Generalized Hyperbolic Model: Estimation, Financial Derivatives, and Risk Measures},
  author    = {Prause, Karsten},
  year      = {1999},
  school    = {Albert-Ludwigs-Universität Freiburg}
}

@article{Rachev_2017,
  title={Financial markets with no riskless (safe) asset},
  author={Rachev, Svetlozar T and Stoyanov, Stoyan V and Fabozzi, Frank J},
  journal={International Journal of Theoretical and Applied Finance},
  volume={20},
  number={08},
  pages={1-24},
  year={2017},
  publisher={World Scientific}
}

@article{Rosinski_2007,
  title={Tempering stable processes},
  author={Rosi{\'n}ski, Jan},
  journal={Stochastic processes and their applications},
  volume={117},
  number={6},
  pages={677--707},
  year={2007},
  publisher={Elsevier}
}

@book{Shiller_1981,
  title={Do stock prices move too much to be justified by subsequent changes in dividends?},
  author={Shiller, Robert J and others},
  year={1981},
  publisher={National Bureau of Economic Research Cambridge, MA}
}
\end{document}